\newtheorem{theorem}{Theorem}
\newtheorem{lemma}{Lemma}
\newtheorem{definition}{Definition}
\newtheorem{assumption}{Assumption}
\def\E{\mathbb{E}}
\def\mb{\mathbb}
\def\defeq{\triangleq}
\def\mc{\mathcal}
\def\col{\mathrm{col}}
\def\var{\mathrm{var}}
\def\diag{\mathrm{diag}}
\title{ Distributed Pareto Optimization via Diffusion Strategies}
\begin{document}

\author{Jianshu~Chen,~\IEEEmembership{Student~Member,~IEEE,}%
        ~and~Ali~H.~Sayed,~\IEEEmembership{Fellow,~IEEE}
\thanks{The authors are with Department of Electrical Engineering,
University of California, Los Angeles, CA 90095. Email: \{jshchen, sayed\}@ee.ucla.edu.
}
\thanks{This work was supported in part by NSF grants CCF-1011918 and CCF-0942936.
A preliminary short version of this work is reported in the conference publication%
\cite{chen2012ssp}.}
}


\maketitle

\begin{abstract}
We consider solving multi-objective optimization problems in a distributed manner
by a network of cooperating and learning agents. The problem is equivalent to optimizing a global cost
that is the sum of individual components. The optimizers of the individual components do not necessarily coincide and the network therefore needs to seek Pareto optimal solutions. We develop a distributed solution that relies on a general class of adaptive diffusion strategies. We show how the diffusion process can be represented as the cascade composition of three operators: two combination operators and a gradient descent operator. Using the Banach fixed-point theorem, we establish the existence of a unique fixed point for the composite cascade. We then study how close each agent converges towards this fixed point, and also examine how close the Pareto solution is to the fixed point. We perform a detailed mean-square error analysis and establish that all agents are able to converge to the same Pareto optimal solution within a sufficiently small mean-square-error (MSE) bound even for constant step-sizes.
We illustrate one application of the theory to collaborative decision making in finance by a network of agents. 
\end{abstract}
\begin{keywords}
Distributed optimization, network optimization, diffusion adaptation, Pareto optimality,
mean-square performance, convergence, stability, fixed point, collaborative decision making.
\end{keywords}
\section{Introduction}
\label{Sec:Intro}
We consider solving a multi-objective optimization
problem in a distributed manner over a network of $N$ cooperative learners (see Fig. \ref{Fig:Fig_Network}).
Each agent $k$ is associated with an individual cost function $J_k^o(w)$; and each of these
costs may not be minimized at the
same vector $w^o$. As such, we need to seek a solution that is ``optimal'' in some sense for the entire
network.
In these cases, a general concept of
optimality known as \emph{Pareto optimality} is useful to characterize
how good a solution is. A solution $w^o$ is said to be Pareto optimal if there
does not exist another vector $w$ that is able to improve (i.e., reduce) any particular
cost, say, $J_k^o(w)$, without
degrading (increasing) some of the other costs $\{J_l^o(w)\}_{l \neq k}$.
To illustrate the idea of Pareto optimality, let
	\begin{align}
		\label{Equ:ProblemFormulation:O_FeasibleObjectiveValues}
		\mc{O}	&\defeq	\{ (J_1^o(w),\ldots,J_N^o(w)): \; w \in \mathbb{W} \}  \subseteq \mathbb{R}^N
	\end{align}
denote the set of achievable cost values, where $\mathbb{W}$ denotes the feasible set.
Each point $P \in \mc{O}$ represents attained values for the cost functions $\{J_l^o(w)\}$
at a certain $w \in \mathbb{W}$.
Let us consider the two-node case ($N=2$) shown in Fig. \ref{Fig:Fig_ParetoOptimality},
where the shaded areas represent the set $\mc{O}$ for two situations
of interest.
	\begin{figure}[h]
        \centering
        \includegraphics[width=0.45\textwidth]{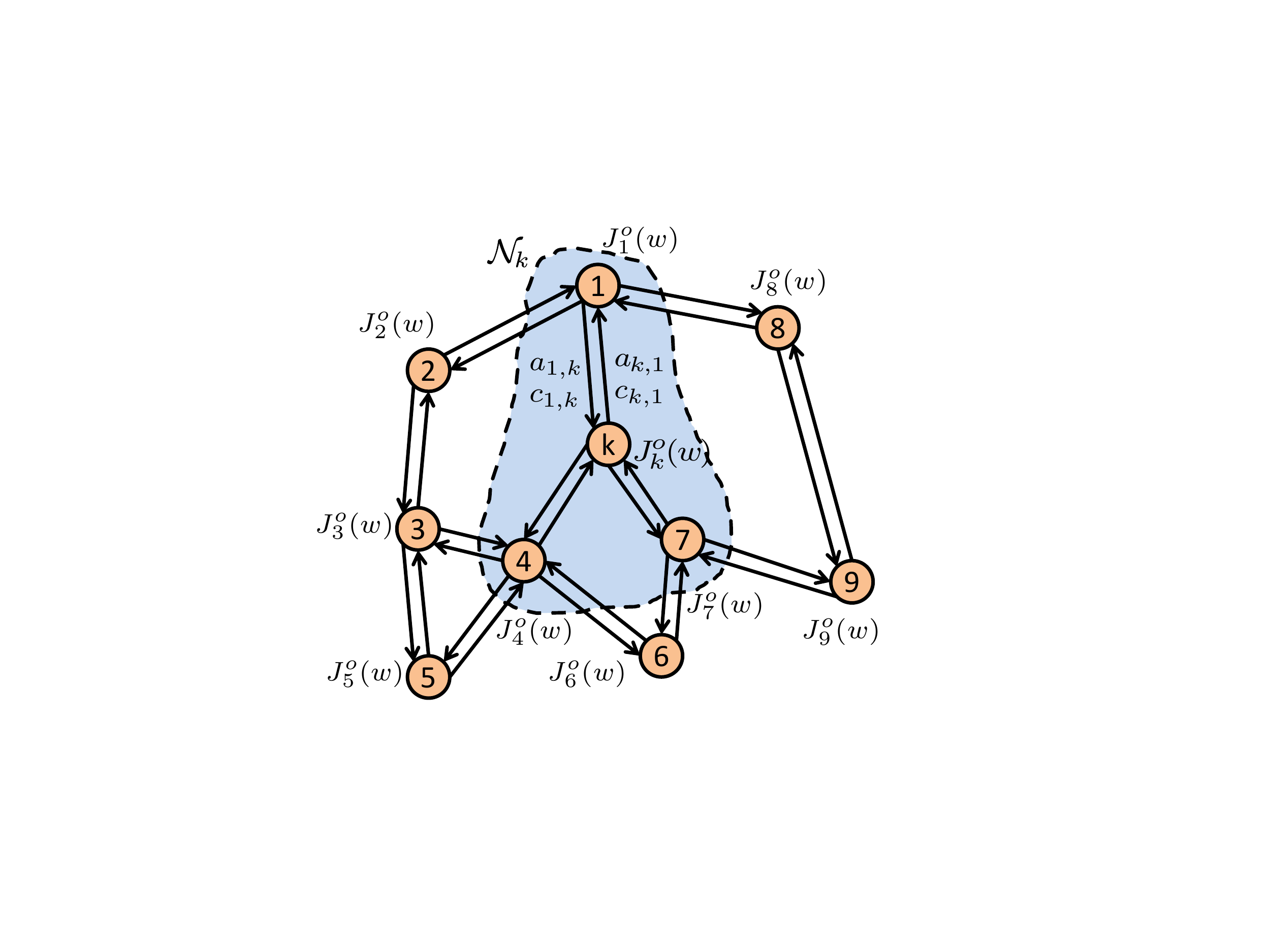}
        \caption{A network of $N$ cooperating agents; a cost function $J_k^o(w)$ is associated with each node $k$. The set of neighbors of node $k$ is denoted by ${\cal N}_k$ (including node $k$ itself); this set consists of all nodes with which node $k$ can share information. }
        \label{Fig:Fig_Network}
        \vspace{-0.5\baselineskip}
    \end{figure}
In Fig. \ref{Fig:Fig_ParetoOptimality:Single}, both $J_1^o(w)$ and $J_2^o(w)$ achieve their minima
at the same point $P = (J_1^o(w^o), J_2^o(w^o))$, where $w^o$ is the common minimizer.
In comparison, in Fig. \ref{Fig:Fig_ParetoOptimality:Multiple}, $J_1^o(w)$ attains its minimum at point $P_1$,
while $J_2^o(w)$ attains its minimum at point $P_2$, so that they do not have a common minimizer.
Instead, all the points on
the heavy red curve between points $P_1$ and $P_2$ are Pareto optimal solutions. For example,
starting at point $A$ on the curve, if we want to reduce the value of $J_1^o(w)$
without increasing the value of $J_2^o(w)$, then we will need to move out of
the achievable set $\mc{O}$. The alternative choice that would keep us on the curve is to move
to another Pareto optimal point $B$, which would however increase the value of $J_2^o(w)$.
In other words, we need to trade the value of $J_2^o(w)$ for $J_1^o(w)$. For this reason, the
curve from $P_1$ to $P_2$ is called the optimal tradeoff curve (or optimal tradeoff surface if $N>2$)
\cite[p.183]{boyd2004convex}.

	\begin{figure}[t]
		\centerline{
			\subfigure[]{
				\includegraphics[width=0.35\textwidth]{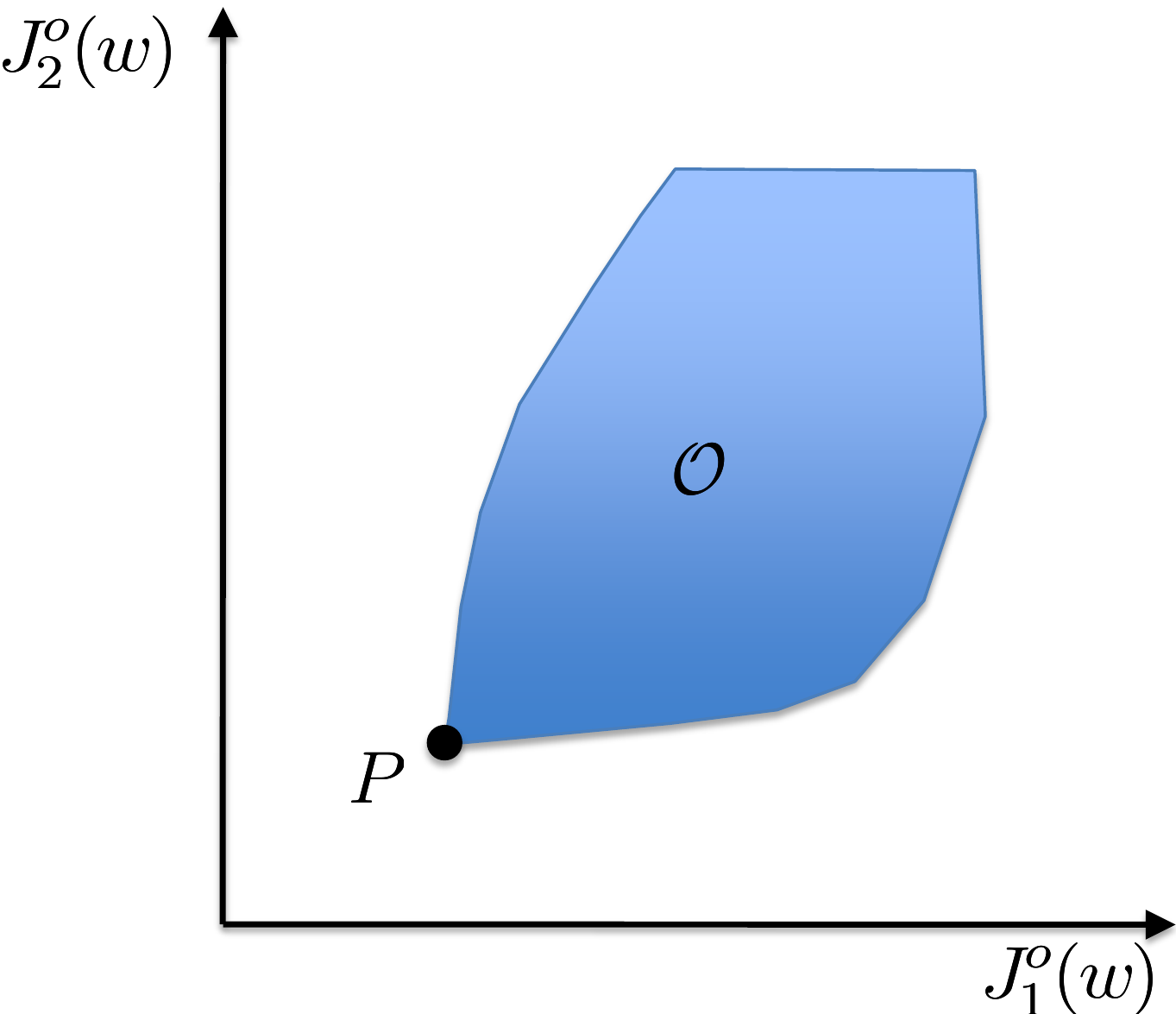}
				\label{Fig:Fig_ParetoOptimality:Single}
			}
			\subfigure[]{
				\includegraphics[width=0.35\textwidth]{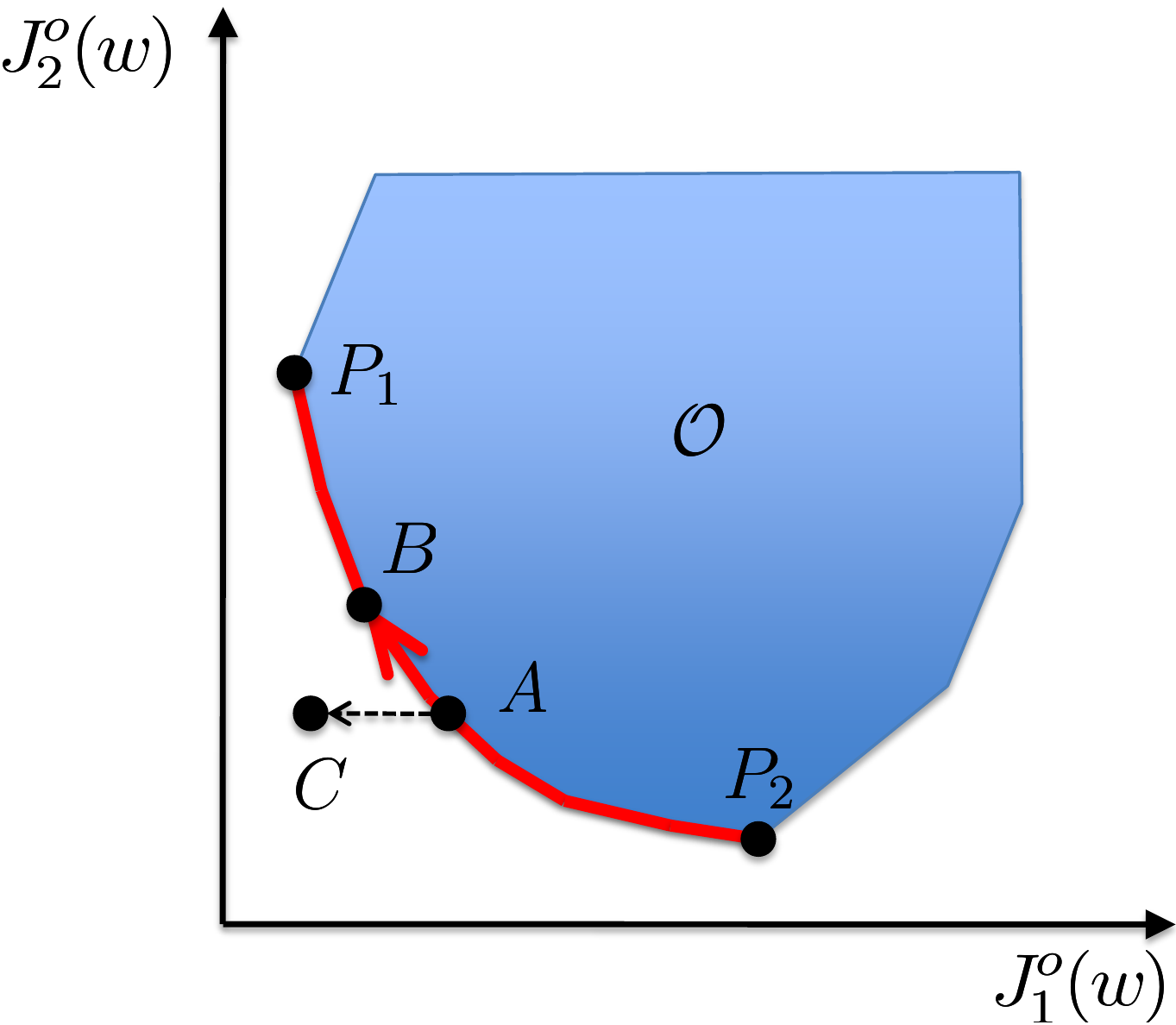}
				\label{Fig:Fig_ParetoOptimality:Multiple}
			}
		}
		\caption{Optimal and Pareto optimal points for the case $N=2$:
        (Left)  $P$ denotes the optimal point where both cost functions
        are minimized simultaneously and (Right) Pareto optimal points
        lie on the red boundary curve.}
		\label{Fig:Fig_ParetoOptimality}
		\vspace{-\baselineskip}
	\end{figure}

To solve for Pareto optimal solutions, a useful scalarization technique is often used
to form an aggregate cost function that is the weighted sum of the component costs as follows:
	\begin{align}
		\label{Equ:ProblemFormulation:J_glob_scalarization}
		J^{\mathrm{glob}}(w)		=	\sum_{l=1}^N \pi_l J_l^o(w)
	\end{align}
where $\pi_l$ is a positive weight attached with the $l$th cost.
It was shown in \cite[pp.178--180]{boyd2004convex} that the minimizer of
\eqref{Equ:ProblemFormulation:J_glob_scalarization} is Pareto optimal
for the multi-objective optimization problem. Moreover, by varying the values
of $\{\pi_l\}$, we are able to get different Pareto optimal points
on the tradeoff curve.
Observing that we can always define a new cost $J_l(w)$ by 
incorporating the weighting scalar $\pi_l$, 
    \begin{align}
        \label{Equ:ProblemFormulation:J_l_newcost}
        J_l(w) \defeq \pi_l J_l^o(w)
    \end{align}
it is sufficient for our future discussions to focus on 
aggregate costs of the following form:
	\begin{align}
		\label{Equ:ProblemFormulation:J_glob_original}
		\boxed{
			J^{\mathrm{glob}}(w)	=	\sum_{l=1}^N J_l(w)			
		}
	\end{align}
If desired, we can also add constraints to problem \eqref{Equ:ProblemFormulation:J_glob_original}. For example, suppose there is additionally some constraint of the form $p_k^T w < b_k$ at node $k$, where
$p_k$ is $M \times 1$ and $b_k$ is a scalar. Then, we can consider using barrier
functions to convert the constrained optimization problem to an
unconstrained problem \cite{boyd2004convex,towfic2012icc}.  For example, we can redefine each
cost $J_k(w)$ to be $J_k(w) \leftarrow J_k(w)+\phi(p_k^Tw-b_k)$, where $\phi(x)$
is a barrier function that penalizes values of $w$ that violate the constraint.
Therefore, without loss of generality, we shall assume $\mathbb{W}=\mathbb{R}^M$ and only consider
unconstrained optimization problems.
Moreover, we shall assume the $\{J_l^o(w)\}$ are differentiable and,
{\color{black} for each given set of positive weights $\{\pi_l\}$,} the cost
$J^{\mathrm{glob}}(w)$ in \eqref{Equ:ProblemFormulation:J_glob_scalarization}
or \eqref{Equ:ProblemFormulation:J_glob_original}
is strongly convex so that the minimizer $w^o$ is unique  \cite{poliak1987introduction}.
Note that the new cost $J_l(w)$ in \eqref{Equ:ProblemFormulation:J_l_newcost} depends on
$\pi_l$ so that the $w^o$ that minimizes $J^{\mathrm{glob}}(w)$ in
\eqref{Equ:ProblemFormulation:J_glob_original} also depends on
$\{\pi_l\}$.

One of the most studied approaches to the distributed solution of such optimization problems is the incremental approach --- see, e.g.,
\cite{bertsekas1997parallel,bertsekas1997new,nedic2001incremental,
nedic2000convergence,polyak1973pseudogradient,rabbat2005quantized,lopes2007incremental,li2008new}.
In this approach, a cyclic path is defined over the nodes and data are processed in a cyclic manner through the network until optimization is achieved. However, determining a cyclic path that covers all nodes is generally an NP-hard
problem\cite{karp1972reducibility} and, in addition, cyclic trajectories are vulnerable to link and node failures. Another useful distributed optimization approach relies on the use of consensus strategies\cite{bertsekas1997parallel,tsitsiklis1986distributed,barbarossa2007bio,nedic2009distributed,
kar2011converegence,dimakis2010gossip,srivastava2011distributed,khan2011networked}. In this approach, vanishing step-size sequences are used to ensure that agents reach consensus and agree about the optimizer in steady-state.
However, in time-varying environments, diminishing step-sizes prevent the network from continuous learning;
when the step-sizes die out, the network stops learning.

In \cite{chen2011TSPdiffopt}, we generalized our earlier work on adaptation
and learning over networks \cite{lopes2008diffusion, Cattivelli10} and developed diffusion strategies that enable the decentralized optimization of global cost functions of the form \eqref{Equ:ProblemFormulation:J_glob_original}. In
the diffusion approach, information is processed locally at the nodes and then diffused through a real-time sharing mechanism. In this manner, the approach is scalable, robust to node and link failures, and avoids the need for cyclic trajectories. In addition, compared to the aforementioned consensus solutions (such as those in
\cite{nedic2009distributed,srivastava2011distributed,ram2010distributed}),
the diffusion strategies we consider here employ \emph{constant} (rather than vanishing)
step-sizes in order to endow the resulting networks with \emph{continuous} learning and
tracking abilities.
By keeping the step-sizes constant, the agents are able to track drifts in the underlying costs and in the location of the Pareto optimal solutions. One of the main challenges in the ensuing analysis
becomes that of showing that the agents are still able to approach the Pareto optimal solution even with constant step-sizes; in this way, the resulting diffusion strategies are able to combine the two useful properties of optimality and adaptation.

In \cite{chen2011TSPdiffopt},
we focused on the important case where all costs $\{J_l(w)\}$ share the {\em same} optimal solution $w^o$ (as was the case with Fig. \ref{Fig:Fig_ParetoOptimality:Single});
this situation arises when the agents in the network have a common objective and they cooperate to solve the problem of mutual interest in a distributed manner. Examples abound in biological networks where agents work together, for example, to locate food sources or evade predators\cite{tu2011fish}, and in
collaborative spectrum sensing\cite{di2011biotsp},
system identification\cite{chouvardas2011adaptive}, and learning
applications\cite{theodoridis2011adaptive}.
In this paper, we develop the necessary theory to show that the same
diffusion approach (described by \eqref{Equ:DiffusionAdaptation:ATC0_NE}--\eqref{Equ:DiffusionAdaptation:CTA0_NE} below) can be used to
solve the more challenging \emph{multi-objective} optimization problem,
where the agents need to converge instead to a Pareto optimal solution. Such situations
are common
in the context of multi-agent decision making (see, e.g.,
\cite{towfic2012icc} and also Sec. \ref{Sec:Simulation} where we discuss one application in the context of collaborative decision in finance).
To study this more demanding scenario, we first show that the proposed diffusion process can be represented as the cascade composition of three operators: two combination (aggregation) operators and one gradient-descent operator. Using the Banach fixed-point theorem\cite[pp.299--303]{kreyszig1989introductory}, we establish the existence of a unique fixed point for the composite cascade. We then study how close each agent
in the network converges towards this fixed point, and also examine how
close the Pareto solution is to the fixed point. We perform a detailed mean-square error analysis and establish that all agents are able to converge to the \emph{same} Pareto optimal solution within a sufficiently small mean-square-error (MSE) bound. We illustrate the results by considering an example involving collaborative decision in financial applications. 

\vspace{1em}

\noindent
{\bf Notation}. Throughout the paper, all vectors are column vectors.
We use boldface letters to denote random quantities (such as $\bm{u}_{k,i}$)
and regular font to denote their realizations or deterministic variables (such as $u_{k,i}$).
We use $\mathrm{diag}\{x_1,\ldots,x_N\}$ to denote a (block) diagonal matrix consisting of diagonal entries (blocks)
$x_1,\ldots,x_N$,
and use $\mathrm{col}\{x_1,\ldots,x_N\}$ to denote a column vector formed by stacking $x_1,\ldots,x_N$
on top of each other. The notation $x \preceq y$ means each entry of the
vector $x$ is less than or equal to the corresponding entry of the vector $y$.


\section{Diffusion Adaptation Strategies}
\label{Sec:Diffusion}

In \cite{chen2011TSPdiffopt}, we motivated and derived diffusion strategies for distributed optimization, 
which are captured by the following general description: 
{\setlength{\jot}{-5pt}
	\begin{align}	
		\label{Equ:PerformanceAnalysis:Diffusion_General}
			{\phi}_{k,i-1}	&=	 \displaystyle\sum_{l = 1 }^N a_{1,lk} {w}_{l,i-1}		\\
		\label{Equ:PerformanceAnalysis:Diffusion_General1}
			{\psi}_{k,i}		&=									
								\displaystyle{\phi}_{k,i-1}
								-
								\mu_k \sum_{l =1}^N
								c_{lk}
									{\nabla}_w J_l({\phi}_{k,i-1})
								\\
		\label{Equ:PerformanceAnalysis:Diffusion_General2}
			{w}_{k,i}		&=	 \displaystyle\sum_{l = 1}^N a_{2,lk} {\psi}_{l,i}
	\end{align}
}%
where ${w}_{k,i}$ is the local estimate for $w^o$ at node $k$ and time $i$,
$\mu_k$  is the step-size parameter used by node $k$,
and $\{\phi_{k,i-1},\psi_{k,i}\}$ are intermediate  estimates for $w^o$.
Moreover, $\nabla_w J_{l}(\cdot)$ is the (column) gradient vector of $J_l(\cdot)$ relative to $w$. 
The non-negative coefficients $\{a_{1,lk}\}$, $\{c_{lk}\}$, and $\{a_{2,lk}\}$ are
the $(l,k)$-th entries of matrices $A_1$, $C$, and $A_2$, respectively,
and they are required to satisfy:
	\begin{equation}
		\label{Equ:PerformanceAnalysis:ConditionCombinationWeights}
			\left\{
			\begin{split}
				 &A_1^T \mathds{1} = \mathds{1}, \;
				A_2^T \mathds{1} = \mathds{1}, \;
				C \mathds{1} = \mathds{1}
				\\
				&a_{1,lk}=0,~a_{2,lk}=0,~c_{lk}=0 \mathrm{~if~} l \notin \mathcal{N}_{k}
			\end{split}
			\right.
	\end{equation}
where $\mathds{1}$ denotes a vector with all entries equal to one. Note from
\eqref{Equ:PerformanceAnalysis:ConditionCombinationWeights}
that the combination coefficients
$\{a_{1,lk},a_{2,lk},c_{lk}\}$ are nonzero only for those $l \in \mathcal{N}_k$.
Therefore, the sums in \eqref{Equ:PerformanceAnalysis:Diffusion_General}--%
\eqref{Equ:PerformanceAnalysis:Diffusion_General2} are confined within the
neighborhood of node $k$.
Condition \eqref{Equ:PerformanceAnalysis:ConditionCombinationWeights} requires
the combination matrices $\{A_1,A_2\}$ to be left-stochastic,
while $C$ is right-stochastic. We therefore note that each node $k$
first aggregates the existing estimates from its neighbors through
\eqref{Equ:PerformanceAnalysis:Diffusion_General} and generates the intermediate estimate
$\phi_{k,i-1}$. Then, node $k$ aggregates gradient information from its neighborhood and updates $\phi_{k,i-1}$ to $\phi_{k,i}$ through \eqref{Equ:PerformanceAnalysis:Diffusion_General1}.
All other nodes in the network are performing these same steps simultaneously.
Finally, node $k$ aggregates the estimates $\{\phi_{l,i}\}$ through step
\eqref{Equ:PerformanceAnalysis:Diffusion_General2} to update its weight estimate to $w_{k,i}$.

Algorithm \eqref{Equ:PerformanceAnalysis:Diffusion_General}--%
\eqref{Equ:PerformanceAnalysis:Diffusion_General2} can be simplified to several special cases for different choices of the matrices  $\{A_1,A_2,C\}$. 
For example,
the choice $A_1 = I$, $A_2=A$ and $C=I$ reduces to
the adapt-then-combine (ATC) strategy that has no exchange of gradient information
\cite{lopes2008diffusion,chen2011TSPdiffopt,Cattivelli10,takahashi2010link}:
	\begin{align}
        \boxed{
				\label{Equ:DiffusionAdaptation:ATC0_NE}
				\begin{array}{l}
					\psi_{k,i}	=	 \displaystyle
								w_{k,i-1} - \mu_k
								\nabla_w J_k(w_{k,i-1})	\\
					w_{k,i}	=	\displaystyle \sum_{l \in \mathcal{N}_k} a_{lk} \psi_{l,i}
				\end{array}
			}\;\;(\mathrm{ATC},\;C=I)
    \end{align}
while the choice $A_1=A$, $A_2=I$ and $C=I$  reduces to the
combine-then-adapt (CTA) strategy, 
where the order of the combination and adaptation steps are reversed relative to
\eqref{Equ:DiffusionAdaptation:ATC0_NE}
\cite{lopes2008diffusion,Cattivelli10,takahashi2010link}:
	\begin{align}
        \boxed{
				\label{Equ:DiffusionAdaptation:CTA0_NE}
				\begin{array}{l}
					\psi_{k,i\!-\!1}	=	\displaystyle \sum_{l \in \mathcal{N}_k} a_{lk} w_{l,i-1}
                            		 \\
					w_{k,i}	=	 \displaystyle\psi_{k,i-1}
								- \mu_k \nabla_w J_k(\psi_{k,i-1})
				\end{array}
			}\;\;(\mathrm{CTA},\;C=I)
    \end{align}
Furthermore, if in the CTA implementation \eqref{Equ:DiffusionAdaptation:CTA0_NE}
we enforce $A$ to be doubly stochastic,
replace $\nabla_w J_k(\cdot)$ by a {\color{black}subgradient}, and
use a time-decaying step-size parameter ($\mu_k(i)\rightarrow 0$), then
we obtain the unconstrained
version used by \cite{ram2010distributed}.
In the sequel, we continue with the general recursions
\eqref{Equ:PerformanceAnalysis:Diffusion_General}--\eqref{Equ:PerformanceAnalysis:Diffusion_General2},
which allow us to examine the convergence properties of several algorithms in a unified manner.
The challenge we encounter now is to show that this same class of algorithms
can still optimize the cost \eqref{Equ:ProblemFormulation:J_glob_original}
in a  distributed manner when the individual costs $\{J_l(w)\}$ do not necessarily have the same minimizer. 
This is actually a demanding task, as the analysis in the coming sections reveals, 
and we need to introduce novel analysis techniques to be able to handle this general case.

\section{Performance Analysis}
\label{Sec:PerformanceAnalysis}
\subsection{Modeling Assumptions}
In most situations in practice, the true gradient vectors needed in
\eqref{Equ:PerformanceAnalysis:Diffusion_General1}
are not available. Instead, perturbed versions are available, which we model as
	\begin{align}
		\label{Equ:Diffusion:NoisyGradient}
		\widehat{\nabla_w J_l}(\bm{w}) = \nabla_w J_l(\bm{w}) + \bm{v}_{l,i}(\bm{w})
	\end{align}
where the random noise term, $\bm{v}_{l,i}(\bm{w})$, may depend on $\bm{w}$
and will be required to satisfy certain conditions given by
\eqref{Assumption:GradientNoise:ZeroMean_Uncorrelated}--\eqref{Assumption:GradientNoise:Norm}.
We refer to the perturbation in \eqref{Equ:Diffusion:NoisyGradient}
as gradient noise.
Using \eqref{Equ:Diffusion:NoisyGradient}, the diffusion algorithm \eqref{Equ:PerformanceAnalysis:Diffusion_General}--%
\eqref{Equ:PerformanceAnalysis:Diffusion_General2} becomes the following, where
we are using boldface letters for various quantities to highlight the fact that they are now stochastic in nature due to the randomness in the noise component:
{\setlength{\jot}{-5pt}
    \begin{align}	
		\label{Equ:PerformanceAnalysis:Diffusion_General_Noisy}
			\bm{\phi}_{k,i-1}	&=	 \displaystyle\sum_{l = 1 }^N a_{1,lk} \bm{w}_{l,i-1}		\\
		\label{Equ:PerformanceAnalysis:Diffusion_General1_Noisy}
			\bm{\psi}_{k,i}		&=									
								\displaystyle\bm{\phi}_{k,i-1}
								-
								\mu_k \sum_{l =1}^N
								c_{lk}
								\big[{\nabla}_w J_l(\bm{\phi}_{k,i-1}) + \bm{v}_{l,i}(\bm{\phi}_{k,i-1})\big]
								\\
		\label{Equ:PerformanceAnalysis:Diffusion_General2_Noisy}
			\bm{w}_{k,i}			&=	 \displaystyle\sum_{l = 1}^N a_{2,lk} \bm{\psi}_{l,i}
	\end{align}
}%
Using \eqref{Equ:PerformanceAnalysis:Diffusion_General_Noisy}--%
\eqref{Equ:PerformanceAnalysis:Diffusion_General2_Noisy}, we now proceed to
examine the mean-square performance of the diffusion strategies. 
Specifically,
in the sequel, we study: (i) how fast and (ii) how close the estimator $\bm{w}_{k,i}$ at each node $k$
approaches the Pareto-optimal
solution $w^o$ in the mean-square-error sense. We establish the convergence of
all nodes towards the same Pareto-optimal solution within a small MSE bound. 
Since we are dealing with individual costs that may not have a common minimizer, the approach we employ to examine the convergence properties of the diffusion
strategy is fundamentally different from \cite{chen2011TSPdiffopt}; we follow a
system-theoretic approach and call upon the fixed-point theorem for contractive mappings\cite[pp.299--303]{kreyszig1989introductory}.

To proceed with the analysis, we introduce the following assumptions on the cost functions and
gradient noise.
{\color{black}
As explained in
\cite{chen2011TSPdiffopt}, these conditions are weaker than similar conditions in the literature of distributed optimization; in this way,
our convergence and performance results hold under more relaxed conditions than usually
considered in the literature.}
	\begin{assumption}[Bounded Hessian]
		\label{Assumption:Hessian}
		Each component cost function $J_l(w)$ has a bounded Hessian matrix, i.e.,
		there exist
		nonnegative real numbers $\lambda_{l,\min}$ and $\lambda_{l,\max}$ such that,
		for each $k=1,\ldots,N$:
			\begin{align}
				\label{Equ:Assumption:StrongConvexity}
				&\lambda_{l,\min} I_M	\le	\nabla_w^2 J_l(w)		\le	\lambda_{l,\max} I_M
			\end{align}
		with $\sum_{l=1}^N c_{lk} \lambda_{l,\min} >0$.
        \hfill\qed
	\end{assumption}
	\begin{assumption}[Gradient noise]
		\label{Assumption:GradientNoise}
		There exist $\alpha \ge 0$ and $\sigma_{v}^2 \ge 0$ such that, for all
        $\bm{w} \in \mathcal{F}_{i-1}$:
			\begin{align}
				\label{Assumption:GradientNoise:ZeroMean_Uncorrelated}
				&\mathbb{E}\left\{\bm{v}_{l,i}(\bm{w}) \;|\; \mathcal{F}_{i-1} \right\} = 0	 \\
				\label{Assumption:GradientNoise:Norm}
				&\mathbb{E}\left\{ \|\bm{v}_{l,i}(\bm{w}) \|^2 \right\}
						\le \alpha \cdot \E \|\nabla_w J_l(\bm{w})\|^2 + \sigma_{v}^2
			\end{align}
		for all $i,l$, where $\mathcal{F}_{i-1}$ denotes the past history of
        estimators $\{\bm{w}_{k,j}\}$ for $j \le i-1$ and all $k$.
        \hfill\qed
	\end{assumption}
If we choose $C=I$, then Assumption \ref{Assumption:Hessian} implies that the cost functions $\{J_l(w)\}$ are strongly convex%
\footnote{A differentiable function $f(x)$ on $\mathbb{R}^n$ is said to be strongly convex
if there exists a $\lambda_{\min}>0$ such that
$f(x+y) \ge f(x) + y^T\nabla f(x)+\lambda_{\min} \|y\|^2/2$ for any $x,y \in \mathbb{R}^n$.
And if $f(x)$ is twice-differentiable,
this is also equivalent to $\nabla^2 f(x) \ge \lambda_{\min} I$\cite[pp.9-10]{poliak1987introduction}.
Strong convexity implies that the function $f(x)$ can be lower bounded by some quadratic function.
}.
This condition can be guaranteed by adding small regularization terms. For example,
we can convert a non-strongly convex function $J_l'(w)$ to a strongly convex one by 
redefining $J_l(w)$ as $J_l(w)  \leftarrow   J_l(w) + \epsilon \|w\|^2$,
where $\epsilon>0$ is a small regularization factor.
We further note that, assumption  \eqref{Assumption:GradientNoise:Norm} is a mix of the ``relative random noise''
and ``absolute random noise'' model usually assumed in stochastic approximation \cite{poliak1987introduction}.
Condition \eqref{Assumption:GradientNoise:Norm}
implies that the gradient noise grows  when the estimate is away
from the optimum (large gradient). Condition \eqref{Assumption:GradientNoise:Norm}
also states that even when the
gradient vector is zero, there is still some residual
noise variance $\sigma_v^2$.

\subsection{Diffusion Adaptation Operators}
To analyze the performance of the diffusion adaptation strategies, we first
represent the mappings performed by
\eqref{Equ:PerformanceAnalysis:Diffusion_General_Noisy}--%
\eqref{Equ:PerformanceAnalysis:Diffusion_General2_Noisy}
in terms of useful operators.

\begin{definition}[Combination Operator]
    \label{Def:CombinationOperator}
    Suppose $x=\col\{x_1,\ldots,x_N\}$ is an arbitrary $N \times 1$ block column vector that is formed
    by stacking $M \times 1$ vectors $x_1,\ldots, x_N$ on top of each other.
    The combination operator $T_A: \mb{R}^{MN} \rightarrow \mb{R}^{MN}$
    is defined as the linear mapping:
        \begin{align}
            \label{Equ:def:T_A}
            T_A(x)  \defeq   (A^T \otimes I_M) \; x
        \end{align}
    where $A$ is an $N \times N$ left stochastic matrix, and
    $\otimes$ denotes the Kronecker product operation. \hfill\qed
\end{definition}

\begin{definition}[Gradient-Descent Operator]
    \label{Def:GradientOperator}
    Consider the same $N\times 1$ block column vector $x$. Then,
    the gradient-descent operator $T_G: \mb{R}^{MN} \rightarrow \mb{R}^{MN}$
    is the nonlinear mapping defined by:
        \begin{align}
            \label{Equ:Def:T_G}
            T_G(x)  \defeq  
							\begin{bmatrix}
                                x_1 - \mu_1 \sum_{l=1}^N c_{l1} \nabla_w J_l(x_1)\\
                                    \vdots\\
                                x_N - \mu_N \sum_{l=1}^N c_{lN} \nabla_w J_l(x_N)
                            \end{bmatrix}
        \end{align}
    \hfill\qed
\end{definition}

\begin{definition}[Power Operator]
    \label{Def:PowerOperator}
    Consider the same $N\times 1$ block vector $x$.
    The power operator $P: \mb{R}^{MN} \rightarrow \mb{R}^{N}$ is defined as the mapping:
        \begin{align}
            \label{Equ:Def:PowerOperator}
            P[x]    \defeq  \col\{ \|x_1\|^2, \ldots, \|x_N\|^2\}
        \end{align}
    where $\|\cdot\|$ denotes the Euclidean norm of a vector.
    \hfill\qed
\end{definition}
We will use the power operator to study how error variances propagate after a specific
operator $T_A(\cdot)$ or $T_G(\cdot)$ is applied to a random vector.
We remark that we are using the notation ``$P[\cdot]$'' rather than ``$P(\cdot)$''
to highlight the fact that $P$ is a mapping from $\mb{R}^{MN}$ to a lower dimensional space $\mb{R}^{N}$.
In addition to the above three operators, we define the following aggregate vector of gradient
noise that depends on the state $x$:
    \begin{align}
        \label{Equ:PerformanceAnalysis:v_x}
        \bm{v}(x)   \defeq      \!-\col\Big\{\!
                                        \mu_1\! \sum_{l=1}^N c_{l1} \bm{v}_l(x_1),
                                        \ldots,
                                        \mu_N\! \sum_{l=1}^N c_{lN} \bm{v}_l(x_N)
                                        \!
                                \Big\}
    \end{align}
With these definitions, we can now represent the two combination steps
\eqref{Equ:PerformanceAnalysis:Diffusion_General_Noisy}
and \eqref{Equ:PerformanceAnalysis:Diffusion_General2_Noisy} as two
combination operators $T_{A_1}(\cdot)$ and $T_{A_2}(\cdot)$. We can also
represent the adaptation step
\eqref{Equ:PerformanceAnalysis:Diffusion_General1_Noisy}
by a gradient-descent operator perturbed by
the noise operator \eqref{Equ:PerformanceAnalysis:v_x}:
    \begin{align}
        \label{Equ:Def:T_G_hat}
        \widehat{\bm{T}}_G(x)     \defeq    T_G(x) + \bm{v}(x)
    \end{align}
We can view $\widehat{\bm{T}}_G(x)$ as a random operator that maps each input $x \in \mathbb{R}^{MN}$
into an $\mb{R}^{MN}$ random vector, and we use boldface letter to highlight this random
nature.
Let
    \begin{align}
        \bm{w}_i    \defeq  \col\{\bm{w}_{1,i}, \bm{w}_{2,i},\ldots,\bm{w}_{N,i}\}
    \end{align}
denote the vector that collects the estimators across all nodes.
Then, the overall diffusion adaptation steps
\eqref{Equ:PerformanceAnalysis:Diffusion_General_Noisy}--%
\eqref{Equ:PerformanceAnalysis:Diffusion_General2_Noisy} that update
$\bm{w}_{i-1}$ to $\bm{w}_i$ can be represented as a cascade composition of three operators:
    \begin{align}
        \label{Equ:Def:T_d_hat}
        \widehat{\bm{T}}_d(\cdot)     \defeq       T_{A_2} \circ \widehat{\bm{T}}_G \circ T_{A_1} (\cdot)
    \end{align}
where we use $\circ$ to denote the composition of any two operators, i.e.,
$T_1 \circ T_2 (x)   \defeq  T_1(T_2(x))$.
If there is no gradient noise, then the diffusion adaptation
operator \eqref{Equ:Def:T_d_hat} reduces to
    \begin{align}
        \label{Equ:Def:T_d}
        T_d(\cdot)     \defeq       T_{A_2} \circ T_G \circ T_{A_1} (\cdot)
    \end{align}
In other words, the diffusion adaptation over the entire network with and without gradient
noise can be described in the following compact forms:
    \begin{align}
        \label{Equ:Def:Diffusion_OperatorForm:Noisy}
        \bm{w}_i    &=   \widehat{\bm{T}}_d(\bm{w}_{i-1})
        \\
        \label{Equ:Def:Diffusion_OperatorForm:NoiseFree}
        w_i    &=   T_d(w_{i-1})
    \end{align}
Fig. \ref{Fig:Fig_DiffusionOperator:T_A_T_G} illustrates the role of the combination operator
$T_A(\cdot)$ (combination steps) and the gradient-descent operator $T_G(\cdot)$ (adaptation step).
The combination operator $T_A(\cdot)$ aggregates the estimates from the neighborhood (social learning),
while the gradient-descent operator $T_G(\cdot)$ incorporates information from the local gradient vector
(self-learning). In Fig. \ref{Fig:Fig_DiffusionOperator:OperatorDiffusion},
we show that each diffusion adaptation step can be represented as the cascade
composition of three operators,
with perturbation from the gradient noise operator.
    \begin{figure*}[t]
        \centerline{
            \subfigure[$T_{A_1}(\cdot)$, $T_{A_2}(\cdot)$ and $T_G(\cdot)$.]
            {\includegraphics[width=0.34\textwidth]{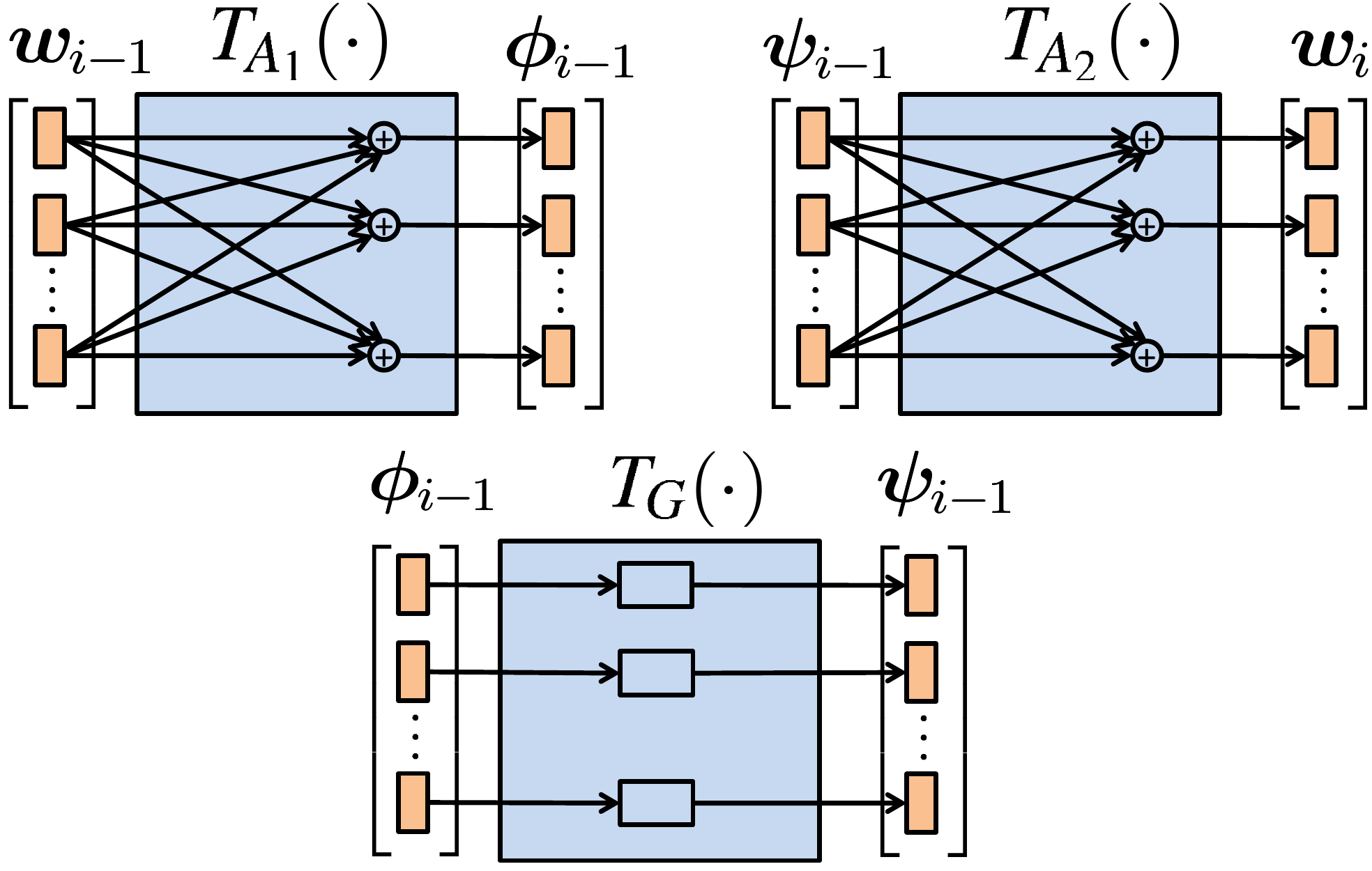}
            \label{Fig:Fig_DiffusionOperator:T_A_T_G}
            }
            \subfigure[Cascade representation of diffusion adaptation.]
            {\includegraphics[width=0.43\textwidth]{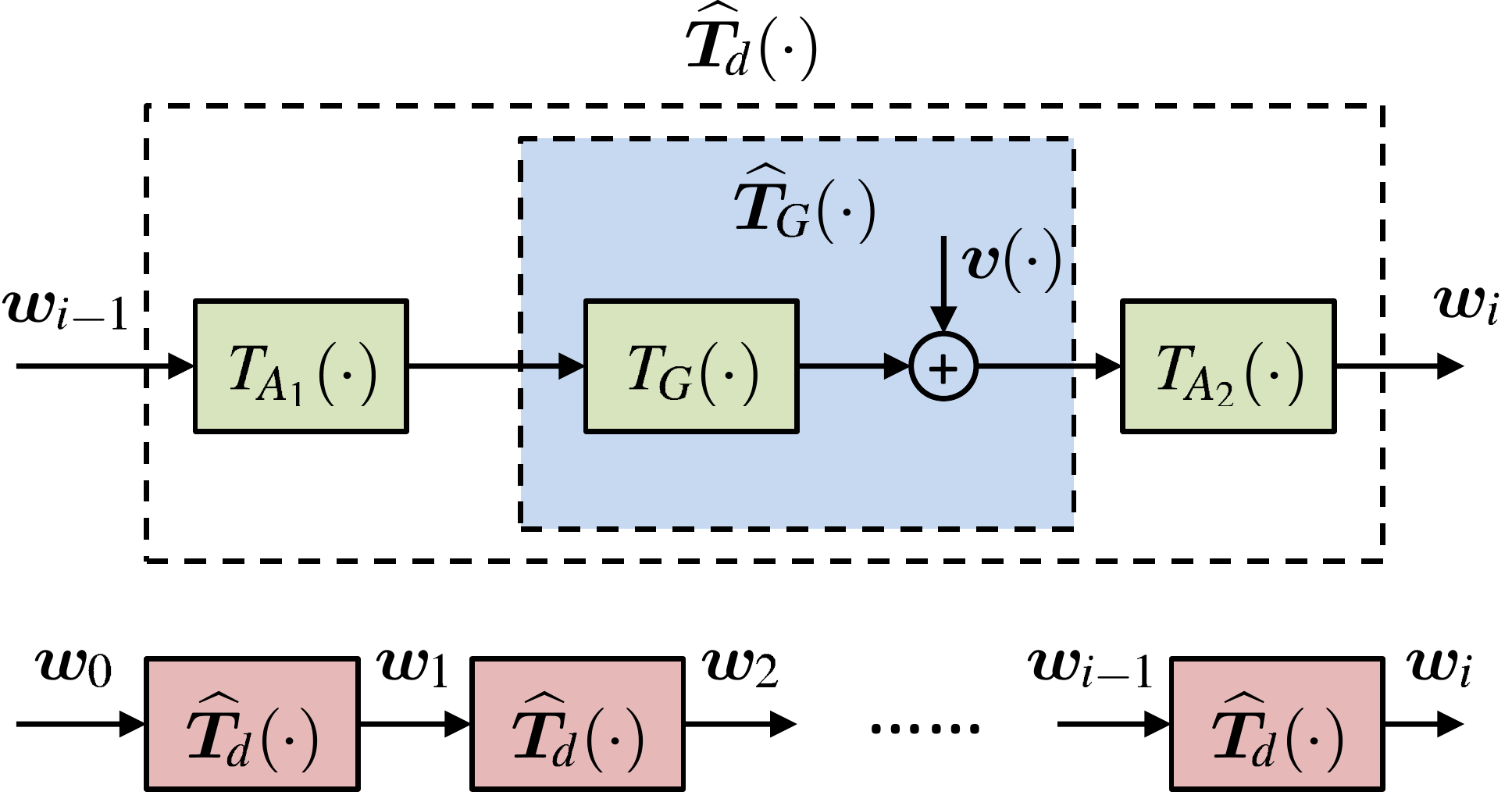}
            \label{Fig:Fig_DiffusionOperator:OperatorDiffusion}
            }
        }
        \caption{Representation of the diffusion adaptation strategy
        \eqref{Equ:PerformanceAnalysis:Diffusion_General_Noisy}--%
        \eqref{Equ:PerformanceAnalysis:Diffusion_General2_Noisy}
        in terms of operators.
        Each diffusion adaptation
        step can be viewed as a cascade composition of three operators: $T_{A_1}(\cdot)$,
        $T_{G}(\cdot)$, and $T_{A_2}(\cdot)$ with gradient perturbation $\bm{v}(\cdot)$.
        If $\bm{v}(\cdot)=0$, then
        $\widehat{\bm{T}}_d(\cdot)$
        becomes $T_d(\cdot)$.}
        \label{Fig:Fig_DiffusionOperator}
        \vspace{-\baselineskip}
    \end{figure*}
Next, in Lemma \ref{Lemma:PropertiesOperator}, we examine some of the properties of the operators $\{T_{A_1}, T_{A_2}, T_{G}\}$, which are proved in Appendix \ref{Appendix:Properties_Operators}.
    \begin{lemma}[Useful Properties]
        \label{Lemma:PropertiesOperator}
        Consider $N\times 1$ block vectors
        $x=\col\{x_1,\ldots,x_N \}$
        and $y=\col\{y_1,\ldots,y_N\}$
        with $M\times 1$ entries $\{x_k,y_k\}$. Then, the operators $T_A(\cdot)$, $T_G(\cdot)$
        and $P[\cdot]$ satisfy the following properties:
            \begin{enumerate}
                \item
                    (\underline{Linearity}): $T_A(\cdot)$ is a linear operator.
                \item
                    (\underline{Nonnegativity}): $P[x] \succeq 0$.
                \item
                    (\underline{Scaling}): For any scalar $a \in \mb{R}$, we have
                        \begin{align}
                            \label{Equ:Properties:PX_scalar}
                            P[ax]=a^2 P[x]
                        \end{align}
                \item
                    (\underline{Convexity}): suppose $x^{(1)},\ldots,x^{(K)}$ are $N\times 1$ block vectors formed in the same manner as $x$, and let $a_1,\ldots,a_K$ be non-negative real scalars that add up to one. Then,
                        \begin{align}
                            \label{Equ:Properties:PX_CvxComb}
                            P[a_1 x^{(1)}&+\cdots+ a_K x^{(K)}]
                            		\preceq a_1 P[x^{(1)}] + \cdots + a_K P[x^{(K)}]
                        \end{align}
                \item
                    (\underline{Additivity}): Suppose $\bm{x}=\col\{\bm{x}_1,\ldots,\bm{x}_N\}$ and
                    $\bm{y}=\col\{\bm{y}_1,\ldots,\bm{y}_N\}$ are $N\times 1$ block
                    random vectors that satisfy
                    $\E \bm{x}_k^T \bm{y}_k=0$ for $k=1,\ldots,N$. Then,
                        \begin{align}
                            \label{Equ:Properties:PX_Expt}
                            \E P[\bm{x}+\bm{y}] =   \E P[\bm{x}] + \E P[\bm{y}]
                        \end{align}
                \item
                    (\underline{Variance relations}):
                        \begin{align}
                            \label{Equ:Properties:PX_TA}
                            &P[T_A(x)]   \preceq     A^T P[x]
                            \\
                            \label{Equ:Properties:PX_TG}
                            &P[T_G(x)-T_G(y)]   \preceq  \Gamma^2 P[x-y]
                        \end{align}
					where
                        \begin{align}
                            \label{Equ:Properties:Gamma}
                            \Gamma      &\defeq     \diag\{\gamma_1,\ldots,\gamma_N\}       \\
                            \label{Equ:PerformanceAnalysis:gamma_k_def}
                            \gamma_k    &\defeq     \max
                                    				\{
                                    				    	|1-\mu_k \sigma_{k,\max}
                                    						|,\;
                                    						|1-\mu_k \sigma_{k,\min}
                                                            |
                                    				\}
                                    \\
                            \label{Equ:PerformanceAnalysis:Thm_MSS:sigma_min_def}
                            \sigma_{k,\min}    &\triangleq      \sum_{l=1}^N c_{lk} \lambda_{l,\min},
                                                                    \;
                            \sigma_{k,\max}     \triangleq      \sum_{l=1}^N c_{lk} \lambda_{l,\max}
                        \end{align}
                \item
                    (\underline{Block Maximum Norm}): The $\infty-$norm of $P[x]$ is the squared block maximum norm of $x$:
                        \begin{align}
                        	\label{Equ:Properties:PX_BMNorm}
                            \|P[x]\|_{\infty}   =       \|x\|_{b,\infty}^2
                                                \defeq  \big(\max_{1 \le k \le N}\|x_k\|\big)^2
                        \end{align}
                \item
                    (\underline{Preservation of Inequality}): Suppose
                    vectors $x$, $y$ and matrix $F$ have nonnegative entries, then
                    $x \preceq y$ implies $Fx \preceq  Fy$.\hfill\qed
            \end{enumerate}
    \end{lemma}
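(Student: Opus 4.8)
The plan is to verify the eight properties in sequence, noting that all but the variance relation for $T_G$ follow more or less directly from the definitions. Properties (1)--(3) are immediate: $T_A$ is linear because it is multiplication by the fixed matrix $A^T \otimes I_M$; nonnegativity and the scaling identity $P[ax] = a^2 P[x]$ hold because each entry of $P[\cdot]$ is a squared Euclidean norm. Likewise, the block-maximum-norm identity (7) is just the $\infty$-norm of the vector $\col\{\|x_1\|^2,\ldots,\|x_N\|^2\}$, and preservation of inequality (8) follows by writing $(Fx)_i = \sum_j F_{ij} x_j$ and using $F_{ij}\ge 0$ together with $x_j \le y_j$ to obtain $(Fx)_i \le (Fy)_i$.

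For the convexity property (4), I would apply Jensen's inequality blockwise to the convex map $\|\cdot\|^2$: since $a_1,\ldots,a_K \ge 0$ sum to one, $\|\sum_j a_j x_k^{(j)}\|^2 \le \sum_j a_j \|x_k^{(j)}\|^2$ for each block $k$, which is precisely the stated vector inequality. For additivity (5), I would expand $\E\|\bm{x}_k + \bm{y}_k\|^2 = \E\|\bm{x}_k\|^2 + 2\,\E[\bm{x}_k^T \bm{y}_k] + \E\|\bm{y}_k\|^2$ and use the orthogonality hypothesis $\E \bm{x}_k^T \bm{y}_k = 0$ to discard the cross term. The $T_A$ variance relation then combines these: the $k$-th block of $T_A(x)=(A^T \otimes I_M)x$ equals $\sum_l a_{lk} x_l$, and left-stochasticity of $A$ makes $\{a_{lk}\}_l$ a convex combination, so (4) yields $\|[T_A(x)]_k\|^2 \le \sum_l a_{lk}\|x_l\|^2$, whose right-hand side is the $k$-th entry of $A^T P[x]$.

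The main obstacle, and the only step requiring genuine analysis, is the $T_G$ variance relation \eqref{Equ:Properties:PX_TG}. Here I would first invoke the integral form of the mean-value theorem to write $\nabla_w J_l(x_k) - \nabla_w J_l(y_k) = H_{l,k}\,(x_k - y_k)$, where $H_{l,k} \defeq \int_0^1 \nabla_w^2 J_l\big(y_k + t(x_k - y_k)\big)\,dt$ is symmetric as an average of symmetric Hessians. The $k$-th block of $T_G(x) - T_G(y)$ then collapses to $B_k (x_k - y_k)$ with $B_k \defeq I_M - \mu_k \sum_l c_{lk} H_{l,k}$. By Assumption \ref{Assumption:Hessian}, $\lambda_{l,\min} I_M \le H_{l,k} \le \lambda_{l,\max} I_M$, so the symmetric matrix $\sum_l c_{lk} H_{l,k}$ has eigenvalues in $[\sigma_{k,\min}, \sigma_{k,\max}]$, and hence $B_k$ has eigenvalues in $[1 - \mu_k \sigma_{k,\max},\, 1 - \mu_k \sigma_{k,\min}]$.

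Because $B_k$ is symmetric, its spectral norm equals the largest eigenvalue magnitude, which over this interval is exactly $\gamma_k = \max\{|1-\mu_k\sigma_{k,\max}|,\,|1-\mu_k\sigma_{k,\min}|\}$. Thus $\|[T_G(x)-T_G(y)]_k\|^2 \le \gamma_k^2 \|x_k - y_k\|^2$, which stacked over $k$ gives $P[T_G(x)-T_G(y)] \preceq \Gamma^2 P[x-y]$ with $\Gamma = \diag\{\gamma_1,\ldots,\gamma_N\}$. The delicate points I would watch are the symmetry of the averaged Hessian $H_{l,k}$ (so that the two-sided eigenvalue bound upgrades from a bound on a quadratic form to a genuine spectral-norm bound) and the fact that this eigenvalue bound on $\sum_l c_{lk} H_{l,k}$ survives the affine shift $I_M - \mu_k(\cdot)$ to yield the tight constant $\gamma_k$ rather than a looser estimate.
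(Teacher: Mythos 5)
Your proposal is correct and follows essentially the same route as the paper: Jensen's inequality for the convexity and $T_A$ relations, orthogonality expansion for additivity, and the integral mean-value theorem plus the two-sided Hessian bound for the $T_G$ relation. The only cosmetic difference is that you prove the spectral-norm bound $\|I_M - \mu_k \sum_l c_{lk} H_{l,k}\| \le \gamma_k$ inline via the eigenvalue-interval argument, whereas the paper outsources exactly that statement to Lemma \ref{Lemma:H_lki_bounds} (cited from \cite{chen2011TSPdiffopt}); your symmetry observation is precisely what makes that cited bound valid.
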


\subsection{Transient Analysis}
\label{Sec:TransientAnalysis}
Using the operator representation developed above,
we now analyze the transient behavior of the diffusion algorithm
\eqref{Equ:PerformanceAnalysis:Diffusion_General_Noisy}--\eqref{Equ:PerformanceAnalysis:Diffusion_General2_Noisy}.
From Fig. \ref{Fig:Fig_DiffusionOperator:OperatorDiffusion} and the previous discussion,
we know that the stochastic recursion $\bm{w}_i = \widehat{\bm{T}}_d(\bm{w}_{i-1})$
is a perturbed version of the noise-free recursion
$w_i = T_d(w_{i-1})$. Therefore, we first study the convergence of the noise free
recursion, and then analyze the effect of gradient perturbation on the stochastic recursion.

Intuitively, if recursion $w_{i}=T_d(w_{i-1})$ converges,
then it should converge to a vector $w_{\infty}$ that satisfies
	\begin{align}
		\label{Equ:PerformanceAnalysis:FixedPoint_NoiseFreeRecursion}
		w_{\infty}	=	T_d(w_{\infty})
	\end{align}
In other words, the vector $w_{\infty}$ should be a \emph{fixed point}
of the operator $T_d(\cdot)$\cite[p.299]{kreyszig1989introductory}.
We need to answer four questions pertaining to the fixed point.
First, does the fixed point exist? Second, is it unique?
Third, under which condition does the recursion
$w_{i}=T_d(w_{i-1})$ converge to the fixed point?
Fourth, how far is the fixed point $w_{\infty}$ away from the minimizer $w^o$ of \eqref{Equ:ProblemFormulation:J_glob_original}?
We answer the first two questions using the Banach Fixed Point Theorem (Contraction Theorem)
\cite[pp.2--9, pp.299--300]{kreyszig1989introductory}.
Afterwards, we study convergence under gradient perturbation.
The last question will be considered in the next subsection.
	\begin{definition}[Metric Space]
    	\label{Def:MetricSpace}
        A set $X$, whose elements we shall call points, is said to be a metric space if
        we can associate a real number $d(p,q)$ with  any two points $p$ and $q$ of $X$, such that
            \begin{enumerate}
                \item[(a)]
                    $d(p,q)>0$ if $p\neq q$; $d(p,p)=0$;
                \item[(b)]
                    $d(p,q)=d(q,p)$;
                \item[(c)]
                    $d(p,q)\le d(p,r) + d(r,q)$, for any $r \in X$.
            \end{enumerate}
        Any function $d(p,q)$ with these three properties is called a distance function, or
        a metric, and we denote a metric space $X$ with distance $d(\cdot,\cdot)$ as $(X,d)$.
        \hfill\qed
    \end{definition}
    \begin{definition}[Contraction]
        \label{Def:Contraction}
        Let $(X,d)$ be a metric space. A mapping $T: X \longrightarrow X$ is called a
        contraction on $X$ if there is a positive real number $\delta < 1$ such that
        $d(T(x),T(y))    \le \delta \cdot d(x,y)$ for all $x,y \in X$
    \end{definition}
    \begin{lemma}[Banach Fixed Point Theorem\cite{kreyszig1989introductory}]
        \label{Thm:BanachFixedPoint}
        Consider a metric space $(X,d)$, where $X \neq \emptyset$. Suppose that $X$ is complete%
        \footnote{A metric space $(X,d)$ is complete if any of its Cauchy sequences converges
        to a point in the space;
        a sequence $\{x_n\}$ is Cauchy in $(X,d)$ if $\forall \epsilon>0$,
        there exists $N$ such that
        $d(x_n,x_m)<\epsilon$ for all $n,\; m>N$.}
        and let $T:X\rightarrow X$ be a contraction. Then, $T$ has precisely one fixed point.
        \hfill\qed
    \end{lemma}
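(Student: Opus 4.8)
The plan is to prove existence and uniqueness separately, following the classical Picard-iteration argument. For existence, I would fix an arbitrary starting point $x_0 \in X$ (available since $X \neq \emptyset$) and define the iterates $x_{n+1} \defeq T(x_n)$, which remain in $X$ because $T$ maps $X$ into itself. The first task is to show that $\{x_n\}$ is a Cauchy sequence. Applying the contraction property repeatedly gives the one-step geometric bound $d(x_{n+1},x_n) \le \delta^n\, d(x_1,x_0)$ by induction on $n$; then for any $m>n$ the triangle inequality together with this bound yields
\begin{align}
d(x_m,x_n) \le \sum_{j=n}^{m-1} d(x_{j+1},x_j) \le \frac{\delta^n}{1-\delta}\, d(x_1,x_0),
\end{align}
where I sum the geometric series and use $\delta<1$. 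Since $\delta^n \to 0$, the right-hand side can be made arbitrarily small, so $\{x_n\}$ is Cauchy, and by completeness of $(X,d)$ it converges to some limit $x^\star \in X$.

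Next I would verify that this limit is a fixed point. The contraction property makes $T$ (Lipschitz-)continuous, so for any $n$ the triangle inequality together with $x_{n+1}=T(x_n)$ gives
\begin{align}
d(T(x^\star),x^\star) \le d(T(x^\star),T(x_n)) + d(x_{n+1},x^\star) \le \delta\, d(x^\star,x_n) + d(x_{n+1},x^\star).
\end{align}
Both terms on the right tend to zero as $n\to\infty$ because $x_n \to x^\star$, forcing $d(T(x^\star),x^\star)=0$ and hence $T(x^\star)=x^\star$. For uniqueness, I would suppose $x$ and $y$ are both fixed points and apply the contraction inequality directly: $d(x,y)=d(T(x),T(y)) \le \delta\, d(x,y)$, so that $(1-\delta)\,d(x,y)\le 0$. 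Since $1-\delta>0$ and $d(x,y)\ge 0$, this forces $d(x,y)=0$, i.e. $x=y$.

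The only step requiring genuine care is the Cauchy estimate: one must bound the telescoping distance $d(x_m,x_n)$ through the convergent geometric series above rather than attempting a direct single-step contraction bound, since $x_m$ and $x_n$ are not related by one application of $T$. The continuity argument and the uniqueness step are then immediate consequences of the contraction inequality, so this is where I would expect essentially all of the (modest) content of the proof to reside. As a byproduct, the estimate also furnishes the explicit convergence rate $d(x_n,x^\star)\le \tfrac{\delta^n}{1-\delta}\,d(x_1,x_0)$, which will be useful later when quantifying how fast the noise-free recursion $w_i=T_d(w_{i-1})$ approaches its fixed point.
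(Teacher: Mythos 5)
Your proof is correct and is precisely the classical Picard-iteration argument; the paper does not reproduce a proof of this lemma at all but simply cites Kreyszig \cite{kreyszig1989introductory}, where exactly this argument (geometric one-step bound, telescoping Cauchy estimate, continuity of $T$, and the uniqueness inequality) is given. Nothing is missing, and your closing remark about the convergence rate $d(x_n,x^\star)\le \delta^n d(x_1,x_0)/(1-\delta)$ is a valid bonus consistent with how the paper later uses geometric convergence of the noise-free recursion.
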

As long as we can prove that the diffusion operator $T_d(\cdot)$ is a contraction, i.e.,
for any two points $x,y \in \mb{R}^{MN}$, after we apply the operator $T_d(\cdot)$,
the distance between $T_d(x)$ and $T_d(y)$
scales down by a scalar that is uniformly bounded away from one, then
the fixed point $w_{\infty}$ defined in \eqref{Equ:PerformanceAnalysis:FixedPoint_NoiseFreeRecursion}
exists and is unique. We now proceed to show that $T_d(\cdot)$ is a contraction operator in
$X=\mb{R}^{MN}$ when the step-size parameters $\{\mu_k\}$ satisfy certain conditions.
    \begin{theorem}[Fixed Point]
    		\label{Thm:ExistenceUniqueFixedPoint}
    		Suppose the step-size parameters $\{\mu_k\}$ satisfy the following conditions 
    			\begin{align}
				\label{Equ:PerformanceAnalysis:Contraction_StepSize}
					0	<	\mu_k	<	\frac{2}{\sigma_{k,\max}},
		        \qquad
		        k=1,2,\ldots,N
			\end{align}
		Then, there exists a unique fixed point $w_{\infty}$ for the unperturbed diffusion
		operator $T_d(\cdot)$ in \eqref{Equ:Def:T_d}.
    \end{theorem}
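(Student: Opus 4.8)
The plan is to equip $X=\mathbb{R}^{MN}$ with the block-maximum-norm metric $d(x,y)\defeq\|x-y\|_{b,\infty}$ and to invoke the Banach Fixed Point Theorem (Lemma~\ref{Thm:BanachFixedPoint}). Since $\mathbb{R}^{MN}$ is finite-dimensional it is complete under this (indeed any) norm, so $(X,d)$ already meets the completeness hypothesis of Lemma~\ref{Thm:BanachFixedPoint}. The entire problem therefore reduces to showing that the cascade $T_d(\cdot)=T_{A_2}\circ T_G\circ T_{A_1}(\cdot)$ is a contraction on $X$ with some factor $\delta<1$ whenever the step-sizes obey \eqref{Equ:PerformanceAnalysis:Contraction_StepSize}.

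The core step is to track how the power operator $P[\cdot]$ propagates through the three-operator cascade. First I would use the linearity of the combination operators (Property~1 of Lemma~\ref{Lemma:PropertiesOperator}) to write $T_d(x)-T_d(y)=T_{A_2}\big(T_G(T_{A_1}(x))-T_G(T_{A_1}(y))\big)$. Applying the variance relations \eqref{Equ:Properties:PX_TA}--\eqref{Equ:Properties:PX_TG}, the preservation-of-inequality property (Property~8, which applies because $A_2^T$, $\Gamma^2$, and $A_1^T$ have nonnegative entries), and the linearity of $T_{A_1}$, I expect to obtain the chain
\begin{align}
	P[T_d(x)-T_d(y)] &\preceq A_2^T\,P\big[T_G(T_{A_1}(x))-T_G(T_{A_1}(y))\big] \nonumber \\
	&\preceq A_2^T\,\Gamma^2\,P[T_{A_1}(x-y)] \nonumber \\
	&\preceq A_2^T\,\Gamma^2\,A_1^T\,P[x-y]. \nonumber
\end{align}
Here each $\preceq$ is justified by first bounding the inner term and then left-multiplying by a nonnegative matrix, so that Property~8 preserves the inequality at every stage.

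Next I would take the $\infty$-norm of both sides. Because all vectors involved are entrywise nonnegative (Property~2) and the $\infty$-norm is monotone on the nonnegative orthant, the entrywise inequality passes to the norms; by the block-maximum-norm identity \eqref{Equ:Properties:PX_BMNorm} the left-hand side equals $\|T_d(x)-T_d(y)\|_{b,\infty}^2$. On the right-hand side I would use submultiplicativity of the induced matrix $\infty$-norm, giving $\|A_2^T\Gamma^2 A_1^T P[x-y]\|_\infty\le\|A_2^T\|_\infty\,\|\Gamma^2\|_\infty\,\|A_1^T\|_\infty\,\|x-y\|_{b,\infty}^2$. The key simplifications are that $A_1,A_2$ are left-stochastic, so $A_1^T,A_2^T$ have unit row sums and hence $\|A_1^T\|_\infty=\|A_2^T\|_\infty=1$, while $\Gamma^2=\diag\{\gamma_1^2,\ldots,\gamma_N^2\}$ gives $\|\Gamma^2\|_\infty=\max_k\gamma_k^2$. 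Taking square roots then yields $\|T_d(x)-T_d(y)\|_{b,\infty}\le(\max_k\gamma_k)\,\|x-y\|_{b,\infty}$, so the contraction factor is $\delta=\max_k\gamma_k$.

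Finally I would verify $\delta<1$. From \eqref{Equ:PerformanceAnalysis:gamma_k_def}, $\gamma_k<1$ is equivalent to $|1-\mu_k\sigma_{k,\max}|<1$ and $|1-\mu_k\sigma_{k,\min}|<1$, i.e.\ $0<\mu_k<2/\sigma_{k,\max}$ and $0<\mu_k<2/\sigma_{k,\min}$; since $\sigma_{k,\max}\ge\sigma_{k,\min}>0$ (positivity following from $\sum_l c_{lk}\lambda_{l,\min}>0$ in Assumption~\ref{Assumption:Hessian}), the binding constraint is exactly \eqref{Equ:PerformanceAnalysis:Contraction_StepSize}. Hence $\delta<1$, $T_d$ is a contraction, and Lemma~\ref{Thm:BanachFixedPoint} delivers the unique fixed point $w_\infty$. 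I expect the main obstacle to be the careful bookkeeping in propagating the entrywise $P[\cdot]$-inequalities through the cascade and then transferring the resulting vector inequality to the scalar $\infty$-norm inequality; everything after that (computing $\|A_i^T\|_\infty$ and converting $\gamma_k<1$ into the step-size bound) is routine.
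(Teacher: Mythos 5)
Your proposal is correct and follows essentially the same route as the paper's own proof: propagate the entrywise bound $P[T_d(x)-T_d(y)]\preceq A_2^T\Gamma^2A_1^T\,P[x-y]$ through the cascade via the variance relations, pass to the $\infty$-norm using $\|A_1^T\|_\infty=\|A_2^T\|_\infty=1$ and the block-maximum-norm identity to get the contraction factor $\max_k\gamma_k$, and conclude with the Banach fixed point theorem. Your added remarks on completeness of $\mathbb{R}^{MN}$ and on $\sigma_{k,\max}\ge\sigma_{k,\min}>0$ making \eqref{Equ:PerformanceAnalysis:Contraction_StepSize} the binding constraint are details the paper leaves implicit, but the argument is the same.
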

    \begin{proof}
{
    		Let $x=\col\{x_1,\ldots,x_N\} \in \mb{R}^{MN \times 1}$ be
			formed by stacking $M \times 1$ vectors $x_1,\ldots,x_N$
			on top of each other. Similarly, let $y=\col\{y_1,\ldots,y_N\}$.
			The distance function $d(x,y)$ that we will use is induced from the block maximum
			norm \eqref{Equ:Properties:PX_BMNorm}:
					$d(x,y)	=	\|x-y\|_{b,\infty}	=	\max_{1 \le k \le N} \|x_k-y_k\|$.
			From the definition of the diffusion operator $T_d(\cdot)$ in \eqref{Equ:Def:T_d},
			we have
			{\allowdisplaybreaks
				\begin{align}
					P[T_d(x)-T_d(y)]	
									&\overset{(a)}{=}		
											P\big[
												T_{A_2}
												\big(
													T_G \circ T_{A_1}(x)
													-
													T_G \circ T_{A_1}(y)
												\big)
											\big]
											\nonumber\\
									&\overset{(b)}{\preceq}
											A_2^T
											P\big[
												T_G \circ T_{A_1}(x)
												-
												T_G \circ T_{A_1}(y)
											\big]
											\nonumber\\
					\label{Equ:PerformanceAnalysis:Contraction_intermediate1}
									&\overset{(c)}{\preceq}	
											A_2^T \Gamma^2
											P[T_{A_1}(x) - T_{A_1}(y)]
											\nonumber\\
									&\overset{(d)}{=}		
											A_2^T \Gamma^2 P[T_{A_1}(x-y)]
											\nonumber\\
									&\overset{(e)}{\preceq}	
											A_2^T \Gamma^2 A_1^T P[x-y]
				\end{align}
			}%
			where steps (a) and (d) are because of the linearity of $T_{A_1}(\cdot)$ and $T_{A_2}(\cdot)$,
			steps (b) and (e) are because of the variance relation property \eqref{Equ:Properties:PX_TA},
			and step (c) is due to the variance relation property \eqref{Equ:Properties:PX_TG}.
			Taking the $\infty-$norm of both sides of \eqref{Equ:PerformanceAnalysis:Contraction_intermediate1},
			we have
				\begin{align}
					\label{Equ:PerformanceAnalysis:Contraction_intermediate2}
					\|P[T_d(x)-T_d(y)]\|_{\infty}	&\le	\|A_2^T \Gamma^2 A_1^T\|_{\infty}
												\cdot \|P[x-y]\|_{\infty}
												\nonumber\\
											&\le	\|\Gamma\|_{\infty}^2
												\cdot
												\|P[x-y]\|_{\infty}
				\end{align}
			where, in the second inequality, we used the fact that $\|A_1^T\|_{\infty}=\|A_2^T\|_{\infty}=1$
			since $A_1^T$ and $A_2^T$ are right-stochastic matrices.
			Using property \eqref{Equ:Properties:PX_BMNorm},
			we can conclude from \eqref{Equ:PerformanceAnalysis:Contraction_intermediate2} that:
					$\|T_d(x)-T_d(y)\|_{b,\infty}	\le	\|\Gamma\|_{\infty} \cdot \|x-y\|_{b,\infty}$.
			Therefore, the operator $T_d(\cdot)$ is a contraction if $\|\Gamma\|_{\infty}<1$,
			which, by substituting \eqref{Equ:Properties:Gamma}--\eqref{Equ:PerformanceAnalysis:gamma_k_def},
			becomes
				\begin{align}
					|1-\mu_k\sigma_{k,\max}|<1, \quad |1-\mu_k\sigma_{k,\min}|<1, 
					\quad k=1,\ldots,N
					\nonumber
				\end{align}
			and we arrive at the condition \eqref{Equ:PerformanceAnalysis:Contraction_StepSize} on the step-sizes
			In other words, if condition \eqref{Equ:PerformanceAnalysis:Contraction_StepSize} holds
			for each $k=1,\ldots,N$, then $T_d(\cdot)$ is a contraction operator.
			By Lemma \ref{Thm:BanachFixedPoint}, the operator $T_d(\cdot)$
			will have a unique fixed point $w_{\infty}$ that satisfies equation
			\eqref{Equ:PerformanceAnalysis:FixedPoint_NoiseFreeRecursion}.
}
    \end{proof}

Given the existence and uniqueness of the fixed point, the third question to answer
is if recursion $w_{i}=T_d(w_{i-1})$
converges to this fixed point. The answer is affimative under
\eqref{Equ:PerformanceAnalysis:Contraction_StepSize}.
However, we are not going to study this question separately. Instead, we will analyze
the convergence of the more demanding noisy recursion \eqref{Equ:Def:Diffusion_OperatorForm:Noisy}.
Therefore, we now study how fast and how close the successive estimators $\{\bm{w}_i\}$ generated
by recursion \eqref{Equ:Def:Diffusion_OperatorForm:Noisy} approach $w_{\infty}$.
Once this issue is addressed, we will then examine how close $w_{\infty}$ is to the desired $w^o$. 
Introduce the following mean-square-perturbation (MSP) vector at time $i$:
	\begin{align}
		\label{Equ:PerformanceAnalysis:MSP_i}
		\mathrm{MSP}_i	&\defeq	\E P[\bm{w}_i - w_{\infty}]
	\end{align}
The $k$-th entry of $\mathrm{MSP}_i$ characterizes how far away
the estimate $\bm{w}_{k,i}$ at node $k$ and time $i$ is from $w_{k,\infty}$ in the mean-square
sense. To study the closeness of $\bm{w}_i$ to $w_\infty$, we shall study how the
quantity $\mathrm{MSP}_i$ evolves over time.
By \eqref{Equ:Def:Diffusion_OperatorForm:Noisy},
\eqref{Equ:PerformanceAnalysis:FixedPoint_NoiseFreeRecursion}
and the definitions of $\widehat{\bm{T}}_d(\cdot)$ and $T_d(\cdot)$ in
\eqref{Equ:Def:T_d_hat} and \eqref{Equ:Def:T_d}, we obtain
{\allowdisplaybreaks
	\begin{align}
		\mathrm{MSP}_i			
						&=		\E P[\bm{w}_i - w_{\infty}]	
								\nonumber\\
						&\quad=		\E
								P\big[
									T_{A_2} \circ \widehat{\bm{T}}_G \circ T_{A_1}(\bm{w}_{i-1})
									-
									T_{A_2} \circ T_G \circ T_{A_1}(w_{\infty})
								\big]
								\nonumber\\
						&\quad\overset{(a)}{=}
								\E
								P\big[
									T_{A_2}
									\big(
										\widehat{\bm{T}}_G \circ T_{A_1}(\bm{w}_{i-1})
										-
										T_G \circ T_{A_1}(w_{\infty})
									\big)
								\big]
								\nonumber\\
						&\quad\overset{(b)}{\preceq}
								A_2^T
								\E
								P\big[
										\widehat{\bm{T}}_G \circ T_{A_1}(\bm{w}_{i-1})
										-
										T_G \circ T_{A_1}(w_{\infty})
								\big]
								\nonumber\\
						&\quad\overset{(c)}{=}
								A_2^T
								\E
								P\big[
										T_G\big(T_{A_1}(\bm{w}_{i-1})\big)
										-
										T_G\big(T_{A_1}(w_{\infty})\big)
										+\bm{v}\big(T_{A_1}(\bm{w}_{i-1})\big)
								\big]
								\nonumber\\
						&\quad\overset{(d)}{=}
								A_2^T
								\big\{
									\E
									P\big[
										T_G\big(T_{A_1}(\bm{w}_{i-1})\big)
										-
										T_G\big(T_{A_1}(w_{\infty})\big)
									\big]
									+
									\E
									P\big[
										\bm{v}\big(T_{A_1}(\bm{w}_{i-1})\big)
									\big]
								\big\}
								\nonumber\\
						&\quad\overset{(e)}{\preceq}
								A_2^T
								\Gamma^2
								\E
								P[
									T_{A_1}(\bm{w}_{i-1})
									-
									T_{A_1}(w_{\infty})
								]
								+
								A_2^T
								\E
                                P[
									\bm{v}(T_{A_1}(\bm{w}_{i-1}))
								]
								\nonumber\\
						&\quad\overset{(f)}{\preceq}
								A_2^T \Gamma^2 A_1^T \cdot
								\E P[\bm{w}_{i-1}-w_{\infty}]
								+
								A_2^T \E P[\bm{v}(T_{A_1}(\bm{w}_{i-1}))]
								\nonumber\\
		\label{Equ:PerformanceAnalysis:MSP_intermediate1}
						&\quad=
								A_2^T \Gamma^2 A_1^T \cdot
								\mathrm{MSP}_{i-1}
								+
								A_2^T \E P[\bm{v}(T_{A_1}(\bm{w}_{i-1}))]
	\end{align}
}%
where step (a) is by the linearity of $T_{A_1}(\cdot)$, steps (b) and (f) are by property
\eqref{Equ:Properties:PX_TA}, step (c) is by the substitution of
\eqref{Equ:Def:T_G_hat}, step (d) is by Property 5 in Lemma \ref{Lemma:PropertiesOperator}
and assumption \eqref{Assumption:GradientNoise:ZeroMean_Uncorrelated},
and step (e) is by \eqref{Equ:Properties:PX_TG}.
To proceed with the analysis, we establish the following lemma to bound
the second term in \eqref{Equ:PerformanceAnalysis:MSP_intermediate1}.
	\begin{lemma}[Bound on Gradient Perturbation]
		\label{Lemma:Bound_on_GradientPerturbation}
		It holds that
			\begin{align}
				\label{Equ:Lemma:Bound_on_GradientPerturbation}
				\E P[\bm{v}(T_{A_1}(\bm{w}_{i-1}))]
					&\preceq	4\alpha \lambda_{\max}^2 \|C\|_1^2
								\!\cdot\!
								\Omega^2
								A_1^T
								\!\cdot\!
								\E P[\bm{w}_{i\!-\!1}\!-\!w_{\infty}]
								\!+\!\|C\|_1^2 \Omega^2 b_v
			\end{align}
		where 
			\begin{align}
				\label{Equ:Lemma:BGP:lambda_max_def}
				\lambda_{\max}		\;\defeq\;&		\max_{1 \le k \le N} \lambda_{k,\max}	\\
				b_v				\;\defeq\;&		4\alpha \lambda_{\max}^2 A_1^T
											P[w_{\infty}-\mathds{1}_N \otimes w^o]\
											\nonumber\\
				\label{Equ:Lemma:BGP:xi_v_def}
											&+
											\max_{1 \le k \le N}
											\{
												2\alpha \|\nabla_w J_k(w^o)\|^2
												+
												\sigma_v^2
											\}
											\\
                \label{Equ:Lemma:BGP:Omega_def}
				\Omega			\;\defeq\;&		\diag
											\{
												\mu_1, \ldots, \mu_N
											\}
			\end{align}
	\end{lemma}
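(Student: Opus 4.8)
The plan is to work entry by entry. By the definitions of $\bm{v}(\cdot)$ in \eqref{Equ:PerformanceAnalysis:v_x}, of the power operator $P[\cdot]$, and of $T_{A_1}(\cdot)$, the $k$-th entry of $\E P[\bm{v}(T_{A_1}(\bm{w}_{i-1}))]$ equals $\mu_k^2\,\E\big\|\sum_{l=1}^N c_{lk}\bm{v}_l(\bm{\phi}_{k,i-1})\big\|^2$, where $\bm{\phi}_{k,i-1}=\sum_l a_{1,lk}\bm{w}_{l,i-1}$ is the $k$-th block of $T_{A_1}(\bm{w}_{i-1})$. The whole argument then amounts to bounding this scalar by the $k$-th entry of the claimed right-hand side; the factor $\mu_k^2$ is exactly the diagonal scaling that is collected into $\Omega^2$ at the end.

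First I would dispose of the inner sum over $l$ using convexity of $\|\cdot\|^2$. Writing $s_k\defeq\sum_l c_{lk}$ for the $k$-th column sum of $C$ (which is \emph{not} normalized, since $C$ is only right-stochastic) and normalizing the coefficients $c_{lk}/s_k$ into convex weights, Jensen's inequality gives $\big\|\sum_l c_{lk}\bm{v}_l\big\|^2\le s_k\sum_l c_{lk}\|\bm{v}_l\|^2\le \|C\|_1\sum_l c_{lk}\|\bm{v}_l\|^2$, using $s_k\le\|C\|_1$. Taking expectations and invoking the gradient-noise bound \eqref{Assumption:GradientNoise:Norm} on each term, $\E\|\bm{v}_l(\bm{\phi}_{k,i-1})\|^2\le\alpha\,\E\|\nabla_w J_l(\bm{\phi}_{k,i-1})\|^2+\sigma_v^2$ (conditioning on $\mathcal{F}_{i-1}$, under which $\bm{\phi}_{k,i-1}$ is measurable, then taking total expectation), yields a bound of the form $\mu_k^2\|C\|_1\big[\alpha\sum_l c_{lk}\E\|\nabla_w J_l(\bm{\phi}_{k,i-1})\|^2+\sigma_v^2 s_k\big]$, where a second factor $\|C\|_1$ is produced by $s_k\le\|C\|_1$.

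Next I would linearize each gradient about the minimizer $w^o$. By the bounded-Hessian Assumption \ref{Assumption:Hessian} and the mean-value theorem, $\|\nabla_w J_l(\bm{\phi}_{k,i-1})-\nabla_w J_l(w^o)\|\le\lambda_{\max}\|\bm{\phi}_{k,i-1}-w^o\|$, so that $\|\nabla_w J_l(\bm{\phi}_{k,i-1})\|^2\le 2\lambda_{\max}^2\|\bm{\phi}_{k,i-1}-w^o\|^2+2\|\nabla_w J_l(w^o)\|^2$ after the elementary split $\|a+b\|^2\le 2\|a\|^2+2\|b\|^2$. Crucially, the first term no longer depends on $l$, so summing against $c_{lk}$ again produces $\sum_l c_{lk}\le\|C\|_1$, while the residual $\sum_l c_{lk}\|\nabla_w J_l(w^o)\|^2$ is bounded by $\|C\|_1\max_l\|\nabla_w J_l(w^o)\|^2$. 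I would then split $\bm{\phi}_{k,i-1}-w^o=(\bm{\phi}_{k,i-1}-\phi_{k,\infty})+(\phi_{k,\infty}-w^o)$, where $\phi_{k,\infty}$ is the $k$-th block of $T_{A_1}(w_\infty)$, apply $\|a+b\|^2\le 2\|a\|^2+2\|b\|^2$ once more, and control the two pieces by the variance relation \eqref{Equ:Properties:PX_TA}: $\E\|\bm{\phi}_{k,i-1}-\phi_{k,\infty}\|^2\le (A_1^T\,\E P[\bm{w}_{i-1}-w_\infty])_k$ and, using that $A_1$ is column-stochastic so that $\phi_{k,\infty}-w^o=\sum_l a_{1,lk}(w_{l,\infty}-w^o)$, $\|\phi_{k,\infty}-w^o\|^2\le (A_1^T P[w_\infty-\mathds{1}_N\otimes w^o])_k$.

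Collecting the pieces reproduces the $k$-th entry of the claim: the dynamic term carries the coefficient $4\alpha\lambda_{\max}^2\|C\|_1^2\mu_k^2$ multiplying $(A_1^T\E P[\bm{w}_{i-1}-w_\infty])_k$, while the bias term $(A_1^T P[w_\infty-\mathds{1}_N\otimes w^o])_k$ together with the constants $2\alpha\|\nabla_w J_k(w^o)\|^2+\sigma_v^2$ assembles into $\|C\|_1^2\mu_k^2(b_v)_k$; stacking over $k$ and factoring out $\Omega^2=\diag\{\mu_1^2,\ldots,\mu_N^2\}$ then gives the stated inequality. I expect the main obstacle to be bookkeeping rather than conceptual: keeping track of \emph{where each factor of $\|C\|_1$ and each factor of $2$ originates} (one $\|C\|_1$ from Jensen, a second from the non-normalized column sum, and the two doublings from the two applications of $\|a+b\|^2\le 2\|a\|^2+2\|b\|^2$), and correctly routing the two halves of the $\bm{\phi}_{k,i-1}-w^o$ split into the dynamic term $A_1^T\E P[\bm{w}_{i-1}-w_\infty]$ versus the static bias inside $b_v$. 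A secondary point requiring care is the measurability issue in applying \eqref{Assumption:GradientNoise:Norm} to the random argument $\bm{\phi}_{k,i-1}$, which is handled cleanly by conditioning on $\mathcal{F}_{i-1}$ first.
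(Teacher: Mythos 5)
Your proposal is correct and follows essentially the same route as the paper's proof: Jensen's inequality over the column sums of $C$ (yielding the two factors of $\|C\|_1$), the gradient-noise bound \eqref{Assumption:GradientNoise:Norm}, linearization of $\nabla_w J_l$ about $w^o$ via the mean-value theorem and the Hessian bound, a split relative to $w_\infty$ producing the factor $4 = 2\times 2$, and the variance relation \eqref{Equ:Properties:PX_TA} for $T_{A_1}$. The only cosmetic difference is that you perform the $w_\infty$-split at the level of the blocks $\bm{\phi}_{k,i-1}-w^o$ before invoking the variance relation, whereas the paper first passes to $A_1^T \E P[\bm{w}_{i-1}-\mathds{1}_N\otimes w^o]$ and splits afterwards; the two orderings commute and yield identical constants.
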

	\begin{proof}
{
		By the definition of $\bm{v}(\bm{x})$ in \eqref{Equ:PerformanceAnalysis:v_x}
		with $\bm{x}=T_{A_1}(\bm{w}_{i-1})$ being a random vector, we get
			\begin{align}
				\label{Equ:Appendix:P_vx_intermediate1}
				\E P[\bm{v}(\bm{x})]		&=		
													\begin{bmatrix}
		            									\mu_1^2
		            									\E\big\|
		            										\sum_{l=1}^N c_{l1} \bm{v}_l(\bm{x}_1),				
		            									\big\|^2\\
		            									\vdots\\
		            									\mu_N^2
		            									\E\big\|
		            										\sum_{l=1}^N c_{lN} \bm{v}_l(\bm{x}_N)
		            									\big\|^2
													\end{bmatrix}
			\end{align}
		For each block in \eqref{Equ:Appendix:P_vx_intermediate1}, using Jensen's inequality,
		we have
		{\allowdisplaybreaks
			\begin{align}
				\E\big\|
					\sum_{l=1}^N &c_{lk} \bm{v}_l(\bm{x}_k)
				\big\|^2
									\nonumber\\
						&=			\big(\sum_{l=1}^N c_{lk}\big)^2
									\cdot
									\E\big\|
										\sum_{l=1}^N
										\frac{c_{lk}}{\sum_{l=1}^N c_{lk}} \bm{v}_l(\bm{x}_k)
									\big\|^2
									\nonumber\\
						&\le			\big(\sum_{l=1}^N c_{lk}\big)^2
									\cdot
									\sum_{l=1}^N
										\frac{c_{lk}}{\sum_{l=1}^N c_{lk}}
									\E\|
										\bm{v}_l(\bm{x}_k)
									\|^2
									\nonumber\\
				\label{Equ:Appendix:E_norm_sum_v_x_bound_intermediate1}
						&\le			\|C\|_1
									\sum_{l=1}^N
									c_{lk}
									\big[
										\alpha \E\|\nabla_w J_l(\bm{x}_k)\|^2 + \sigma_v^2
									\big]
			\end{align}
		}%
		where $\|\cdot\|_1$ denotes the maximum absolute column sum, and in the last step,
		we used \eqref{Assumption:GradientNoise:Norm}. Using
		\eqref{Equ:Appendix:MeanValueTheorem},
			\begin{align}
				\nabla_w J_l&(\bm{x}_k)	
		                        =	\nabla_w J_l(w^o)
									+
									\big[
										\int_{0}^{1}
										\nabla_w^2
										J_l\big(
											w^o + t(\bm{x}_k-w^o)
										\big)
										dt
									\big]
									(\bm{x}_k - w^o)
			\end{align}
		From \eqref{Equ:ConvergenceAnalysis:H_lki_Bounds} and the norm
		inequality $\|x+y\|^2 \le 2\|x\|^2 + 2\|y\|^2$,
		we obtain
			\begin{align}
				\|\nabla_w J_l(\bm{x}_k)\|^2	
								&\le	2\|\nabla_w J_l(w^o)\|^2
									\!+\!
									2\lambda_{l,\max}^2
									\!\cdot\!
									\|\bm{x}_k - w^o\|^2
									\nonumber\\
				\label{Equ:Appendix:Bound_gradient_Jl_xk}
								&\le	2\|\nabla_w J_l(w^o)\|^2
									\!+\!
									2\lambda_{\max}^2
									\!\cdot\!
									\|\bm{x}_k - w^o\|^2
			\end{align}
		Substituting \eqref{Equ:Appendix:Bound_gradient_Jl_xk} into
		\eqref{Equ:Appendix:E_norm_sum_v_x_bound_intermediate1}, we obtain
			\begin{align}
				\E\big\|
					\sum_{l=1}^N c_{lk} \bm{v}_l(\bm{x}_k)
				\big\|^2		
							&\le	\|C\|_1\!
									\sum_{l=1}^N\!
									c_{lk}
									\big[
										2\alpha \lambda_{\max}^2
										\E\|\bm{x}_k \!-\! w^o\|^2
										\!+\!
										2\alpha \|\nabla_w J_l(w^o)\|^2 \!+\! \sigma_v^2
									\big]
									\nonumber\\
				\label{Equ:Appendix:E_norm_sum_v_x_bound_intermediate2}
							&\le		2\alpha \lambda_{\max}^2 \|C\|_1^2								
									\cdot\E\|\bm{x}_k - w^o\|^2
									+
									\|C\|_1^2
									\cdot
									\overline{\sigma}_v^2
			\end{align}
		where $\overline{\sigma}_v^2
					\defeq	\displaystyle\max_{1 \le l \le N}
							\{2\alpha \|\nabla_w J_l(w^o)\|^2 + \sigma_v^2\}$.
		Substituting \eqref{Equ:Appendix:E_norm_sum_v_x_bound_intermediate2}
		and $\bm{x}=T_{A_1}(\bm{w}_{i-1})$ into
		\eqref{Equ:Appendix:P_vx_intermediate1} leads to
		{\allowdisplaybreaks
			\begin{align}
				\E P[\bm{v}(T_{A_1}(\bm{w}_{i-1}))]	
								&\preceq		\Omega^2
											\big\{
												2\alpha\|C\|_1^2
												\lambda_{\max}^2
												\cdot
												\E
												P[T_{A_1}(\bm{w}_{i-1})-\mathds{1}_N \otimes w^o]+
												\|C\|_1^2
												\overline{\sigma}_v^2 \mathds{1}_N
											\big\}
											\nonumber\\
								&\overset{(a)}{=}
											\Omega^2
											\big\{
												2\alpha\|C\|_1^2
												\lambda_{\max}^2
												\cdot
												\E
												P\big[
													T_{A_1}(\bm{w}_{i-1})
													-
													T_{A_1}(\mathds{1}_N \otimes w^o)
												\big]
												+
												\|C\|_1^2
												\overline{\sigma}_v^2 \mathds{1}_N
											\big\}
											\nonumber\\
								&\overset{(b)}{=}
											\Omega^2
											\big\{
												2\alpha\|C\|_1^2
												\lambda_{\max}^2
												\cdot
												\E
												P\big[
													T_{A_1}
													\big(
														\bm{w}_{i-1}
														-
														\mathds{1}_N \otimes w^o
													\big)
												\big]
												+
												\|C\|_1^2
												\overline{\sigma}_v^2 \mathds{1}_N
											\big\}
											\nonumber\\
								&\overset{(c)}{\preceq}
											\Omega^2
											\big\{
												2\alpha\|C\|_1^2
												\lambda_{\max}^2
												A_1^T
												\cdot
												\E
												P[
													\bm{w}_{i-1}
													-
													\mathds{1}_N \otimes w^o
												]
												+
												\|C\|_1^2
												\overline{\sigma}_v^2 \mathds{1}_N
											\big\}
											\nonumber\\
								&\overset{(d)}{=}			
											\Omega^2
											\Big\{\!
												2\alpha\|C\|_1^2
												\lambda_{\max}^2
												A_1^T
												\!\cdot\!
												4
												\E
												P\Big[\!
													\frac{\bm{w}_{i-1} 
													\!-\! w_{\infty}}{2}
													\!+\!
													\frac{
														w_{\infty}
														\!-\!
														\mathds{1}_N \otimes w^o
														}{2}
														\!
												\Big]
												+
												\|C\|_1^2
												\overline{\sigma}_v^2 \mathds{1}_N
											\Big\}
											\nonumber\\
								&\overset{(e)}{\preceq}		
											\Omega^2
											\big\{
												2\alpha\|C\|_1^2
												\lambda_{\max}^2
												A_1^T
												\cdot
												\big(
												2
												\E
												P[
													\bm{w}_{i-1} \!-\! w_{\infty}
												]
												\!+\!
												2P[
													w_{\infty}
													\!-\!
													\mathds{1}_N \otimes w^o
												]
												\big)
												\!+\!
												\|C\|_1^2
												\overline{\sigma}_v^2 \mathds{1}_N
											\big\}
											\nonumber\\
								&=			
												4\alpha\|C\|_1^2
												\lambda_{\max}^2
												\!\cdot\!
												\Omega^2
												A_1^T
												\!\cdot\!
												\E
												P[
													\bm{w}_{i-1} \!-\! w_{\infty}
												]
												\!+\!
												\|C\|_1^2\Omega^2
												\!\cdot\!
												b_v
			\end{align}
		}%
		where step (a) is due to the fact that $A_1^T$ is right-stochastic so that
		$T_{A_1}(\mathds{1}_N \otimes w^o)=\mathds{1}_N \otimes w^o$,
		step (b) is because of the linearity of $T_{A_1}(\cdot)$, step (c)
		is due to property \eqref{Equ:Properties:PX_TA},
		step (d) is a consequence of Property 3 of Lemma \ref{Lemma:PropertiesOperator},
		and step (e)
		is due to the convexity property \eqref{Equ:Properties:PX_CvxComb}.
	}
	\end{proof}
\noindent
Substituting \eqref{Equ:Lemma:Bound_on_GradientPerturbation} into
\eqref{Equ:PerformanceAnalysis:MSP_intermediate1},
we obtain
	\begin{align}
		\label{Equ:PerformanceAnalysis:MSP_InequalityRecursion}
		\boxed{
		\mathrm{MSP}_i		\preceq		A_2^T \Gamma_d A_1^T
										\cdot
										\mathrm{MSP}_{i-1}
										+
										\|C\|_1^2 \cdot A_2^T \Omega^2 b_v
		}
	\end{align}
where
	\begin{align}
		\label{Equ:PerformanceAnalysis:Gamma_d_def}
		\Gamma_d		&\defeq	\Gamma^2 + 4 \alpha \lambda_{\max}^2 \|C\|_1^2 \cdot \Omega^2	
	\end{align}
The following theorem gives the stability conditions on the inequality recursion
\eqref{Equ:PerformanceAnalysis:MSP_InequalityRecursion} and derives
both asymptotic and non-asymptotic bounds for MSP.
	\begin{theorem}[Mean-Square Stability and Bounds]
		\label{Thm:MSStability}
		Suppose $A_2^T \Gamma_d A_1^T$ is a stable matrix, i.e.,
		$\rho(A_2^T \Gamma_d A_1^T)<1$. Then, the following
		non-asymptotic bound holds for all $i \ge 0$:
			\begin{align}
				\label{Equ:Thm_MSS:NonAsymptoticBound}
				\mathrm{MSP}_i	\preceq	(A_2^T \Gamma_d A_1^T)^i
										[
											\mathrm{MSP}_0
											-
											\mathrm{MSP}_{\infty}^{\mathrm{ub}}
										]
										+
										\mathrm{MSP}_{\infty}^{\mathrm{ub}}
			\end{align}
		where $\mathrm{MSP}_{\infty}^{\mathrm{ub}}$ is the asymptotic upper bound
		on MSP defined as
			\begin{align}
				\label{Equ:Thm_MSS:MSP_infty_ub_def}
				\mathrm{MSP}_{\infty}^{\mathrm{ub}}
							\defeq		\|C\|_1^2
										(I_N - A_2^T \Gamma_d A_1^T)^{-1}
										A_2^T \Omega^2 b_v
			\end{align}
		And, as $i \rightarrow \infty$, we have the following asymptotic bound
			\begin{align}
				\label{Equ:Thm_MSS:AsymptoticBound}
				\limsup_{i \rightarrow  \infty}	\mathrm{MSP}_i	
								\preceq	\mathrm{MSP}_{\infty}^{\mathrm{ub}}
			\end{align}
		Furthermore, a sufficient condition that guarantees the stability of
		the matrix $A_2^T \Gamma_d A_1^T$ is that
			\begin{align}
				\label{Equ:Thm_MSS:StepSize_Condition}
				0	\!<\!	\mu_k	\!<\!
										\min\Big\{
										\frac{\sigma_{k,\max}}
											{
												\sigma_{k,\max}^2
												\!\!+\!
												4\alpha \lambda_{\max}^2 \|C\|_1^2
											},\;
										\frac{\sigma_{k,\min}}
											{
												\sigma_{k,\min}^2
												\!\!+\!
												4\alpha \lambda_{\max}^2 \|C\|_1^2
											}
									\Big\}
			\end{align}
		for all $k=1, \ldots, N$, where $\sigma_{k,\max}$ and $\sigma_{k,\min}$
		were defined earlier in \eqref{Equ:PerformanceAnalysis:Thm_MSS:sigma_min_def}.
	\end{theorem}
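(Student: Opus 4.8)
The plan is to read the boxed inequality \eqref{Equ:PerformanceAnalysis:MSP_InequalityRecursion} as a first-order linear comparison recursion and unroll it. Abbreviate $F \defeq A_2^T \Gamma_d A_1^T$ and $g \defeq \|C\|_1^2 A_2^T \Omega^2 b_v$; both are entrywise nonnegative because $A_1,A_2,\Gamma_d,\Omega$ have nonnegative entries and $b_v \succeq 0$ by \eqref{Equ:Lemma:BGP:xi_v_def}. The recursion reads $\mathrm{MSP}_i \preceq F\,\mathrm{MSP}_{i-1}+g$. Since $F\succeq 0$, the Preservation-of-Inequality property (Property 8 of Lemma \ref{Lemma:PropertiesOperator}) guarantees that applying $F$ to both sides of any valid $\preceq$ does not reverse it, so iterating the recursion $i$ times gives
\begin{align}
\mathrm{MSP}_i \;\preceq\; F^i\,\mathrm{MSP}_0 + \Big(\sum_{j=0}^{i-1} F^j\Big) g. \nonumber
\end{align}

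First I would evaluate the matrix geometric sum in closed form. The hypothesis $\rho(F)<1$ makes $I_N-F$ invertible, identifies $(I_N-F)^{-1}$ with the convergent Neumann series $\sum_{j\ge 0}F^j$, and validates the finite-sum identity $\sum_{j=0}^{i-1}F^j = (I_N-F)^{-1}(I_N-F^i)$. Because $F$ commutes with $(I_N-F)^{-1}$, this rewrites the right-hand side as an exact equality and turns the bound into
\begin{align}
\mathrm{MSP}_i \;\preceq\; F^i\big[\mathrm{MSP}_0 - \mathrm{MSP}_{\infty}^{\mathrm{ub}}\big] + \mathrm{MSP}_{\infty}^{\mathrm{ub}}, \nonumber
\end{align}
with $\mathrm{MSP}_{\infty}^{\mathrm{ub}} \defeq (I_N-F)^{-1} g$, matching \eqref{Equ:Thm_MSS:MSP_infty_ub_def} and establishing the non-asymptotic bound \eqref{Equ:Thm_MSS:NonAsymptoticBound}. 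The asymptotic bound \eqref{Equ:Thm_MSS:AsymptoticBound} then follows immediately, since $\rho(F)<1$ forces $F^i\to 0$, so the first term vanishes and $\limsup_{i\to\infty}\mathrm{MSP}_i \preceq \mathrm{MSP}_{\infty}^{\mathrm{ub}}$.

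It remains to exhibit a step-size range forcing $\rho(F)<1$, which I would obtain by passing to a submultiplicative norm. Using $\rho(F)\le \|F\|_{\infty} \le \|A_2^T\|_{\infty}\|\Gamma_d\|_{\infty}\|A_1^T\|_{\infty}$ together with the fact (already invoked in the proof of Theorem \ref{Thm:ExistenceUniqueFixedPoint}) that $A_1^T,A_2^T$ are right-stochastic so that $\|A_1^T\|_{\infty}=\|A_2^T\|_{\infty}=1$, the stability requirement collapses to $\|\Gamma_d\|_{\infty}<1$. Since $\Gamma_d$ is diagonal with $k$-th entry $\gamma_k^2 + 4\alpha\lambda_{\max}^2\|C\|_1^2\mu_k^2$ by \eqref{Equ:PerformanceAnalysis:Gamma_d_def}, this decouples into the $N$ scalar conditions $\gamma_k^2 + 4\alpha\lambda_{\max}^2\|C\|_1^2\mu_k^2<1$. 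Substituting $\gamma_k^2=\max\{(1-\mu_k\sigma_{k,\max})^2,(1-\mu_k\sigma_{k,\min})^2\}$ from \eqref{Equ:PerformanceAnalysis:gamma_k_def} splits each condition into two quadratics in $\mu_k$, whose positive solution set is an explicit interval; the displayed interval \eqref{Equ:Thm_MSS:StepSize_Condition} is contained in it (each displayed bound being no larger than the corresponding quadratic root), hence it guarantees $\|\Gamma_d\|_{\infty}<1$ and therefore $\rho(F)<1$.

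I expect the geometric-series and limit steps to be routine; the only care there is to certify each $\preceq$ using the nonnegativity of $F$ and of $(I_N-F)^{-1}=\sum_{j\ge 0}F^j$. The main obstacle is the stability step, and the key idea that makes it tractable is that right-stochasticity of $A_1^T,A_2^T$ collapses $\|F\|_{\infty}$ onto the diagonal quantity $\|\Gamma_d\|_{\infty}$, thereby decoupling an otherwise coupled $N\times N$ spectral condition into $N$ independent scalar quadratic inequalities whose solution is explicit. The $\infty$-norm estimate of $\rho(F)$ is conservative, so the resulting interval is sufficient rather than sharp.
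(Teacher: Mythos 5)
Your proposal is correct and follows essentially the same route as the paper: unroll the comparison recursion \eqref{Equ:PerformanceAnalysis:MSP_InequalityRecursion} using nonnegativity of $A_2^T\Gamma_d A_1^T$, evaluate the geometric sum via $(I_N-F)^{-1}$, and reduce stability to $\|\Gamma_d\|_{\infty}<1$ through right-stochasticity of $A_1^T,A_2^T$, yielding the same per-node quadratic inequalities. You are in fact slightly more careful than the paper at the last step, correctly noting that the exact solution of each quadratic is $\mu_k<2\sigma/(\sigma^2+4\alpha\lambda_{\max}^2\|C\|_1^2)$ so that the displayed interval \eqref{Equ:Thm_MSS:StepSize_Condition} is a (conservative) subset of it and hence sufficient.
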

	\begin{proof}
{
		Iterating inequality \eqref{Equ:PerformanceAnalysis:MSP_InequalityRecursion},
		we obtain
			\begin{align}
				\label{Equ:Appendix:MSP_i_Bound_intermediate}
				\mathrm{MSP}_i	\preceq	(A_2^T \Gamma_d A_1^T)^{i}
										\mathrm{MSP}_0
										+
										\|C\|_1^2
										\cdot
										\big[
											\sum_{j=0}^{i-1}
											(A_2^T \Gamma_d A_1^T)^{j}
										\big]
										A_2^T \Omega^2 b_v								
			\end{align}
		For the second term in \eqref{Equ:Appendix:MSP_i_Bound_intermediate},
		we note that $(I + X + \cdots + X^{i-1}) (I-X) = I - X^i$.
		If $X$ is a stable matrix so that $(I-X)$ is invertible, then 
		it leads to
				$\sum_{j=0}^{i-1} X^j	=	(I-X^i)(I-X)^{-1}$.
		Using this relation and given that the matrix $A_2^T \Gamma_d A_1^T$ is stable,
		we can express \eqref{Equ:Appendix:MSP_i_Bound_intermediate} as
			\begin{align}
				\mathrm{MSP}_i	&\preceq	(A_2^T \Gamma_d A_1^T)^{i}
										\mathrm{MSP}_0
										+
										\|C\|_1^2
										\cdot
										\big[
											I_N
											\!-\!
											(A_2^T \Gamma_d A_1^T)^{i}
										\big]
										(I_N \!-\!  A_2^T \Gamma_d A_1^T)^{-1}
										A_2^T \Omega^2 b_v			
										\nonumber\\
				\label{Equ:Appendix:MSP_i_Bound_final}
								&=		(A_2^T \Gamma_d A_1^T)^i
										[
											\mathrm{MSP}_0
											-
											\mathrm{MSP}_{\infty}^{\mathrm{ub}}
										]
										+
										\mathrm{MSP}_{\infty}^{\mathrm{ub}}
			\end{align}
		Letting $i \rightarrow \infty$ on both sides of the above inequality, we get
				$\displaystyle\limsup_{i \rightarrow \infty} \mathrm{MSP}_i
								\preceq	\mathrm{MSP}_{\infty}^{\mathrm{ub}}$.
		In the last step, we need to show the conditions on the step-sizes $\{\mu_k\}$
		that guarantee stability of the matrix $A_2^T \Gamma_d A_1^T$.
		Note that the spectral radius of a matrix is upper bounded by its matrix norms.
		Therefore,
			\begin{align}
				\rho(A_2^T \Gamma_d A_1^T)	
										&\le	\|A_2^T \Gamma_d A_1^T\|_{\infty}
												\nonumber\\		
										&\le	\|A_2^T\|_{\infty}
												\cdot
												\|\Gamma_d\|_{\infty}
												\cdot
												\|A_1^T\|_{\infty}		
												\nonumber\\
										&=	\|\Gamma_d\|_{\infty}
											\nonumber\\
										&=	\big\|
												\Gamma^2
												+
												4\alpha\lambda_{\max}^2\|C\|_1^2
												\cdot
												\Omega^2
											\big\|_{\infty}
											\nonumber
			\end{align}
		If the right-hand side of the above inequality is strictly less than one, then
		the matrix $A_2^T\Gamma_d A_2^T$ is stable. Using
		\eqref{Equ:Properties:Gamma}--\eqref{Equ:PerformanceAnalysis:gamma_k_def}, this condition
		is satisfied by the following
		quadratic inequalities on $\mu_k$ :
			\begin{align}
					(1-\mu_k\sigma_{k,\max})^2 + \mu_k^2 \cdot 4\alpha \lambda_{\max}^2 \|C\|_1^2 < 1
					\\
					(1-\mu_k\sigma_{k,\min})^2 + \mu_k^2 \cdot 4\alpha \lambda_{\max}^2 \|C\|_1^2 < 1
			\end{align}
		for all $k=1,\ldots,N$.
		Solving the above inequalities, we obtain condition \eqref{Equ:Thm_MSS:StepSize_Condition}.
}
	\end{proof}
\noindent
The non-asymptotic bound \eqref{Equ:Thm_MSS:NonAsymptoticBound} characterizes how
the MSP at each node evolves over time. It shows that the MSP
converges to steady state at a geometric rate determined by the spectral radius of
the matrix $A_2^T\Gamma_d A_1^T$. The transient term is determined by
the difference between the initial MSP and the steay-state MSP.
At steady state, the MSP is upper bounded by $\mathrm{MSP}_{\infty}^{\mathrm{ub}}$.
We now examine closely how small the steady-state MSP can be for small step-size
parameters $\{\mu_k\}$. Taking the $\infty-$norm of both sides of
\eqref{Equ:Thm_MSS:AsymptoticBound} and using the relation
        $(I_N - A_2^T \Gamma_d A_1^T)^{-1}   =   \sum_{j=0}^{\infty} (A_2^T \Gamma_d A_1^T)^j$,
we obtain
    \begin{align}
        \|\mathrm{MSP}_{\infty}^{\mathrm{ub}}\|_{\infty}
                &=      \big\|
                            \|C\|_1^2
                            \cdot
							(I_N - A_2^T \Gamma_d A_1^T)^{-1}
                            \cdot
							A_2^T \Omega^2 b_v
                        \big\|_{\infty}
                        \nonumber\\
                &\le    \|C\|_1^2
                            \cdot
                        \Big(
                            \sum_{j=0}^{\infty}
                            \|A_2^T\|_{\infty}^j
                                \cdot
                            \|\Gamma_d\|_{\infty}^j
                                \cdot
                            \|A_1^T\|_{\infty}^j
                        \Big)
                        \cdot
                        \|A_2^T\|_{\infty}
                        \cdot
                        \|\Omega\|_{\infty}^2
                        \cdot
                        \|b_v\|_{\infty}
                        \nonumber\\
                &\overset{(a)}{\le}
                        \|C\|_1^2
                            \cdot
                        \Big(
                            \sum_{j=0}^{\infty}
                            \|\Gamma_d\|_{\infty}^j
                        \Big)
                        \cdot
                        \big(\max_{1 \le k \le N} \mu_k\big)^2
                        \cdot
                        \|b_v\|_{\infty}
                        \nonumber\\
        \label{Equ:PerformanceAnalysis:MSP_inf_ub_SmallStepSizes_intermediate1}
                &=      \frac{\|C\|_1^2 \cdot \|b_v\|_{\infty}}{1-\|\Gamma_d\|_{\infty}}
                        \cdot
                        \big(\max_{1 \le k \le N} \mu_k\big)^2
    \end{align}
where step (a) is because $A_1^T$ and $A_2^T$ are right-stochastic matrices so that their $\infty-$norms
(maximum absolute row sum) are one.
Let $\mu_{\max}$ and $\mu_{\min}$ denote the maximum and minimum values of $\{\mu_k\}$, respectively,
and let $\beta \defeq \mu_{\min}/\mu_{\max}$.
For sufficiently small step-sizes, by the definitions of $\Gamma_d$ and $\Gamma$
in \eqref{Equ:PerformanceAnalysis:Gamma_d_def} and \eqref{Equ:Properties:Gamma}, we
have
    \begin{align}
        \|\Gamma_d\|_{\infty}   \;\le\;&    \|\Gamma\|_{\infty}^2
                                            +
                                            4\alpha\lambda_{\max}\|C\|_1^2
                                            \cdot
                                            \|\Omega\|_{\infty}^2\
                                            \nonumber\\
                                \;\overset{(a)}{=}\;&
                                            \max_{1 \le k \le N}
                                                    \{
                                                        |1-\mu_k\sigma_{k,\min}|^2
                                                    \}
                                            +
                                            4\alpha\lambda_{\max}\mu_{\max}^2\|C\|_1^2
                                            \nonumber\\
        \label{Equ:PerformanceAnalysis:Gamma_d_bound}
                                \;\le\;&    1\!-\!2\mu_{\min}\sigma_{\min} \!+\!
                                            \mu_{\max}^2
                                            (\sigma_{\max}^2\!+\!4\alpha\lambda_{\max}\|C\|_1^2)
                                            \nonumber\\
                                \;=\;&      1\!-\!2\beta\mu_{\max}\sigma_{\min} \!+\!
                                            \mu_{\max}^2
                                            (\sigma_{\max}^2\!+\!4\alpha\lambda_{\max}\|C\|_1^2)
    \end{align}
where $\sigma_{\max}$ and $\sigma_{\min}$ are the maximum and minimum values of $\{\sigma_{k,\max}\}$
and $\{\sigma_{k,\min}\}$, respectively, and step (a) holds for sufficiently small step-sizes.
Note that \eqref{Equ:PerformanceAnalysis:MSP_inf_ub_SmallStepSizes_intermediate1} is a monotonically
increasing function of $\|\Gamma_d\|_{\infty}$. Substituting
\eqref{Equ:PerformanceAnalysis:Gamma_d_bound} into
\eqref{Equ:PerformanceAnalysis:MSP_inf_ub_SmallStepSizes_intermediate1},
we get
    \begin{equation}
        \label{Equ:PerformanceAnalysis:MSP_AsymptoticBound_SmallStepSizes}
        \boxed{
        \begin{split}
        \limsup_{i \rightarrow\infty} \|\mathrm{MSP}_{i}\|_{\infty}
                        &\le
                                    \|\mathrm{MSP}_{\infty}^{\mathrm{ub}}\|_{\infty}
                        \le        \frac{\|C\|_1^2 \cdot \|b_v\|_{\infty} \cdot \mu_{\max}}
                                        {
                                            2\beta\sigma_{\min} \!-\! \mu_{\max}
                                            (\sigma_{\max}^2\!+\!4\alpha\lambda_{\max}\|C\|_1^2)
                                        }
                        \!\sim\!        O(\mu_{\max})
        \end{split}
        }
    \end{equation}
{\color{black}
Note that, for sufficiently small step-sizes, the right-hand side of
\eqref{Equ:PerformanceAnalysis:MSP_AsymptoticBound_SmallStepSizes} is approximately
$\frac{\|C\|_1^2 \cdot \|b_v\|_{\infty}}{2\beta\sigma_{\min}} \mu_{\max}$, which
is on the order of $O(\mu_{\max})$.
In other words, the steady-state MSP can be made
be arbitrarily small for small step-sizes, and
the estimators $\bm{w}_{i}=\col\{\bm{w}_{1,i},\ldots,\bm{w}_{N,i}\}$
will be close to the fixed point $w_{\infty}$ (in the mean-square sense) even under gradient
perturbations. 
To understand how close the estimate $\bm{w}_{k,i}$ at each node $k$ is
to the Pareto-optimal solution $w^o$, a natural question to consider is how close the fixed point
$w_{\infty}$ is to $\mathds{1}_N \otimes w^o$, which we study next.

\subsection{Bias Analysis}
\label{Sec:BiasAnalysis}

Our objective is to examine how large $\|\mathds{1}_N \otimes w^o-w_{\infty}\|^2$ is
when the step-sizes are small. We carry out the analysis in two steps:
first, we derive an expression for $\tilde{w}_{\infty} \defeq  \mathds{1}_N \otimes w^o-w_{\infty}$,
and then we derive the conditions that guarantee small bias.

To begin with, recall that $w_{\infty}$ is the fixed point of $T_d(\cdot)$, to which
the recursion $w_{i}=T_d(w_{i-1})$ converges.
Also note that $T_d(\cdot)$ is an operator representation of the recursions
\eqref{Equ:PerformanceAnalysis:Diffusion_General}--\eqref{Equ:PerformanceAnalysis:Diffusion_General2}.
We let $i\rightarrow \infty$ on both sides of
\eqref{Equ:PerformanceAnalysis:Diffusion_General}--\eqref{Equ:PerformanceAnalysis:Diffusion_General2}
and obtain
{\setlength{\jot}{-5pt}
    \begin{align}
            \label{Equ:PerformanceAnalysis:FixedPointEquation1_Comb1}
			{\phi}_{k,\infty}	&=	 \displaystyle\sum_{l = 1 }^N a_{1,lk} \; {w}_{l,\infty}
                                        \\
            \label{Equ:PerformanceAnalysis:FixedPointEquation1_Adapt}
			{\psi}_{k,\infty}	&=									
								\displaystyle{\phi}_{k,\infty}
								-
								\mu_k \sum_{l =1}^N
								c_{lk}
									{\nabla}_w J_l({\phi}_{k,\infty})
								\\
            \label{Equ:PerformanceAnalysis:FixedPointEquation1_Comb2}
			{w}_{k,\infty}		&=	 \displaystyle\sum_{l = 1}^N a_{2,lk}\;
                                            {\psi}_{l,\infty}
	\end{align}
}%
where $w_{k,\infty}$, $\phi_{k,\infty}$ and $\psi_{k,\infty}$ denote the limits
of $w_{k,i}$, $\phi_{k,i}$ and $\psi_{k,i}$ as $i \rightarrow \infty$, respectively.
Introduce the following bias vectors at node $k$
    \begin{align}
        \label{Equ:PerformanceAnalysis:w_psi_phi_tilde_inf_def}
        \tilde{w}_{k,\infty}    \defeq  w^o\!-\!w_{k,\infty},
        \;
        \tilde{\phi}_{k,\infty}    \defeq  w^o\!-\!\phi_{k,\infty},
        \;
        \tilde{\psi}_{k,\infty}    \defeq  w^o\!-\!\psi_{k,\infty}
    \end{align}
Subtracting each equation of
\eqref{Equ:PerformanceAnalysis:FixedPointEquation1_Comb1}--\eqref{Equ:PerformanceAnalysis:FixedPointEquation1_Comb2}
from $w^o$ and using relation $\nabla_w J_l(\phi_{k,\infty})   =   \nabla_w J_l(w^o) - H_{lk,\infty} \tilde{\phi}_{k,\infty}$ 
that can be derived from 
Lemma \ref{Lemma:MeanValueTheorem} in Appendix \ref{Appendix:Properties_Operators},
we obtain
    \begin{align}
            \label{Equ:PerformanceAnalysis:FixedPointEquation_Comb1}
			\tilde{\phi}_{k,\infty}	&=	 \displaystyle\sum_{l = 1 }^N a_{1,lk} \; \tilde{w}_{l,\infty}
                                        \\
            \label{Equ:PerformanceAnalysis:FixedPointEquation_Adapt2}
	        \tilde{\psi}_{k,\infty} &=   \Big[
	                                        I_M \!-\! \mu_k \sum_{l=1}^N c_{lk} H_{lk,\infty}
	                                    \Big]
	                                    \tilde{\phi}_{k,\infty}
	                                    \!+\!
	                                    \mu_k
	                                    \sum_{l=1}^N
	                                    c_{lk}
	                                    \nabla_w J_l(w^o)
								\\
            \label{Equ:PerformanceAnalysis:FixedPointEquation_Comb2}
			\tilde{w}_{k,\infty}		&=	 \displaystyle\sum_{l = 1}^N a_{2,lk}\;
                                            \tilde{\psi}_{l,\infty}
	\end{align}                      
where $H_{lk,\infty}$ is a positive semi-definite symmetric  matrix
defined as
        \begin{align}
                \label{Equ:ConvergenceAnalysis:H_kinf}
        		{H}_{lk,\infty}	
        				\;\triangleq\;&          \int_{0}^{1}
                    							\nabla_w^2
                    							J_l
                                                \big(
                                                    w^o
                                                    \!-\!
                                                    t
                                                    \sum_{l=1}^{N} a_{1,lk}
                                                    \tilde{w}_{l,\infty})
                                                \big) dt
        \end{align}
Introduce the following global vectors and matrices
    \begin{align}
        \tilde{w}_{\infty}  &\defeq      \mathds{1}_N \otimes w^o-w_{\infty}
                            =           \col\{
                                                    \tilde{w}_{1,\infty},
                                                    \ldots,
                                                    \tilde{w}_{N,\infty}
                                            \}
                                            \\
        \label{Equ:PerformanceAnalysis:A_cal_1_2_def}
        \mathcal{A}_1   &\defeq     A_1 \otimes I_M,
        \quad
        \mathcal{A}_2   \defeq     A_2 \otimes I_M,
        \quad
        \mathcal{C}     \defeq      C \otimes I_M,
        \quad
        \\
        \label{Equ:PerformanceAnalysis:M_cal_def}
        \mathcal{M}     &\defeq      \mathrm{diag}\{\mu_1,\ldots,\mu_N\}\otimes I_M
        \\
        \label{Equ:ConvergenceAnalysis:R_inf}
        \mathcal{R}_{\infty}	
                        &\defeq
                                \sum_{l=1}^N
        						\mathrm{diag}
        						\big\{
        							 c_{l1} {H}_{l1,\infty},
        							\cdots,
        							c_{lN} {H}_{lN,\infty}
        						\big\},
        	\quad
        \\
        \label{Equ:PerformanceAnalysis:g_o_def}
        g^o             &\defeq
                                \col
                                \{
                                    \nabla_w J_1(w^o), \ldots, \nabla_w J_N(w^o)
                                \}
    \end{align}
Then, expressions \eqref{Equ:PerformanceAnalysis:FixedPointEquation_Comb1},
\eqref{Equ:PerformanceAnalysis:FixedPointEquation_Comb2} and
\eqref{Equ:PerformanceAnalysis:FixedPointEquation_Adapt2} 
lead to
    \begin{align}
    \label{Equ:PerformanceAnalysis:Thm_Convergence:w_tilde_infty_Expr}
    \boxed{
    \tilde{w}_{\infty}
                                        =
                                            \big[
                                                    I_{MN}
                                                    \!-\!
                                                    \mathcal{A}_2^T
                                                    \left(I_{MN}
                                                    \!-\!
                                                    \mathcal{M}
                                                     \mathcal{R}_{\infty}\right)
                                                    \mathcal{A}_1^T
                                            \big]^{-1}
                                            \!\!\!
                                            \mathcal{A}_2^T \mathcal{M} \mathcal{C}^T g^o
    }
    \end{align}

    \begin{theorem}[Bias at Small Step-sizes]
    	\label{Thm:Bias_Small_StepSize}
    	Suppose that $A_2^TA_1^T$ is a regular right-stochastic matrix, so
	   that its eigenvalue of largest magnitude is one with multiplicity one,
	   and all other eigenvalues are strictly smaller than one. Let $\theta^T$ denote
	   the left eigenvector of $A_2^TA_1^T$ of eigenvalue one. Furthermore, assume
        the following condition holds:
    		\begin{align}
    			\label{Equ:ConvergenceAnalysis:Thm:ConsensusCondition}
    			&\theta^T A_2^T \Omega C^T = c_0 \mathds{1}^T
    		\end{align}
    	where
        $\Omega \triangleq \mathrm{diag}\{\mu_1,\ldots,\mu_N\}$ was defined
        earlier in Lemma \ref{Lemma:Bound_on_GradientPerturbation}, and  $c_0$ is some constant.
    	Then,
            \begin{align}
                \label{Equ:ConvergenceAnalysis:Thm:SmallBias_mu}
                \|\tilde{w}_{\infty}\|^2
                =
                \|
                    \mathds{1}_N \otimes w^o - w_{\infty}
                \|^2   \sim    O(\mu_{\max}^2)
            \end{align}
    \end{theorem}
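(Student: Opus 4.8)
The plan is to read \eqref{Equ:PerformanceAnalysis:Thm_Convergence:w_tilde_infty_Expr} as a singular‑perturbation problem and isolate the one dangerous direction. Writing $\mathcal{A}\defeq\mathcal{A}_2^T\mathcal{A}_1^T=(A_2^TA_1^T)\otimes I_M$, $\mathcal{E}\defeq\mathcal{A}_2^T\mathcal{M}\mathcal{R}_{\infty}\mathcal{A}_1^T$ and $r\defeq\mathcal{A}_2^T\mathcal{M}\mathcal{C}^Tg^o$, the bias solves $[(I_{MN}-\mathcal{A})+\mathcal{E}]\,\tilde w_{\infty}=r$, where $\mathcal{E}=O(\mu_{\max})$ and $r=O(\mu_{\max})$ because $\mathcal{M}=\Omega\otimes I_M=O(\mu_{\max})$ while $g^o$ is fixed. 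The difficulty is that $I_{MN}-\mathcal{A}$ is \emph{singular}: by regularity of $A_2^TA_1^T$ its unit eigenvalue is simple with right eigenvector $\mathds{1}$ and left eigenvector $\theta^T$, so the perturbed inverse amplifies by $O(1/\mu_{\max})$ along the consensus subspace $\{\mathds{1}\otimes u:u\in\mathbb{R}^M\}$. A naive norm bound would therefore only give $\|\tilde w_{\infty}\|=O(1)$; the whole point is to show the forcing never excites this subspace at leading order, which is exactly what hypothesis \eqref{Equ:ConvergenceAnalysis:Thm:ConsensusCondition} guarantees.

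First I would introduce the spectral (Riesz) projector onto the consensus subspace, $\mathcal{L}\defeq(\mathds{1}\theta^T)\otimes I_M$ (normalizing $\theta^T\mathds{1}=1$), and its complement $\mathcal{Q}\defeq I_{MN}-\mathcal{L}$. From $\theta^TA_2^TA_1^T=\theta^T$ and $A_2^TA_1^T\mathds{1}=\mathds{1}$ one gets $\mathcal{L}(I_{MN}-\mathcal{A})=(I_{MN}-\mathcal{A})\mathcal{L}=0$, so that $I_{MN}-\mathcal{A}$ maps $\operatorname{range}(\mathcal{Q})$ into itself and is invertible there with a $\mu$‑independent bounded inverse $\mathcal{G}$. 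Left‑stochasticity of $A_1,A_2$ also yields the useful identities $\mathcal{A}_1^T\mathcal{L}=\mathcal{A}_2^T\mathcal{L}=\mathcal{L}$. I then split the equation by applying $\mathcal{L}$ and $\mathcal{Q}$ separately, obtaining $\mathcal{L}\mathcal{E}\tilde w_{\infty}=\mathcal{L}r$ and $(I_{MN}-\mathcal{A})\mathcal{Q}\tilde w_{\infty}=\mathcal{Q}r-\mathcal{Q}\mathcal{E}\tilde w_{\infty}$.

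The crux is the $\mathcal{L}$‑equation. Since $\mathcal{A}_2^T\mathcal{M}\mathcal{C}^T=(A_2^T\Omega C^T)\otimes I_M$, hypothesis \eqref{Equ:ConvergenceAnalysis:Thm:ConsensusCondition} gives $\mathcal{L}r=c_0\,(\mathds{1}\mathds{1}^T\otimes I_M)g^o=c_0\,\mathds{1}\otimes\big(\sum_{l}\nabla_wJ_l(w^o)\big)$, which vanishes because $w^o$ minimizes $J^{\mathrm{glob}}=\sum_lJ_l$, so $\sum_l\nabla_wJ_l(w^o)=0$. Hence $\mathcal{L}\mathcal{E}\tilde w_{\infty}=0$, i.e. $\mathcal{L}\mathcal{E}\mathcal{L}\tilde w_{\infty}=-\mathcal{L}\mathcal{E}\mathcal{Q}\tilde w_{\infty}$. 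A direct computation (using $\mathcal{A}_1^T\mathcal{L}=\mathcal{L}$ and block‑diagonality of $\mathcal{R}_{\infty}$) reduces $\mathcal{L}\mathcal{E}\mathcal{L}$, acting on $u$ via $\mathcal{L}\tilde w_\infty=\mathds{1}\otimes u$, to the $M\times M$ matrix
\begin{align}
\Phi \;\defeq\; \sum_{l=1}^N \mu_l\,(A_2\theta)_l\,\bar H_l,
\qquad
\bar H_l \;\defeq\; \sum_{m=1}^N c_{ml}H_{ml,\infty}.
\end{align}
Here each $\bar H_l\succeq\sigma_{l,\min}I_M\succ0$ by Assumption~\ref{Assumption:Hessian}, the Perron eigenvector satisfies $\theta\succ0$ so $(A_2\theta)_l>0$, and $\mu_l>0$; thus $\Phi\succ0$ with $\|\Phi^{-1}\|=O(1/\mu_{\min})$. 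Because the right‑hand side $-\mathcal{L}\mathcal{E}\mathcal{Q}\tilde w_\infty$ is $O(\mu_{\max})\,\|\mathcal{Q}\tilde w_\infty\|$, solving $\Phi u$ and using $\beta=\mu_{\min}/\mu_{\max}$ fixed yields $\|\mathcal{L}\tilde w_\infty\|=O(\mu_{\max}/\mu_{\min})\,\|\mathcal{Q}\tilde w_\infty\|=O(\|\mathcal{Q}\tilde w_\infty\|)$: the feared $1/\mu$ amplification is neutralized because it now acts on a term carrying an extra factor $\mu_{\max}$.

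Finally I would close the loop through the $\mathcal{Q}$‑equation: $\mathcal{Q}\tilde w_\infty=\mathcal{G}\,[\mathcal{Q}r-\mathcal{Q}\mathcal{E}\tilde w_\infty]$ gives $\|\mathcal{Q}\tilde w_\infty\|\le\|\mathcal{G}\|\,(c_1\mu_{\max}+c_2\mu_{\max}\|\tilde w_\infty\|)$. Substituting the bound $\|\mathcal{L}\tilde w_\infty\|=O(\|\mathcal{Q}\tilde w_\infty\|)$ from the previous step into $\|\tilde w_\infty\|\le\|\mathcal{L}\tilde w_\infty\|+\|\mathcal{Q}\tilde w_\infty\|$ and solving the resulting self‑consistent inequality shows that, for $\mu_{\max}$ small enough, $\|\mathcal{Q}\tilde w_\infty\|=O(\mu_{\max})$ and hence $\|\tilde w_\infty\|=O(\mu_{\max})$, i.e. $\|\tilde w_\infty\|^2=O(\mu_{\max}^2)$. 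I expect the main obstacle to be precisely the two facts that make the consensus direction harmless: establishing the exact cancellation $\mathcal{L}r=0$ from \eqref{Equ:ConvergenceAnalysis:Thm:ConsensusCondition} together with $\nabla_wJ^{\mathrm{glob}}(w^o)=0$, and verifying the non‑degeneracy $\Phi\succ0$ on the consensus subspace (which rests on $\sigma_{l,\min}>0$, $\theta\succ0$, and the left‑stochastic structure of $A_2$). Everything else is bookkeeping of $O(\mu_{\max})$ terms.
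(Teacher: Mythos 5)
Your proposal is correct: it rests on exactly the two pillars that carry the paper's own proof --- the cancellation of the forcing along the consensus direction (condition \eqref{Equ:ConvergenceAnalysis:Thm:ConsensusCondition} together with $\sum_l \nabla_w J_l(w^o)=0$, which is the paper's step \eqref{Equ:Appendix:OptimalCondition_Stationarity}), and the strict positive-definiteness of the aggregated, $\theta$-weighted Hessian on that direction. Indeed your matrix $\Phi=\sum_l \mu_l (A_2\theta)_l \bar H_l$ is precisely the paper's $E_{11}$ up to the positive scalar $\mu_{\max}\,\theta^T\mathds{1}$, and your lower bound for it mirrors the paper's argument that $E_{11}\ge \frac{\mathds{1}^T\Omega_0 A_2\theta}{\theta^T\mathds{1}}\min_k\{\sum_l c_{lk}\lambda_{l,\min}\} I_M>0$. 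Where you differ is in the packaging: the paper conjugates by the Jordan basis of $A_2^TA_1^T$, applies the block-inversion (Schur complement) identity \eqref{Equ:ConvergenceAnalysis:BlockMatrixInversion} to the $2\times 2$ partition, and exhibits the limit $\lim_{\mu_{\max}\to 0}\|\tilde w_\infty\|/\mu_{\max}$ explicitly; you instead split by the Riesz projectors $\mathcal{L},\mathcal{Q}$ and close with a self-consistent bootstrap inequality. Your route avoids the explicit Jordan form and the inversion formula and is arguably cleaner to verify, at the cost of yielding only an $O(\mu_{\max})$ bound rather than the paper's exact leading-order expression \eqref{Equ:Appendix:SmallBias:w_infty_final} for the bias (which the paper reuses, e.g., in the steady-state MSE formulas). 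Two cosmetic points: your claim that every $(A_2\theta)_l>0$ is slightly stronger than needed and need not hold if a row of the left-stochastic $A_2$ vanishes; the weaker fact $\sum_l (A_2\theta)_l=\theta^T\mathds{1}>0$ combined with $\sigma_{l,\min}>0$ (Assumption \ref{Assumption:Hessian}) already gives $\Phi\succ 0$, which is all you use. And your argument, like the paper's, implicitly keeps the ratios $\mu_k/\mu_{\max}$ fixed as $\mu_{\max}\to 0$; you state this, so no gap.
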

    \begin{proof}
        See Appendix \ref{Appendix:BiasSmallStepSizes}.
    \end{proof}
\noindent
Therefore, as long as the network is connected (not necessarily fully connected)
and condition \eqref{Equ:ConvergenceAnalysis:Thm:ConsensusCondition} holds,
the bias would become arbitrarily small.
For condition \eqref{Equ:ConvergenceAnalysis:Thm:ConsensusCondition} to hold, one
choice is to require the matrices $A_1^T$ and $A_2^T$ to be doubly stochastic, and
all nodes to use the same step-size $\mu$, namely, $\Omega=\mu I_N$. In that case,
the matrix $A_1^TA_2^T$ is doubly-stochastic so that the left eigenvector of eigenvalue
one is $\theta^T=\mathds{1}^T$ and \eqref{Equ:ConvergenceAnalysis:Thm:ConsensusCondition} holds.

Finally, we combine the results from
Theorems \ref{Thm:MSStability} and \ref{Thm:Bias_Small_StepSize}
to bound the mean-square-error (MSE) of the estimators $\{\bm{w}_{k,i}\}$
from the desired Pareto-optimal solution $w^o$.
Introduce the $N \times 1$ MSE vector
    \begin{align}
        \mathrm{MSE}_i  &\defeq    \E P[\tilde{\bm{w}}_i]
        							\nonumber\\
                        &=          \E P[\mathds{1}_N \otimes w^o - \bm{w}_i]        
                        				\nonumber\\
        \label{Equ:PerformanceAnalysis:MSE_def}
                        &=          \col
                                    \big\{
                                        \E\|\tilde{\bm{w}}_{1,i}\|^2, \ldots,\E\|\tilde{\bm{w}}_{N,i}\|^2
                                    \big\}
    \end{align}
Using Properties 3--4 in Lemma \ref{Lemma:PropertiesOperator},
we obtain
    \begin{align}
        \mathrm{MSE}_i  &=          \E
                                    P\big[
                                        2
                                        \big(
                                            \frac{\mathds{1}_N \otimes w^o \!-\! w_{\infty}}{2}
                                            \!+\!
                                            \frac{w_{\infty} \!-\! \bm{w}_i}{2}
                                        \big)
                                    \big]
                                    \nonumber\\
                        &\preceq    2P[\tilde{w}_{\infty}] \!+\! 2\;\E P[w_{\infty} \!-\! \bm{w}_i]
                                   \nonumber\\
        \label{Equ:PerformanceAnalysis:MSE_vector_def}
                        &=          2P[\tilde{w}_{\infty}] \!+\! 2 \; \mathrm{MSP}_i
    \end{align}
Taking the $\infty-$norm of both sides of above inequality and using
property \eqref{Equ:Properties:PX_BMNorm}, we obtain
    \begin{align}
        \limsup_{i \rightarrow \infty} \|\mathrm{MSE}_i\|_{\infty}
            &\le
                2\|P[\tilde{w}_{\infty}]\|_{\infty}
                +
                2 \limsup_{i\rightarrow \infty} \|\mathrm{MSP}_i\|_{\infty}
                \nonumber\\
            &=
                2\|\tilde{w}_{\infty}\|_{b,\infty}^2
                +
                2 \limsup_{i\rightarrow \infty} \|\mathrm{MSP}_i\|_{\infty}
                \nonumber\\
        \label{Equ:PerformanceAnalysis:MSE_AsymptoticBound_SmallStepsize}
            &\sim
                O(\mu_{\max}^2) + O(\mu_{\max})
    \end{align}
where in the last step, we used \eqref{Equ:PerformanceAnalysis:MSP_AsymptoticBound_SmallStepSizes}
and \eqref{Equ:ConvergenceAnalysis:Thm:SmallBias_mu}, and the fact that
all vector norms are equivalent. Therefore, as the step-sizes become small,
the MSEs become small and the estimates $\{\bm{w}_{k,i}\}$ get
arbitrarily close to the Pareto-optimal solution $w^o$. We also observe
that, for small step-sizes,
the dominating steady-state error is MSP, which is caused by the gradient noise
and is on the order of $O(\mu_{\max})$. On the other hand, the bias term
is a high order component, i.e., $O(\mu_{\max}^2)$, and can be ignored.

{\color{black}
The fact that the bias term $\tilde{w}_{\infty}$ is small also gives us a useful
approximation for $\mc{R}_{\infty}$ in \eqref{Equ:ConvergenceAnalysis:R_inf}.
Since $\tilde{w}_{\infty}=\col\{\tilde{w}_{1,\infty},\ldots,\tilde{w}_{N,\infty}\}$ is small
for small step-sizes, the matrix $H_{lk,\infty}$ defined in \eqref{Equ:ConvergenceAnalysis:H_kinf}
can be approximated as $H_{lk,\infty} \approx \nabla_w^2 J_l(w^o)$.
Then, by definition \eqref{Equ:ConvergenceAnalysis:R_inf}, we have
    \begin{equation}
        \label{Equ:PerformanceAnalysis:R_inf_Approx_SmallBias}
        \boxed{
                \mathcal{R}_{\infty}
                                \approx    \sum_{l=1}^N
                                            \mathrm{diag}
                                            \big\{
                                                    c_{l1} \nabla_w^2 J_l(w^o), \ldots, c_{lN} \nabla_w^2
                                                    J_l(w^o)
                                        \big\}
        }
    \end{equation}
Expressing \eqref{Equ:PerformanceAnalysis:R_inf_Approx_SmallBias} is useful for
evaluating closed-form expressions of the steady-state MSE in sequel.
}

\subsection{Steady-State Performance}
\label{Sec:Steady-State Performance}
So far, we derived inequalities \eqref{Equ:PerformanceAnalysis:MSE_AsymptoticBound_SmallStepsize}
to bound the steady-state performance,
and showed that, for small step-sizes,
the solution at each node $k$ approaches the same Pareto-optimal point $w^o$.
In this section, we derive closed-form expressions (rather than bounds) for the steady-state MSE
at small step-sizes. Introduce the error vectors%
\footnote{
\label{Footnote:Notation_Error_Quantities}
{\color{black}
In this paper, we always use the notation $\tilde{w}=w^o-w$ to denote
the error relative to $w^o$. For the error between $w$ and the fixed point $w_{\infty}$,
we do not define a separate notation, but instead write $w_{\infty}-w$ explicitly
to avoid confusion.}}%
	\begin{align}		
		\label{Equ:ConvergenceAnalysis:Def_errorQuantities}
		\tilde{\bm{\phi}}_{k,i}	\triangleq	w^o\!-\!\bm{\phi}_{k,i},
		\;
		\tilde{\bm{\psi}}_{k,i}	\triangleq	w^o\!-\!\bm{\psi}_{k,i},
		\;
		\tilde{\bm{w}}_{k,i}	\triangleq	w^o\!-\!\bm{w}_{k,i}
	\end{align}
and the following global random quantities
    {\setlength{\jot}{-1pt}
	\begin{align}
		\label{Equ:ConvergenceAnalysis:global_error_vector}
		\tilde{\bm w}_{i}		\triangleq&
							\col\{
								\tilde{\bm w}_{1,i},
								\ldots,
								\tilde{\bm w}_{N,i}
							\}
							\\
		\label{Equ:ConvergenceAnalysis:D_i_minus_1}
		\bm{\mathcal{R}}_{i-1}	
					\defeq&	\sum_{l=1}^N
						\mathrm{diag}
						\big\{
							 c_{l1} \bm{H}_{l1,i-1},
							\cdots,
							c_{lN} \bm{H}_{lN,i-1}
						\big\}
						\\
		\label{Equ:ConvergenceAnalysis:H_kim1}
		\bm{H}_{lk,i-1}	
				\triangleq&          \int_{0}^{1}
            							\!\!\nabla_w^2
            							J_l
                                        \Big(
                                            w^o \!-\! t\sum_{l=1}^{N} a_{1,lk} \tilde{\bm{w}}_{l,i\!-\!1}
                                        \Big) dt
                        \\
		\label{Equ:ConvergenceAnalysis:G_i}
		\bm{g}_i
					\defeq&	\sum_{l=1}^N
						\mathrm{col}
						\big\{
							 c_{l1}\bm{v}_{l}
                            ({\bm{\phi}}_{1,i\!-\!1}),
							\cdots,
							c_{lN}\bm{v}_{l}
                            ({\bm{\phi}}_{N,i\!-\!1})
						\big\}
	\end{align}
	}%
Then, extending the derivation from \cite[Sec. IV A]{chen2011TSPdiffopt}, we can establish that
    \begin{align}
		\label{Equ:ConvergenceAnalysis:ErrorRecursion_final}
			\tilde{\bm w}_i	=	\mathcal{A}_2^T
							[I_{MN} \!-\! \mathcal{M}\bm{\mathcal{R}}_{i-1}]
							\mathcal{A}_1^T
							\tilde{\bm w}_{i-1}
							\!+\!
							\mathcal{A}_2^T \mathcal{M} \mathcal{C}^T g^o
							\!+\!
							\mathcal{A}_2^T \mathcal{M} \bm{g}_i
	\end{align}
According to \eqref{Equ:PerformanceAnalysis:MSE_AsymptoticBound_SmallStepsize},
the error $\tilde{\bm{w}}_{k,i}$ at each node $k$ would be small for small
step-sizes and after long enough time. In other words, $\bm{w}_{k,i}$ is close
to $w^o$. And recalling from
\eqref{Equ:PerformanceAnalysis:Diffusion_General_Noisy} that $\bm{\phi}_{k,i-1}$
is a convex combination of $\{\bm{w}_{l,i}\}$, we conclude that
the quantities $\{\bm{\phi}_{l,i-1}\}$
are also close to $w^o$.
Therefore,
we can approximate $\bm{H}_{lk,i-1}$, $\bm{\mathcal{R}}_{i-1}$ and $\bm{g}_i$
in \eqref{Equ:ConvergenceAnalysis:D_i_minus_1}--\eqref{Equ:ConvergenceAnalysis:G_i}
by
    \begin{align}
            \label{Equ:PerformanceAnalysis:H_lki1_Approx_SmallStepSize}
            \bm{H}_{lk,i\!-\!1} &\!\approx\!
					                       \int_{0}^{1}
					                       \!\!
                                            \nabla_w^2
                                            J_l(w^o)
                                            dt
                                \!=\!           \nabla_w^2
                                            J_l(w^o)
                                            \\
            \label{Equ:PerformanceAnalysis:R_i1_Approx_SmallStepSize}
            \bm{\mathcal{R}}_{i\!-\!1}
                            &\!\approx\!	\sum_{l=1}^N
                            			\!\!
                                            \mathrm{diag}
                                            \big\{
                                                    c_{l1} \nabla_w^2 J_l(w^o), \ldots, c_{lN} \nabla_w^2
                                                    J_l(w^o)
                                        \big\}
                                        \!\approx\!	\mc{R}_{\infty}
    \end{align}
Then, the error recursion \eqref{Equ:ConvergenceAnalysis:ErrorRecursion_final}
can be approximated by
    \begin{align}
		\label{Equ:ConvergenceAnalysis:Approx_ErrorRecursion_final}
        \boxed{
			\tilde{\bm w}_i	=	\mathcal{A}_2^T
							[I_{MN} \!-\! \mathcal{M}{\mathcal{R}}_{\infty}]
							\mathcal{A}_1^T
							\tilde{\bm w}_{i-1}
							\!+\!
							\mathcal{A}_2^T \mathcal{M} \mathcal{C}^T g^o
							\!+\!
							\mathcal{A}_2^T \mathcal{M} \bm{g}_i
        }
	\end{align}

First, let us examine the behavior of $\E\tilde{\bm{w}}_{i}$.
Taking expectation of both sides of recursion
\eqref{Equ:ConvergenceAnalysis:Approx_ErrorRecursion_final}, we obtain
	\begin{align}
		\label{Equ:ConvergenceAnalysis:MeanErrorRelation_Approx_SmallStepSize}
			\mathbb{E}\tilde{\bm w}_i	
						=	\mathcal{A}_2^T
							[I_{MN}-\mathcal{M}{\mathcal{R}}_{\infty}]
							\mathcal{A}_1^T
							\E\tilde{\bm w}_{i-1}
							+
							\mathcal{A}_2^T \mathcal{M} \mathcal{C}^T g^o
	\end{align}
This recursion converges when the matrix
$\mathcal{A}_2^T[I_{MN}-\mathcal{M}{\mathcal{R}}_{\infty}]\mathcal{A}_1^T$ is stable,
which is guaranteed by \eqref{Equ:PerformanceAnalysis:Contraction_StepSize}
(see Appendix C of \cite{chen2011TSPdiffopt}). Let $i \rightarrow \infty$ on both
sides of \eqref{Equ:ConvergenceAnalysis:MeanErrorRelation_Approx_SmallStepSize} so that
    \begin{equation}
        \label{Equ:PerformanceAnalysis:Ew_inf}
        \boxed{
        	\begin{split}
        \mathbb{E}\tilde{\bm w}_{\infty}    &\defeq \lim_{i\rightarrow\infty}
                                                    \mathbb{E}\tilde{\bm w}_i
                                                    \\
                                            &=       \big[
                                                    I_{MN}
                                                    \!-\!
                                                    \mathcal{A}_2^T
                                                    \left(I_{MN}
                                                    \!-\!
                                                    \mathcal{M}
                                                     \mathcal{R}_{\infty}\right)
                                                    \mathcal{A}_1^T
                                                    \big]^{-1}
                                                    \!\!\!
                                                    \mathcal{A}_2^T \mathcal{M} \mathcal{C}^T g^o
            \end{split}
        }
    \end{equation}
Note that $\mathbb{E}\tilde{\bm w}_{\infty}$ coincides with
\eqref{Equ:PerformanceAnalysis:Thm_Convergence:w_tilde_infty_Expr}.
{\color{black}
By Theorem \ref{Thm:Bias_Small_StepSize}, we know that the squared norm
of this expression
is on the order of $O(\mu_{\max}^2)$ at small step-sizes --- see
\eqref{Equ:ConvergenceAnalysis:Thm:SmallBias_mu}.
}
Next, we derive closed-form expressions for the MSEs, i.e., $\E\|\tilde{\bm{w}}_{k,i}\|^2$.
Let $R_v$ denote the covariance matrix of $\bm{g}_i$ evaluated at $w^o$:
			\begin{align}
				\label{Equ:ConvergenceAnalysis:GradientNoise_PreciseModel}
				R_v         
                            \;=\;&          \mb{E}
                                            \Big\{\!
                                            \Big[\!
                                                \sum_{l=1}^N
						                          \mathrm{col}
						                          \big\{
							                             c_{l1}\bm{v}_{l,i}(w^o),
							                             \cdots,
							                             c_{lN}\bm{v}_{l,i}(w^o)
						                          \big\}\!
                                            \Big]
                                            \Big[\!
                                                \sum_{l=1}^N
						                          \mathrm{col}
						                          \big\{
							                             c_{l1}\bm{v}_{l,i}(w^o),
							                             \cdots,
							                             c_{lN}\bm{v}_{l,i}(w^o)
						                          \big\}\!
                                            \Big]^T\!
                                            \Big\}
			\end{align}
In practice, we can evaluate $R_v$ from the expressions
of $\{\bm{v}_{l,i}(w^o)\}$.
Equating the squared \emph{weighted} Euclidean ``norm'' of both sides of
\eqref{Equ:ConvergenceAnalysis:Approx_ErrorRecursion_final}, applying the expectation operator with
assumption \eqref{Assumption:GradientNoise:ZeroMean_Uncorrelated},
and following the same line of reasoning
from \cite{chen2011TSPdiffopt}, we can establish the following approximate variance relation
at small step-sizes:
    \begin{align}
		\label{Equ:PerformanceAnalysis:WeightedEnergyConservation_Relation_SmallStepSizes}
			\mathbb{E}\|\tilde{\bm{w}}_{i}\|_\Sigma^2
						\approx&\;
							\mathbb{E}\|\tilde{\bm{w}}_{i-1}\|_{{\Sigma}'}^2
							+
							\mathrm{Tr}
        					(
        						\Sigma	
        						\mathcal{A}_2^T \mathcal{M}
        						R_v
        						\mathcal{M} \mathcal{A}_2				
        					)
        					+
                            \mathrm{Tr}
                            \{
                                \Sigma
                                \mathcal{A}_2^T \mathcal{M} \mathcal{C}^T g^o
                                (\mathcal{A}_2^T \mathcal{M} \mathcal{C}^T g^o)^T
                            \}
                            \nonumber\\
                            &+
							2
								(
								\mathcal{A}_2^T
								\mathcal{M}
								\mathcal{C}^T
								g^o
								)^T
								\Sigma
								\mathcal{A}_2^T
								\left(I_{MN}\!-\!\mathcal{M}{\mathcal{R}}_{\infty}\right)
								\mathcal{A}_1^T 
								\mathbb{E}\tilde{\bm w}_{i-1}
                            \\
        \label{Equ:PerformanceAnalysis:Sigma_PrimePrime_SmallStepSizes}
			{\Sigma}'	\approx&\;
                            \mathcal{A}_1
							\left(I_{MN}\!-\!\mathcal{M}{\mathcal{R}}_{\infty}\right)
							\mathcal{A}_2
							\Sigma
							\mathcal{A}_2^T
							\left(I_{MN}\!-\!\mathcal{M}{\mathcal{R}}_{\infty}\right)
							\mathcal{A}_1^T
	\end{align}
where $\Sigma$ is a positive semi-definite weighting matrix that we are free to choose.
Let $\sigma=\mathrm{vec}(\Sigma)$ denote the vectorization operation that
stacks the columns of a matrix $\Sigma$ on top of each other. We shall use
the notation $\|x\|_{\sigma}^2$ and $\|x\|_{\Sigma}^2$
interchangeably.
Following the argument from \cite{chen2011TSPdiffopt},
we can rewrite \eqref{Equ:PerformanceAnalysis:WeightedEnergyConservation_Relation_SmallStepSizes} as
    \begin{align}
        \label{Equ:PerformanceAnalysis:EnergyConservation_Final}
        \mathbb{E}\|\tilde{\bm{w}}_{i}\|_{\sigma}^2
            &\approx     \mathbb{E}\|\tilde{\bm{w}}_{i-1}\|_{F\sigma}^2
                        +
                        r^T\sigma
                        +
                        \sigma^T
                        Q \;
                        \mathbb{E}\tilde{\bm{w}}_{i-1}
   	\end{align}
where
   	\begin{align}
				F	&\triangleq
										\mathcal{A}_1
										[I_{MN}\!-\!\mathcal{M}{\mathcal{R}}_{\infty}]
										\mathcal{A}_2
									\otimes
										\mathcal{A}_1
										[I_{MN}\!-\!\mathcal{M}{\mathcal{R}}_{\infty}]
										\mathcal{A}_2
									\\
                r   &\triangleq         \mathrm{vec}\!
        								\left(
        									\mathcal{A}_2^T \mathcal{M}
        									R_v
        									\mathcal{M} \mathcal{A}_2	
        								\right)
                                        \!+\!
                                        \mathcal{A}_2^T \mathcal{M} \mathcal{C}^T g^o
                                        \!\otimes\!
                                        \mathcal{A}_2^T \mathcal{M} \mathcal{C}^T \!g^o
                                        \\
                Q   &\triangleq         2
                                            \mathcal{A}_2^T
                                    		(I_{MN}\!-\!\mathcal{M}{\mathcal{R}}_{\infty})
                                    		\mathcal{A}_1^T
                                        \otimes
                                    		\mathcal{A}_2^T
                                    		\mathcal{M}
                                    		\mathcal{C}^T
                                    		g^o
    \end{align}
We already established that
$\E \tilde{\bm{w}}_{i-1}$ on the right-hand side of
\eqref{Equ:PerformanceAnalysis:EnergyConservation_Final}
converges to its limit $\E\tilde{\bm{w}}_{\infty}$ under condition
\eqref{Equ:PerformanceAnalysis:Contraction_StepSize}.
And, it was shown in \cite[pp.344-346]{Sayed08} that
such recursion converges to a steady-state value if the matrix $F$ is stable,
i.e., $\rho(F)<1$. This condition is guaranteed when the step-sizes are sufficiently small
(or chosen according to \eqref{Equ:PerformanceAnalysis:Contraction_StepSize})
--- see the proof in Appendix C of \cite{chen2011TSPdiffopt}.
Letting $i \rightarrow \infty$ on both sides of expression
\eqref{Equ:PerformanceAnalysis:EnergyConservation_Final}, we obtain:
    \begin{align}
        \label{Equ:PerformanceAnalysis:EnergyConservation_inf}
        \boxed{
        \lim_{i \rightarrow \infty} \mathbb{E}\|\tilde{\bm{w}}_{i}\|_{(I-F)\sigma}^2
            \approx     \left(r+Q\;\mathbb{E}\tilde{\bm{w}}_{\infty}\right)^T\sigma
        }
    \end{align}
We can now resort to
\eqref{Equ:PerformanceAnalysis:EnergyConservation_inf} and use it to evaluate various
performance
metrics by choosing proper weighting matrices $\Sigma$ (or $\sigma$).
For example,
the MSE of any node $k$ can be obtained by computing
$\lim_{i \rightarrow \infty} \mathbb{E}\|\tilde{\bm{w}}_{i}\|_{T}^2$
with a block weighting matrix $T$ that has an identity matrix at block $(k,k)$ and zeros elsewhere:
		$\displaystyle\lim_{i \rightarrow \infty}\mathbb{E}\|\tilde{\bm{w}}_{k,i}\|^2	
                =	\lim_{i \rightarrow \infty}\mathbb{E}\|\tilde{\bm{w}}_{i}\|_{T}^2$.
Denote the vectorized version of this matrix by
		$t_k	\triangleq	\mathrm{vec}(\mathrm{diag}(e_k)\otimes I_M)$,
where $e_k$ is a vector whose $k$th entry is one and zeros elsewhere.
Then, if we select $\sigma$ in \eqref{Equ:PerformanceAnalysis:EnergyConservation_inf}
as $\sigma = (I-F)^{-1}t_k$,
the term on the left-hand side becomes the desired
$\lim_{i \rightarrow \infty}\mathbb{E}\|\tilde{\bm{w}}_{k,i}\|^2$
and the MSE for node $k$  is therefore given by:
	\begin{align}
		\label{Equ:ConvergenceAnalysis:MSE_k}
        \boxed{
			\mathrm{MSE}_k	\defeq
                                    \lim_{i \rightarrow \infty} \E\|\tilde{\bm{w}}_{k,i}\|^2
                            \approx	
								\left(r+Q\;\mathbb{E}\tilde{\bm{w}}_{\infty}\right)^T
								(I\!-\!F)^{-1} t_k
        }
	\end{align}
If we are interested in the average network MSE,
then it is given by
    \begin{align}
    		\label{Equ:PerformanceAnalysis:MSE_Network}
        \overline{\mathrm{MSE}}
                    \triangleq	\frac{1}{N} \sum_{k=1}^N \mathrm{MSE}_k
    \end{align}

\section{Application to Collaborative Decision Making}
\label{Sec:Simulation}
We illustrate one application of the framework developed in the previous sections to the problem of 
 collaborative decision making over a network of $N$ agents. 
We consider an application in finance where each entry of the decision vector $w$ denotes the amount of investment in a specific 
type of asset. 
Let the $M \times 1$ vector $\bm{p}$ represent the return in investment. 
Each entry of $\bm{p}$ represents the return for a unit investment in the corresponding asset. Let $\overline{p}$ and $R_p$ denote the mean and covariance matrix of $\bm{p}$, respectively. Then, the overall return by the agents for a decision vector $w$ is $\bm{p}^T w$. Note that, with decision $w$, the return $\bm{p}^T w$ is a (scalar) random variable with mean $\overline{p}^T w$ and variance $\var(\bm{p}^T w)=w^T R_p w$,
which are called the expected return and variance of the return in 
classical Markowitz portfolio optimization\cite[p.155]{boyd2004convex},
\cite{markowitz1952portfolio,rubinstein2002markowitz,fitt2009markowitz,elton2010applications}.
These two metrics are often used to characterize the quality of the decision 
$w$: we want
to maximize the expected return while minimizing the variance. However,
solving the problem directly requires all agents to know the global statistics
$\overline{p}$ and $R_p$. What is available in practice are observations that are
collected at the various nodes.
Suppose a subset $\mc{U}$ of the agents observes a sequence of return vectors $\{\bm{u}_{k,i}\}$
with $\E\bm{u}_{k,i}=\overline{p}$. The subscripts $k$ and $i$ denote that the return is
observed by node $k$ at time $i$. 
Then, we can formulate the cost functions for the nodes in set $\mc{U}$
as follows:
	\begin{align}
		\label{Equ:Simu:J_U}
		J_{u,k}(w)	&=	-\E[\bm{u}_{k,i}^T w]
					=	-\overline{p}^Tw
						\\
					&
					k \in \mc{U} \subset \{1,\ldots,N\}
						\nonumber
	\end{align}
We place a negative sign in \eqref{Equ:Simu:J_U} so that
minimizing $J_{u,k}(w)$ is equivalent to maximizing the expected return.
Similarly, suppose there is another subset of nodes, exclusive from $\mc{U}$ and denoted by $\mc{S}$,
which observes a sequence of centered return vectors $\{\bm{s}_{k,i}\}$,
namely, vectors that have the same distribution as
$\bm{p}-\E\bm{p}$ so that $\E[\bm{s}_{k,i}\bm{s}_{k,i}^T]=R_p$. Then,
we can associate with these nodes the cost functions:
	\begin{align}
		\label{Equ:Simu:J_S}
		J_{s,k}(w)	&=	\E\big[|\bm{s}_{k,i}^T w |^2 \big]
					=	w^T R_p w	
						\\
						&k \in \mc{S} \subset \{1,\ldots,N\}
							\nonumber
	\end{align}
Additionally, apart from selecting the decision vector $w$ to maximize the return subject to minimizing its variance, 
the investment strategy $w$ needs to satisfy other constraints
such as: i) the total amount
of investment should be less than a maximum value that
is known only to an agent $k_0 \in \mc{K}$ (e.g., agent $k_0$
is from the funding department who knows how much funding is available), 
ii) the investment on each asset
be nonnegative (known to all agents), and
iii) tax requirements and tax deductions\footnote{For example, 
suppose the first and second entries of the decision
vector $w$ denote the investments on charity assets. 
When the charity investments exceed a certain amount, say $b$,
there would be a tax deduction. We can represent this situation by 
writing $h^T w \ge b$, where $h \defeq [1 \; 1\; 0 \;\cdots \; 0]^T$.}
known to agents in a set $\mc{H}$.
We can then formulate the following constrained multi-objective optimization problem:
	\begin{align}
		\label{Equ:Simu:Optimization:Objectives}
		\min_w			\quad&	\Big\{
										\sum_{ k \in \mc{U} } J_{u,k}(w),\;
										\sum_{ k \in \mc{S} }J_{s,k}(w)
								\Big\}
								\\
		\label{Equ:Simu:Optimization:Constraint1_Sum}
		\mathrm{s.t.}	\quad&	\mathds{1}^T w	\leq		b_0							\\
		\label{Equ:Simu:Optimization:Constraint2_Tax}
						\quad&	h_k^T w			\ge 		b_k,		\quad	k \in \mc{H}\\
		\label{Equ:Simu:Optimization:Constraint3_Nonnegative}
						\quad&	w				\succeq		0
	\end{align}
Using the scalarization technique and barrier function method from Sec. \ref{Sec:Intro},
we convert
\eqref{Equ:Simu:Optimization:Objectives}--\eqref{Equ:Simu:Optimization:Constraint3_Nonnegative}
into the following unconstrained optimization problem (for simplicity, we only 
consider $\pi_1=\cdots=\pi_N=1$):
	\begin{align}
		J^{\mathrm{glob}}(w)		=&\;\sum_{ k \in \mc{U} } 
										\left[
												J_{u,k}(w) 
												+
												\sum_{m=1}^M \phi(-e_m^Tw)
										\right]
										\nonumber\\
										&+ 
										\sum_{ k \in \mc{S} }
										\left[
												J_{s,k}(w)
												+
												\sum_{m=1}^M \phi(-e_m^Tw)
										\right]
										\nonumber\\
									&
										+ 
										\sum_{k \in \mc{H}}
										\left[
												\phi(b_k-h_k^Tw)
												+
												\sum_{m=1}^M \phi(-e_m^Tw) 
										\right]
										\nonumber\\
										&+ 
										\sum_{k \in \mc{K}}
										\left[
												\phi(\mathds{1}^Tw - b_0)
												+
												\sum_{m=1}^M \phi(-e_m^Tw)
										\right]
										\nonumber
	\end{align}
where $\phi(\cdot)$ is a barrier function to penalize the violation of the constraints
--- see \cite{towfic2012icc} for an example,
and the vector $e_m \in \mb{R}^M$ is a basis vector whose entries are all zero except
for a value of one at the $m$th
entry. The term $\sum_{m=1}^M \phi(-e_m^Tw)$ is added to each
cost function to enforce the nonnegativity constraint \eqref{Equ:Simu:Optimization:Constraint3_Nonnegative}, which is assumed
to be known to all agents.
Note that there is a ``division of labor'' over the network: 
the entire set of nodes is divided into four mutually exclusive subsets
$\{1,\ldots,N\} = \mc{U} \cup \mc{S} \cup \mc{H} \cup \mc{K}$, 
and each subset 
collects one type of information
related to the decision. Diffusion adaptation strategies allow the nodes to arrive at
a Pareto-optimal decision in a distributed manner over the network, and each
subset of nodes influences the overall investment strategy.

\begin{figure*}[t!]
   \centerline
   {
   	\subfigure[Topology of the network.]
	{
	   \includegraphics[width=2.4in]{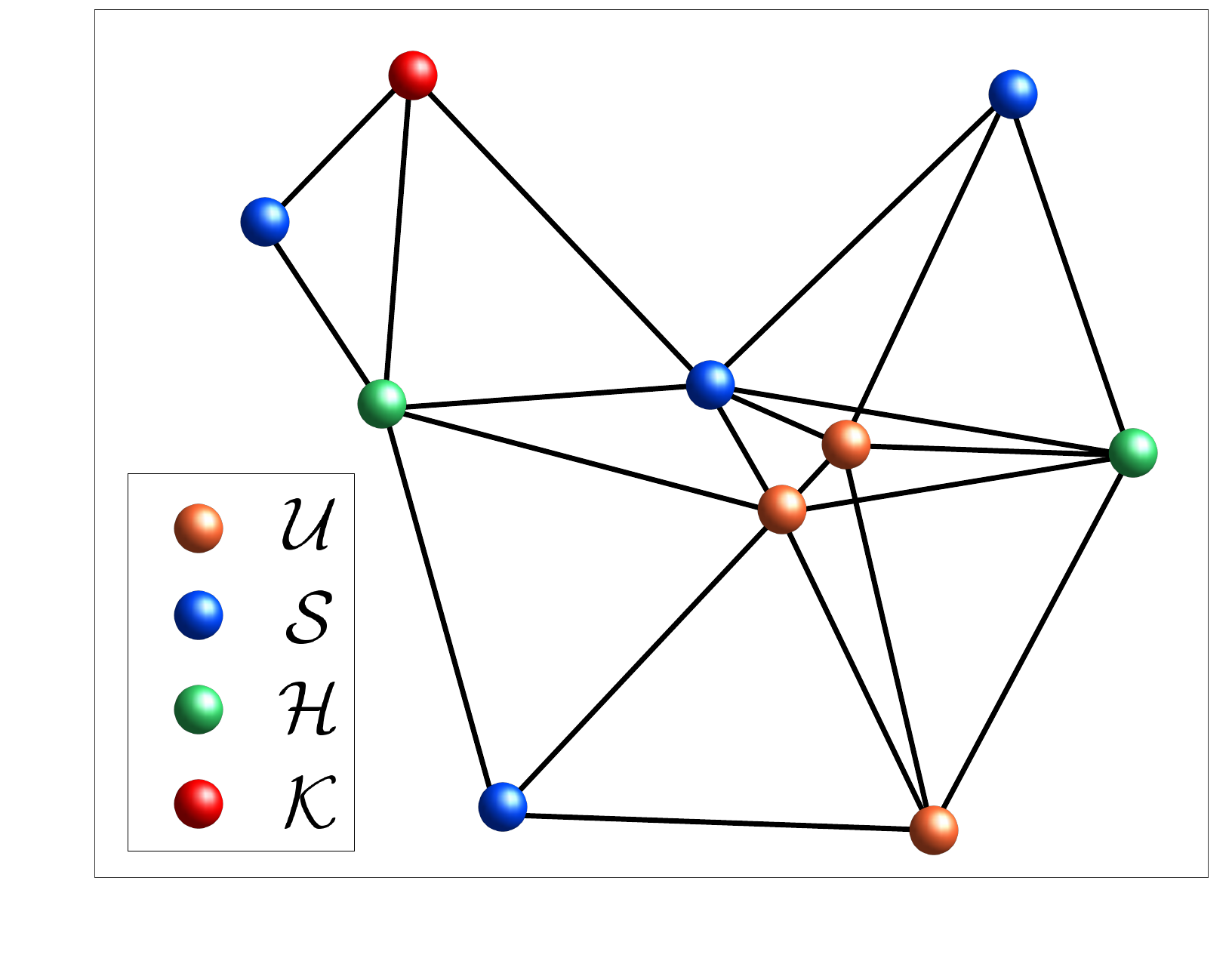}
	   \label{fig:Fig_Simu_Topology}
	}
    \hfil
   	\subfigure[Learning curve ($\mu=10^{-2}$).]
	{	   \includegraphics[width=2.5in]{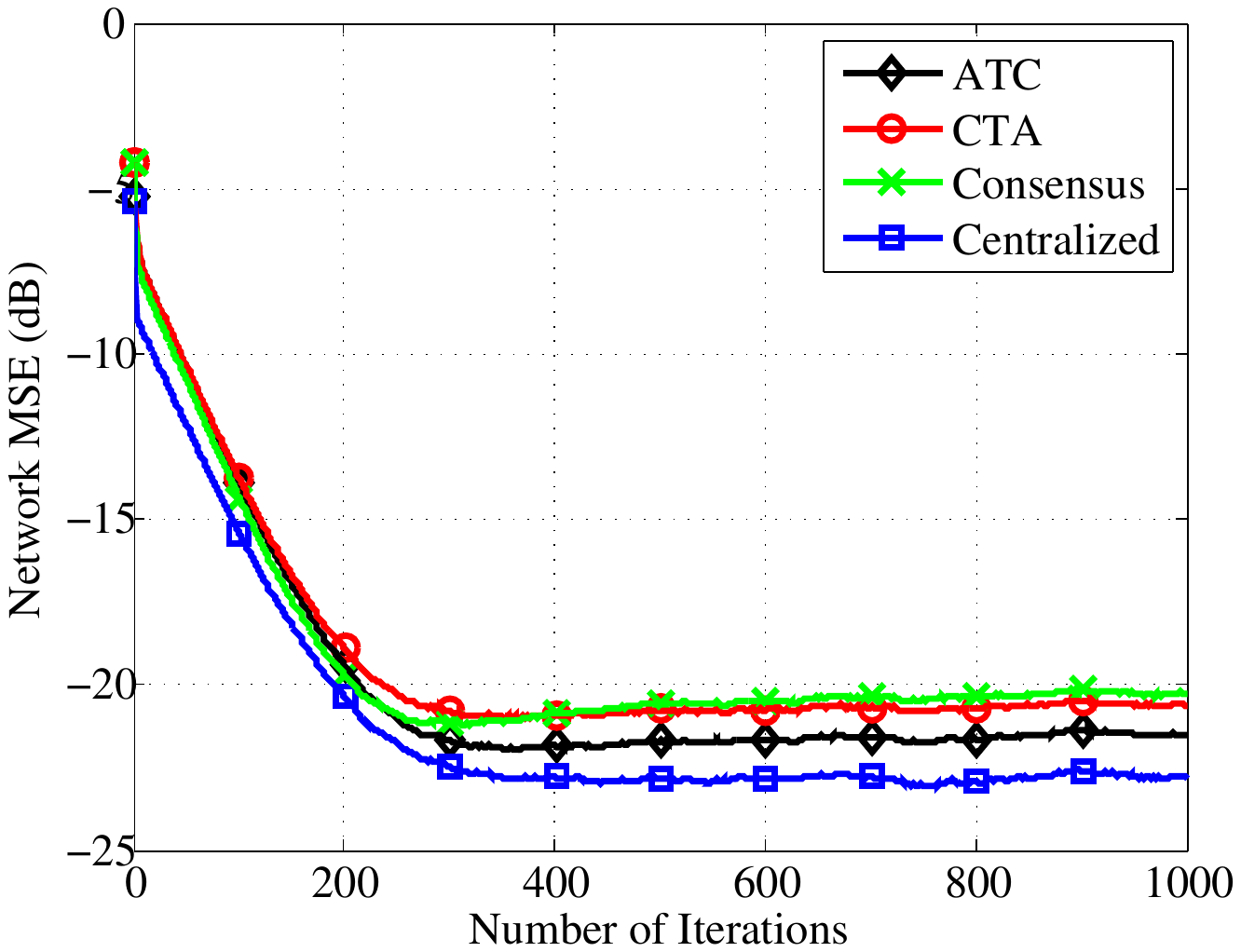}
	   \label{fig:Fig_Simu_LearningCurve}
	  }	
   	}
   \centerline
   {
    \subfigure[MSE for different values of step-sizes.]
	{
	   \includegraphics[width=2.5in]{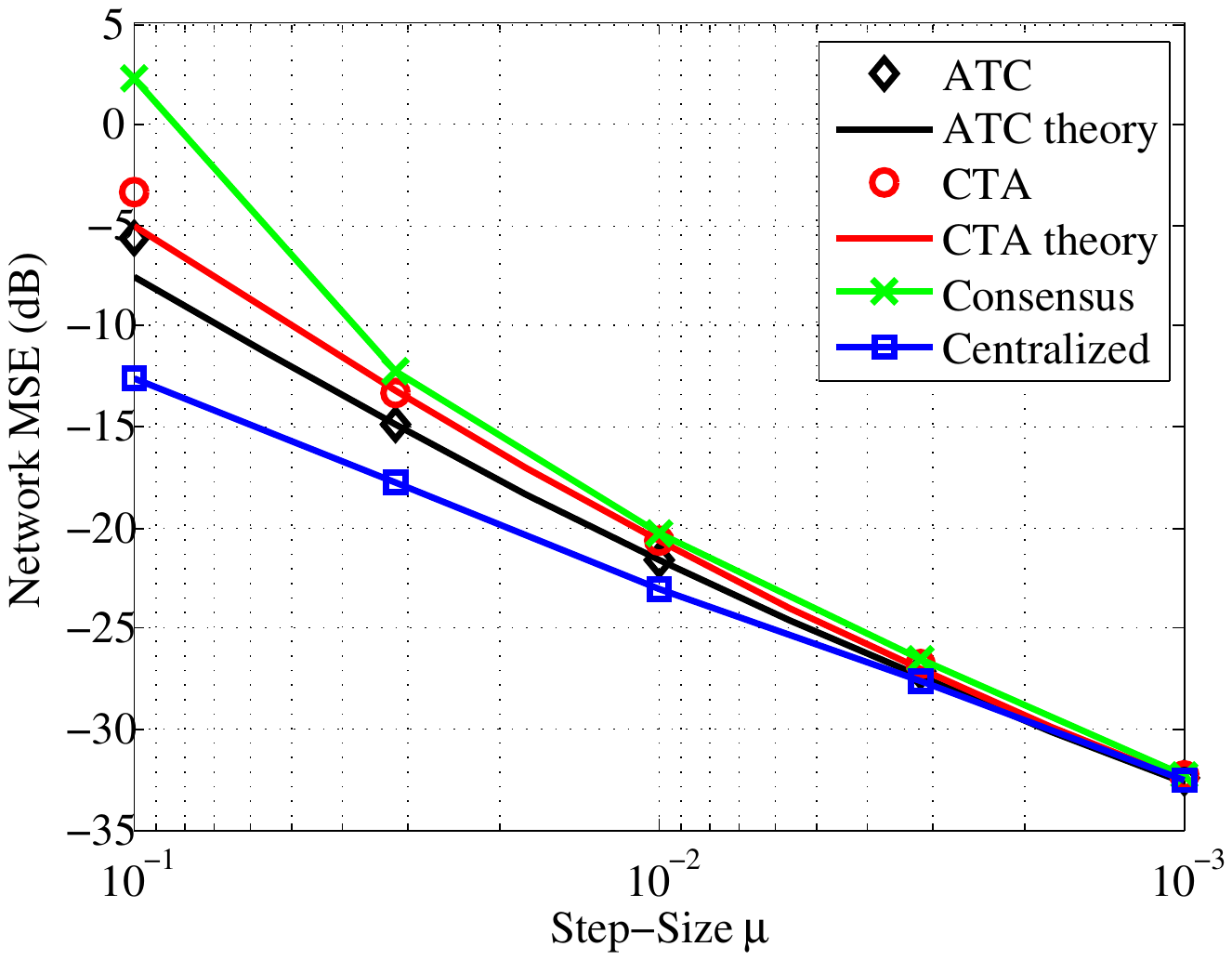}
	   \label{fig:Fig_Simu_MSEvsMu}
   	}
    \hfil	
   	\subfigure[Error of fixed point for different values of step-sizes.]
	{
	   \includegraphics[width=2.5in]{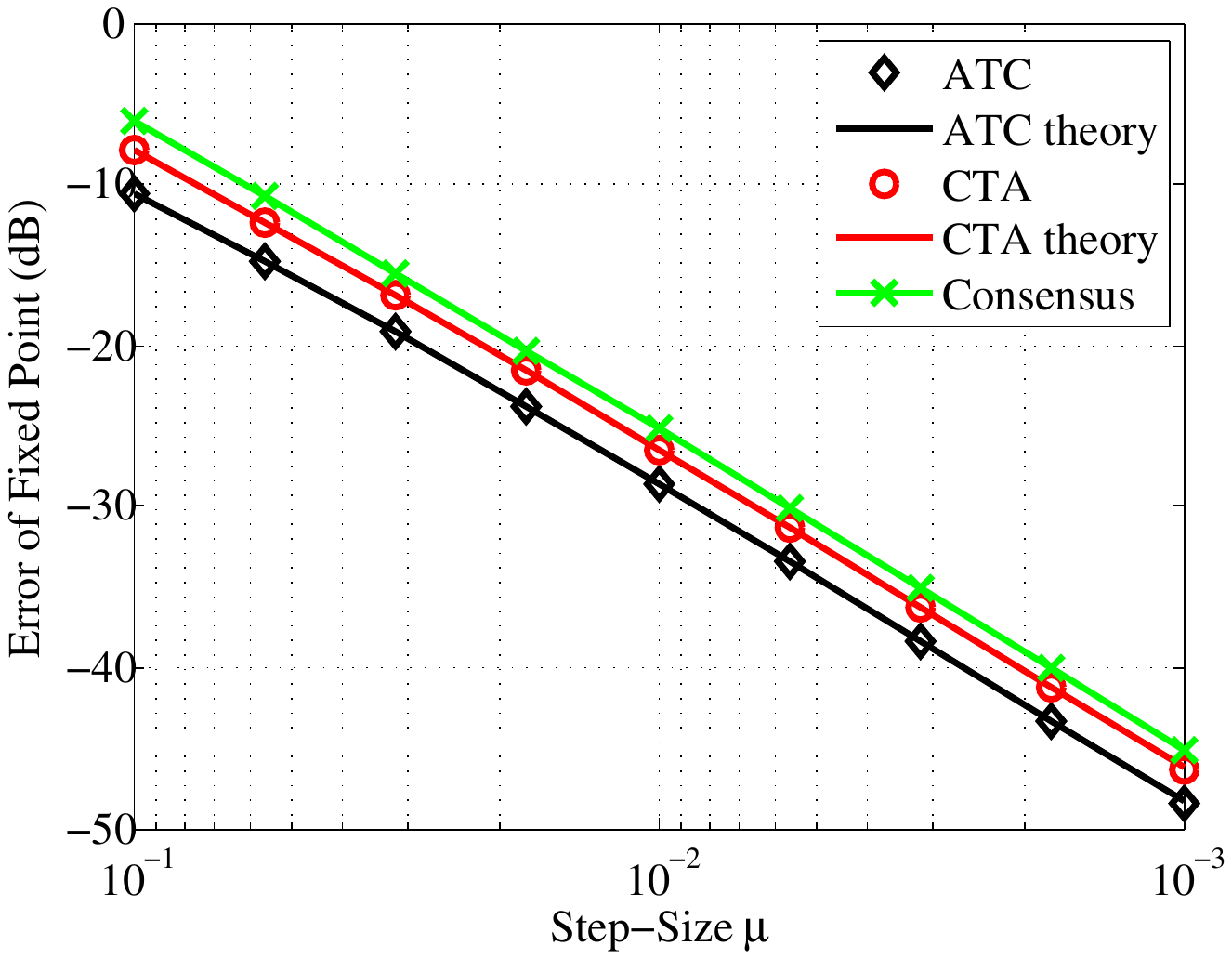}
	   \label{fig:Fig_Simu_FPE}
   	}
   }
   \caption{Simulation results for collaborative decision making.}
   \label{fig:Fig_Simu}
    \vspace{-\baselineskip}
\end{figure*}

In our simulation, we consider a randomly generated connected network topology.
There are a total of $N=10$ nodes in the network, and nodes are assumed connected when
they are close enough geographically. The cardinalities of the subsets $\mc{U}$,
$\mc{S}$, $\mc{H}$ and $\mc{K}$ are set to be $3$, $4$, $2$ and $1$, respectively.
The nodes are partitioned into these four subsets randomly.
The dimension of the decision vector is $M=5$.
The random vectors $\bm{u}_{k,i}$ and $\bm{s}_{k,i}$ are generated according to
the Gaussian distributions $\mc{N}(\mathds{1}, I_M)$ and $\mc{N}(0, I_M)$, respectively. 
We set $b_0=5$ and the parameters $\{h_k, b_k\}$ for $k \in \mc{H}$ to
	\begin{align}
		h_{k_1}	=	\left[
						1	\;	2	\;	\cdots	\;	5		
					\right],
					\quad
		b_{k_1}	=	2
					\\
		h_{k_2}	=	\left[
						5	\;	4	\;	\cdots	\;	1		
					\right],
					\quad
		b_{k_2}	=	3
	\end{align}
where $k_1$ and $k_2$ are the indices of the two nodes in the subset $\mc{H}$.
Furthermore, we use the barrier function given by (15) in \cite{towfic2012icc}
in our simulation with $t=10$, $\rho=0.1$ and $\tau=0.1$. 
We set the combination coefficients $\{a_{lk}\}$ to the Metropolis rule (See Table III in
\cite{Cattivelli10})
 for both ATC and CTA
strategies. The weights $\{c_{lk}\}$ are set to $c_{lk}=1$ for $l=k$ and zero otherwise, i.e.,
there is no exchange of gradient information among neighbors. 
According to Theorem \ref{Thm:Bias_Small_StepSize}, such a choice will always guarantee
condition \eqref{Equ:ConvergenceAnalysis:Thm:ConsensusCondition} so that the bias can 
be made arbitrarily small for small step-sizes. In our simulation, we do not assume the statistics
of $\{\bm{u}_{k,i}\}$ and $\{\bm{s}_{k,i}\}$ are known to the nodes. The only information available is
their realizations and the algorithms have to learn the best decision vector $w$ from them. Therefore, we use the following stochastic gradient vector%
\footnote{For nodes in $\mc{H}$ an $\mc{K}$, the cost functions are
known precisely, so their true gradients are used.}
at each node $k$:
{\begingroup
\renewcommand{\arraystretch}{-5}
	\begin{align}
		\label{Equ:Simu:StochasticGradient}
		\widehat{\nabla_w J_k}(w)
		&=	\begin{cases}
											-\bm{u}_{k,i} + \sum_{m=1}^M \nabla_w \phi(-e_m^Tw)
															&	k	\in 	\mc{U}		\\
											2\bm{s}_{k,i} + \sum_{m=1}^M \nabla_w \phi(-e_m^Tw)
															&	k	\in 	\mc{S}		\\
											\nabla_w \phi(b_k-h_k^T w) + \sum_{m=1}^M \nabla_w \phi(-e_m^Tw)
															&	k	\in 	\mc{H}		\\
											\nabla_w \phi(\mathds{1}^Tw-b_0) + \sum_{m=1}^M \nabla_w \phi(-e_m^Tw)
															&	k	\in 	\mc{K}
										\end{cases}
	\end{align}
\endgroup
}%
To compare the performance with other algorithms,
we also simulate the consensus-based approach from \cite{nedic2009distributed} with the same stochastic gradient%
\footnote{The original algorithm in \cite{nedic2009distributed} does not use
stochastic gradients but the true gradients $\{\nabla_w J_k(w)\}$.} 
as \eqref{Equ:Simu:StochasticGradient}. The algorithm is listed below:
	\begin{align}
		\bm{w}_{k,i}		=	\sum_{l \in \mc{N}_k} a_{lk} \bm{w}_{l,i-1} - \mu \widehat{\nabla_w J_k}(\bm{w}_{k,i-1})
	\end{align}
Furthermore, we also simulate the conventional centralized approach to such
optimization problem, which collects data from all nodes and implements
stochstic gradient descent at the central node:
	\begin{align}
		\bm{w}_{i}		=	\bm{w}_{i-1} - \mu \frac{1}{N} \sum_{k=1}^N \widehat{\nabla_w J_k}(\bm{w}_{i-1})
	\end{align}
where the factor of $1/N$ is used to make the convergence rate the same as the distributed algorithms.

The simulatin results are shown in Fig. \ref{fig:Fig_Simu_Topology}--\ref{fig:Fig_Simu_FPE}. Fig. \ref{fig:Fig_Simu_Topology}
shows the network topology, and Fig. \ref{fig:Fig_Simu_LearningCurve} shows
 the learning curves of different algorithms.
We see that ATC outperforms CTA and CTA outperforms consensus. To further compare the steady-state performance,
we plot the steady-state MSE for different values of step-sizes in Fig. \ref{fig:Fig_Simu_MSEvsMu}. 
We also plot
the theoretical curves from \eqref{Equ:ConvergenceAnalysis:MSE_k}--\eqref{Equ:PerformanceAnalysis:MSE_Network} for ATC
and CTA algorithms. We observe that
all algorithms approach the performance of the centralized solution when the step-sizes are small. However, diffusion algorithms always outperform
the consensus-based strategy; the gap between ATC and consensus algorithm is about $8$ dB when $\mu=0.1$.
We also see that the theoretical curves match the simulated ones well.
Finally, we recall that Theorem \ref{Thm:Bias_Small_StepSize} shows that the error between the fixed point $w_{\infty}$ and
$\mathds{1}\otimes w^o$ can be made arbitrarily small for small step-sizes, and the error $\|w_{\infty}-\mathds{1}\otimes w^o\|^2$
is on the order of $O(\mu^2)$. To illustrate the result, we simulate the algorithms
using true gradients $\{\nabla_w J_k(w)\}$ so that they converge to their fixed point $w_{\infty}$,
and we get different values of $w_{\infty}$ for different step-sizes. The theoretical values for ATC and CTA can
be computed from \eqref{Equ:PerformanceAnalysis:Thm_Convergence:w_tilde_infty_Expr}. The results are shown in Fig. \ref{fig:Fig_Simu_FPE}.
We see that the theory matches simulation, and the power of the fixed point error per node%
\footnote{The power of the fixed point error per node is defined as 
$\frac{1}{N}\|w_{\infty}-\mathds{1}\otimes w^o\|^2 = \frac{1}{N} \sum_{k=1}^N \|w_{k,\infty}-w^o\|^2$.}
decays at $20$dB per decade, which
is $O(\mu^2)$ and is consistent with \eqref{Equ:ConvergenceAnalysis:Thm:SmallBias_mu}. Note that
diffusion algorithms outperform the consensus. Also note from
\eqref{Equ:PerformanceAnalysis:Thm_Convergence:w_tilde_infty_Expr} and 
\eqref{Equ:PerformanceAnalysis:Ew_inf} that the bias and the fixed point error have
the same expression. Therefore, diffusion algorithms have smaller bias than 
consensus (the gap in Fig. \ref{fig:Fig_Simu_FPE} is  as large as $5$dB
between ATC and consensus).

\section{Conclusion}
This paper generalized diffusion adaptation strategies to perform multi-objective
optimization in a distributed manner over a network of nodes.
We use constant step-sizes to endow the network with
continuous learning and adaptation abilities via local interactions.
We analyzed the mean-square-error performance of the diffusion strategy,
and showed that the solution at each node gets arbitrarily close to the same Pareto-optimal
solution for small step-sizes.

\appendices

{
\section{Properties of the Operators}
\label{Appendix:Properties_Operators}
Properties 1-3 are straightforward from the definitions of $T_A(\cdot)$
and $P[\cdot]$. We therefore omit the proof for brevity, and start with property 4.
\\

\noindent(\underline{Property 4: Convexity})\\
We can express each $N\times 1$ block vector $x^{(k)}$ in the form
        $x^{(k)} =   \col\{x_1^{(k)},\ldots,x_N^{(k)}\}$ for $k=1,\ldots,N$.
Then, the convex combination of $x^{(1)},\ldots, x^{(N)}$ can be expressed as
    \begin{align}
        \sum_{k=1}^K a_l \; x^{(k)}  =   \col\Big\{
                                        \sum_{k=1}^K a_l x_1^{(k)},
                                        \ldots,
                                        \sum_{k=1}^K a_l x_N^{(k)}
                                    \Big\}
    \end{align}
According to the definition of the operator $P[\cdot]$, and in view of the convexity of
$\|\cdot\|^2$, we have
{\allowdisplaybreaks
    \begin{align}
        P\Big[\sum_{k=1}^K a_l \; x^{(k)}\Big]
                                    &=  \col\Big\{
                                            \big\|\sum_{k=1}^K a_l x_1^{(k)}\big\|^2,
                                            \ldots,
                                            \big\|\sum_{k=1}^K a_l x_N^{(k)}\big\|^2
                                        \Big\}
                                        \nonumber\\
                                    &\preceq
                                        \col\Big\{
                                            \sum_{k=1}^K a_l\| x_1^{(k)}\|^2,
                                            \ldots,
                                            \sum_{k=1}^K a_l\| x_N^{(k)}\|^2
                                        \Big\}
                                        \nonumber\\
                                    &=   \sum_{k=1}^K a_l \; P[x^{(k)}]
    \end{align}
}%

\noindent(\underline{Property 5: Additivity})\\
By the definition of $P[\cdot]$ and the assumption that $\E\bm{x}_k^T\bm{y}_k=0$
for each $k=1,\ldots,N$, we obtain
{\allowdisplaybreaks
    \begin{align}
        \E P[\bm{x}+\bm{y}]
        						&=      \col\{
                                                \E\|\bm{x}_1+\bm{y}_1\|^2,\;
                                                \ldots,\;
                                                \E\|\bm{x}_N+\bm{y}_N\|^2
                                        \}
                                        \nonumber\\
                                &=      \col\{
                                                \E\|\bm{x}_1\|^2+\E\|\bm{y}_1\|^2,\;
                                                \ldots,\;
                                                \E\|\bm{x}_N\|^2+\E\|\bm{y}_N\|^2
                                        \}
                                        \nonumber\\
                                &=      \E P[\bm{x}] + \E P[\bm{y}]
    \end{align}
}%

\noindent(\underline{Property 6: Variance Relations})\\
We first prove \eqref{Equ:Properties:PX_TA}. From the definition of $T_A(\cdot)$
in \eqref{Equ:def:T_A} and the definition of $P[\cdot]$ in \eqref{Equ:Def:PowerOperator},
we express 
    \begin{align}
        P[T_A(x)]   &=      \col\Big\{
                                \big\| \sum_{l=1}^N a_{l1} x_l \big\|^2,\;
                                \ldots,\;
                                \big\| \sum_{l=1}^N a_{lN} x_l \big\|^2
                            \Big\}
    \end{align}
Since $\|\cdot\|^2$ is a convex function and each sum inside the squared norm operator
is a convex combination of $x_1,\ldots,x_N$ ($A^T$ is right stochastic),
by Jensen's inequality\cite[p.77]{boyd2004convex},
we have
    \begin{align}
        P[T_A(x)]   &\preceq\col\Big\{
                                \sum_{l=1}^N a_{l1}\| x_l \|^2,\;
                                \ldots,\;
                                \sum_{l=1}^N a_{lN}\| x_l \|^2
                            \Big\}
                            \nonumber\\
                    &=      A^T \col\{
                                \| x_1 \|^2,\;
                                \ldots,\;
                                \| x_N \|^2
                            \}
                            \nonumber\\
                    &=       A^T P[x]
    \end{align}
Next, we proceed to prove \eqref{Equ:Properties:PX_TG}. We need to call upon the 
following useful lemmas from \cite[p.24]{poliak1987introduction},
and Lemmas 1--2 in\cite{chen2011TSPdiffopt}, respectively.
    \begin{lemma}[Mean-Value Theorem]
        \label{Lemma:MeanValueTheorem}
        For any twice-differentiable function $f(\cdot)$, it holds that
        	\begin{align}
        		\label{Equ:Appendix:MeanValueTheorem}
        		\nabla f(y)		=	\nabla f(x) + \left[\int_0^1 \nabla^2 f\big(x+t(y-x)\big)dt\right] (y-x)
        	\end{align}
        where $\nabla^2 f(\cdot)$ denotes the Hessian of $f(\cdot)$, and is a symmetric matrix.
        \hfill\qed
    \end{lemma}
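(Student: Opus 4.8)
The plan is to reduce the multivariate statement to the one-dimensional fundamental theorem of calculus by restricting attention to the line segment joining $x$ and $y$. Concretely, I would introduce the auxiliary vector-valued function
\begin{align}
    g(t) \defeq \nabla f\big(x + t(y-x)\big), \qquad t \in [0,1].
    \nonumber
\end{align}
This is a smooth map from the scalar $t$ into $\mb{R}^n$, and its endpoint values are exactly $g(0) = \nabla f(x)$ and $g(1) = \nabla f(y)$, so the left-hand side of \eqref{Equ:Appendix:MeanValueTheorem} is $g(1) - g(0)$.

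The next step is to apply the fundamental theorem of calculus componentwise. Since each component $g_j(t)$ of $g(t)$ is a continuously differentiable scalar function of $t$, we have $g_j(1) - g_j(0) = \int_0^1 g_j'(t)\,dt$, and stacking the components gives $g(1) - g(0) = \int_0^1 g'(t)\,dt$, where the integral of the vector is understood entrywise. It then remains to evaluate $g'(t)$. By the chain rule applied to the composition $t \mapsto x + t(y-x) \mapsto \nabla f(\cdot)$, the Jacobian of $\nabla f$ is precisely the Hessian $\nabla^2 f$, so
\begin{align}
    g'(t) = \nabla^2 f\big(x + t(y-x)\big)\,(y-x).
    \nonumber
\end{align}

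Substituting this into the integral and observing that the direction vector $(y-x)$ is constant with respect to $t$, I would pull it outside the integral sign to obtain
\begin{align}
    \nabla f(y) - \nabla f(x)
        = \int_0^1 \nabla^2 f\big(x + t(y-x)\big)(y-x)\,dt
        = \Big[\int_0^1 \nabla^2 f\big(x + t(y-x)\big)\,dt\Big](y-x),
    \nonumber
\end{align}
which is exactly \eqref{Equ:Appendix:MeanValueTheorem}. Finally, the claimed symmetry of $\nabla^2 f(\cdot)$ is the standard equality of mixed second-order partial derivatives (Schwarz's/Clairaut's theorem), valid because $f$ is twice continuously differentiable, and symmetry is preserved under the entrywise integration over $t$.

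The argument is essentially routine, so I do not anticipate a genuine obstacle; the only points requiring care are purely bookkeeping. First, I must justify that integrating the vector- (and matrix-) valued integrand entrywise is legitimate and that the chain-rule computation of $g'(t)$ matches the matrix layout used elsewhere in the paper (so that $\nabla^2 f$ left-multiplies the column vector $y-x$). Second, one should note that the existence and finiteness of the integral follow from continuity of $\nabla^2 f$ along the compact segment. Neither of these is delicate, so the main content of the proof is simply recognizing that the directional restriction collapses the multivariate identity onto the scalar fundamental theorem of calculus.
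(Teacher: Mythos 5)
Your proof is correct, and it is the canonical argument: the paper itself offers no proof of this lemma --- it is imported verbatim by citation from \cite[p.24]{poliak1987introduction} --- and the fundamental-theorem-of-calculus-along-the-segment derivation you give (define $g(t)=\nabla f(x+t(y-x))$, integrate $g'(t)$ over $[0,1]$, apply the chain rule, pull the constant vector $y-x$ out of the integral) is precisely the standard proof found in that reference. The only point worth flagging is the one you already noticed: the FTC step and the symmetry claim via Schwarz's theorem really use continuity of $\nabla^2 f$ along the segment, i.e.\ $f \in C^2$, which is mildly stronger than the lemma's literal ``twice-differentiable'' hypothesis but is the intended reading throughout the paper (all cost functions there have bounded, hence well-behaved, Hessians by Assumption \ref{Assumption:Hessian}).
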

    \begin{lemma}[Bounds on the Integral of Hessian]
		\label{Lemma:H_lki_bounds}
		Under Assumption \ref{Assumption:Hessian}, the following bounds
        hold for any vectors ${x}$ and ${y}$:
			\begin{align}
                \label{Equ:ConvergenceAnalysis:H_lki_Bounds}
				&\lambda_{l,\min} I_M	
                \le
                \int_{0}^{1} \nabla_w^2 J_l({x} + t {y}) dt
            	\le	
                \lambda_{l,\max} I_M
                \\
                \label{Equ:PerformanceAnalysis:I_minus_mu_H_bound}
                &
                \left\|
                        I - \mu_k \sum_{l=1}^N c_{lk}
                        \left[\int_{0}^{1} \nabla_w^2 J_l({x} + t {y}) dt\right]
                \right\|
                \le
                \gamma_k
			\end{align}
        where $\|\cdot\|$ denotes the $2-$induced norm, and $\gamma_k$,
        $\sigma_{k,\min}$ and $\sigma_{k,\max}$
        were defined in \eqref{Equ:PerformanceAnalysis:gamma_k_def}--%
        \eqref{Equ:PerformanceAnalysis:Thm_MSS:sigma_min_def}.
        \hfill\qed
	\end{lemma}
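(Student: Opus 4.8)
The plan is to prove the two bounds in order, first establishing the Loewner-order estimate \eqref{Equ:ConvergenceAnalysis:H_lki_Bounds} and then using it to control the induced norm in \eqref{Equ:PerformanceAnalysis:I_minus_mu_H_bound}. For the first bound, I would invoke Assumption \ref{Assumption:Hessian} pointwise: for every $t\in[0,1]$, the vector $x+ty$ is simply a point in $\mathbb{R}^M$, so \eqref{Equ:Assumption:StrongConvexity} gives
    \begin{align}
        \lambda_{l,\min} I_M \le \nabla_w^2 J_l(x+ty) \le \lambda_{l,\max} I_M. \nonumber
    \end{align}
Since integration over $t\in[0,1]$ is a positive linear operation that preserves the Loewner order (the integral of a positive semidefinite matrix-valued function is positive semidefinite), integrating each side over $[0,1]$ yields \eqref{Equ:ConvergenceAnalysis:H_lki_Bounds} immediately.

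For the second bound, I would introduce the shorthand $H_{lk}\defeq\int_0^1 \nabla_w^2 J_l(x+ty)\,dt$ together with the weighted combination $B_k\defeq\sum_{l=1}^N c_{lk} H_{lk}$. Because each coefficient $c_{lk}\ge 0$ and each $H_{lk}$ is symmetric and obeys $\lambda_{l,\min} I_M \le H_{lk}\le \lambda_{l,\max} I_M$ by the first bound, summing the weighted matrix inequalities gives
    \begin{align}
        \sigma_{k,\min} I_M \le B_k \le \sigma_{k,\max} I_M, \nonumber
    \end{align}
with $\sigma_{k,\min}$ and $\sigma_{k,\max}$ as defined in \eqref{Equ:PerformanceAnalysis:Thm_MSS:sigma_min_def}. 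Hence every eigenvalue of the symmetric matrix $B_k$ lies in the interval $[\sigma_{k,\min},\sigma_{k,\max}]$.

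Finally, I would note that $I-\mu_k B_k$ is symmetric, so its $2$-induced norm coincides with its spectral radius, i.e. its largest absolute eigenvalue. The eigenvalues of $I-\mu_k B_k$ are exactly $\{1-\mu_k\lambda:\lambda\in\mathrm{spec}(B_k)\}$, a subset of $\{1-\mu_k\lambda:\lambda\in[\sigma_{k,\min},\sigma_{k,\max}]\}$, so the norm is bounded by $\max_{\lambda\in[\sigma_{k,\min},\sigma_{k,\max}]}|1-\mu_k\lambda|$. Since $\lambda\mapsto|1-\mu_k\lambda|$ is convex, its maximum over a closed interval is attained at an endpoint, which gives precisely $\max\{|1-\mu_k\sigma_{k,\min}|,\,|1-\mu_k\sigma_{k,\max}|\}=\gamma_k$ from \eqref{Equ:PerformanceAnalysis:gamma_k_def}, establishing \eqref{Equ:PerformanceAnalysis:I_minus_mu_H_bound}.

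This argument is largely routine; the only steps demanding care are the identification of the induced $2$-norm with the spectral radius, which is valid only because the matrix is symmetric, and the convexity observation that reduces the maximum over the interval to its two endpoints. I expect the symmetry-to-norm reduction to be the subtlest point, since it is what forces the appearance of the clean $\gamma_k$ expression rather than a looser bound.
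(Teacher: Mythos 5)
Your proof is correct and follows the standard route: the paper itself does not prove this lemma here but imports it from Lemmas 1--2 of \cite{chen2011TSPdiffopt}, and the argument used there is exactly yours --- integrate the pointwise Loewner bounds of Assumption \ref{Assumption:Hessian}, sum with the nonnegative weights $c_{lk}$ to trap the spectrum of the combined matrix in $[\sigma_{k,\min},\sigma_{k,\max}]$, and use symmetry to identify the $2$-induced norm with the largest absolute eigenvalue, whose maximum over the interval is attained at an endpoint. No gaps.
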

\noindent
By the definition of the operator $T_G(\cdot)$ in \eqref{Equ:Def:T_G} and
the expression \eqref{Equ:Appendix:MeanValueTheorem},
we express 
$T_G(x)-T_G(y)$ as
{
\begingroup
\renewcommand*{\arraystretch}{-5}
    \begin{align}
        T_G(x)-T_G(y)=   
                        &  \begin{bmatrix}
                        		\displaystyle
                                \left[
                                    I_M \!-\! \mu_1 \sum_{l=1}^N c_{l1}
                                    \!\! \int_0^1 \!\!
                                    \nabla_w^2 J_l(y_1\!+\!t(x_1\!-\!y_1))dt
                                \right] \!
                                (x_1 \!-\! y_1)
                                \\
                                \vdots\\
                                \displaystyle
                                \left[
                                    I_M \!-\! \mu_N \sum_{l=1}^N c_{lN} 
                                    \!\!\int_0^1 \!\!
                                    \nabla_w^2 J_l(y_N\!+\!t(x_N\!-\!y_N))dt
                                \right] \!
                                (x_N \!-\! y_N)
                            \end{bmatrix}
    \end{align}
\endgroup
}%
Therefore, using \eqref{Equ:PerformanceAnalysis:I_minus_mu_H_bound} and the definition of $P[\cdot]$
in \eqref{Equ:Def:PowerOperator}, we obtain
    \begin{align}
        P[T_G(x)-T_G(y)]
                            &\preceq	\col\left\{
                                        \gamma_1^2 \cdot\|x_1 - y_1\|^2,
                                        \ldots,
                                        \gamma_N^2 \cdot\|x_N - y_N\|^2
                                    \right\}
                                    \nonumber\\
                            &=       \Gamma^2 P[x-y]
    \end{align}

\noindent(\underline{Property 7: Block Maximum Norm})\\
According to the definition of $P[\cdot]$ and the definition of block maximum norm
\cite{chen2011TSPdiffopt}, we have
    \begin{align}
        \|P[x]\|_{\infty}   &=   \big\|\col\{\|x_1\|^2,\;\ldots,\;\|x_N\|^2\}\big\|_{\infty}
        							\nonumber\\
                            &=   \max_{1 \le k \le N} \|x_k\|^2
                            		\nonumber\\
                            &=   \big(\max_{1 \le k \le N} \|x_k\|\big)^2
                            		\nonumber\\
                            &=   \|x\|_{b,\infty}^2
    \end{align}

\noindent(\underline{Property 8: Preservation of Inequality})\\
To prove $Fx \preceq Fy$, it suffices to prove $0 \preceq F(y-x)$. Since
$x \preceq y$, we have $0 \preceq y-x$, i.e., all entries of the vector $y-x$ are
nonnegative. Furthermore, since all entries of the matrix $F$ are nonnegative,
the entries of the vector $F(y-x)$ are all nonnegative, which means $0 \preceq F(y-x)$.

\section{Bias at Small Step-Sizes}
\label{Appendix:BiasSmallStepSizes}
It suffices to show that
	\begin{align}
		\label{Equ:Appendix:SmallBias:lim_w_inf_to_mu_max}
		\lim_{\mu_{\max} \rightarrow 0}	
            \frac{\|
                    \mathds{1}\otimes w^o - w_{\infty}
                \|}{\mu_{\max}}	=	\xi
	\end{align}
where $\xi$ is a constant independent of $\mu_{\max}$.
It is known that any matrix is similar to a Jordan canonical form \cite{laub2005matrix}.
Hence, there exists an invertible matrix $Y$ such that $A_2^T A_1^T	=	Y J Y^{-1}$,
where $J$ is the Jordan canonical form of the matrix $A_2^T A_1^T$,
and the columns of the matrix $Y$ are the corresponding \emph{right principal vectors}
of various degrees
\cite[pp.82--88]{laub2005matrix};
the right principal vector of degree one is the right eigenvector.
Obviously, the matrices $J$ and $Y$ are independent of $\mu_{\max}$.
Using the Kronecker product property\cite[p.140]{laub2005matrix}:
$(A \otimes B)(C \otimes D) = AC \otimes BD$, we obtain
	\begin{align}
		\mathcal{A}_2^T \mathcal{A}_1^T	
					&=	A_2^TA_1^T \otimes I_M			
						\nonumber\\
		\label{Equ:ConvergenceAnalysis:cal_P2P1_JCF}
					&=	(Y\otimes I_M) (J \otimes I_M) (Y^{-1} \otimes I_M)
	\end{align}

{\color{black}
\noindent
Denote $\mu_k = \beta_k \mu_{\max}$, where $\beta_k$ is some positive scalar
such that $0 < \beta_k \le 1$.
}%
Substituting \eqref{Equ:ConvergenceAnalysis:cal_P2P1_JCF} into
\eqref{Equ:PerformanceAnalysis:Thm_Convergence:w_tilde_infty_Expr}, we obtain
	\begin{align}
            \mathds{1}\otimes w^o - w_{\infty}
					&=	\left[
							I_{MN}
							-
							\mathcal{A}_2^T
							\mathcal{A}_1^T
							+
							\mathcal{A}_2^T \mathcal{M}{\mathcal{R}}_{\infty}
							\mathcal{A}_1^T
						\right]^{-1}
						\mathcal{A}_2^T \mathcal{M}\mathcal{C}^T g^o
						\nonumber\\
		\label{Equ:ConvergenceAnalysis:w_infty_SmallStepSize_intermediate}
					&=	(Y\otimes I_M)
						\left[
							I_{MN} - J \otimes I_M + \mu_{\max} E
						\right]^{-1}
						 (Y^{-1} \otimes I_M)
						\mathcal{A}_2^T \mathcal{M}\mathcal{C}^Tg^o
	\end{align}
where 
		\begin{align}
			\label{Equ:ConvergenceAnalysis:E}
			E	&=	(Y^{-1} \otimes I_M)
					\mathcal{A}_2^T \mathcal{M}_0 \mathcal{R}_\infty \mathcal{A}_1^T
					(Y\otimes I_M)
					\\
		\label{Equ:Appendix:M_0_Omega_0_def}
		\mathcal{M}_0     &\defeq  \mathcal{M}/\mu_{\max}	
                            =	  \underbrace{
												\mathrm{diag}\{\beta_1,\ldots,\beta_N\}
											}_{\defeq\Omega_0}
											\otimes
											I_M
	\end{align}
By \eqref{Equ:PerformanceAnalysis:ConditionCombinationWeights},
the matrix $A_2^TA_1^T$ is right-stochastic, and
since $A_2^TA_1^T$ is regular, it will have an eigenvalue of one that has multiplicity one and
is strictly greater than all other eigenvalues\cite{horn1990matrix}.
Furthermore, the corresponding left and right eigenvectors are $\theta^T$ and $\mathds{1}$,
with $\theta^T \succ 0$ (all entries of the row vector $\theta^T$ are real positive numbers).
For this reason, we can partition $J$, $Y^{-1}$ and $Y$ in the following block forms:
	\begin{align}
		\label{Equ:ConvergenceAnalysis:J_Y_Yinv}
		J		=	\diag\{1	,\;J_0\},
					\quad
		Y^{-1}	=	\col\left\{
						\frac{\theta^T}{\theta^T \mathds{1}},\;
						Y_R
						\right\},
					\quad
		Y		=	\left[
						\mathds{1}	\;	Y_L
					\right]
	\end{align}
where $J_0$ is an $(N-1)\times(N-1)$ matrix that contains the Jordan blocks of
eigenvalues strictly within unit circle, i.e., $\rho(J_0)<1$.
{\color{black}
The first row of the matrix $Y^{-1}$ in \eqref{Equ:ConvergenceAnalysis:J_Y_Yinv}
is normalized by $\theta^T \mathds{1}$ so that $Y^{-1}Y=I$. (Note that $Y^{-1}Y=I$ requires
the product of the first row of $Y^{-1}$ and the first column of $Y$ to be one: $\frac{\theta^T}{\theta^T \mathds{1}} \mathds{1}=1$.)
}%
Substituting these partitionings into
\eqref{Equ:ConvergenceAnalysis:E}, we can express $E$ as
	\begin{align}
		\label{Equ:ConvergenceAnalysis:E2}
			E		&=	\begin{bmatrix}
						E_{11}	&	E_{12}		\\
						E_{21}	&	E_{22}
					\end{bmatrix}
	\end{align}
where
	\begin{align}
		\label{Equ:ConvergenceAnalysis:e_0}
		E_{11}	&\defeq	\big(\frac{\theta^T}{\theta^T \mathds{1}} \otimes I_M\big)
						\mathcal{A}_2^T \mathcal{M}_0 \mathcal{R}_\infty \mathcal{A}_1^T
						(\mathds{1} \otimes I_M)
						\\
		E_{12}	&\defeq	\big(\frac{\theta^T}{\theta^T \mathds{1}} \otimes I_M\big)
						\mathcal{A}_2^T \mathcal{M}_0 \mathcal{R}_\infty \mathcal{A}_1^T
						(Y_L \otimes I_M)
						\\
		E_{21}	&\defeq	(Y_R \otimes I_M)
						\mathcal{A}_2^T \mathcal{M}_0 \mathcal{R}_\infty \mathcal{A}_1^T
						(\mathds{1} \otimes I_M)
						\\
		E_{22}	&\defeq	(Y_R \otimes I_M)
						\mathcal{A}_2^T \mathcal{M}_0 \mathcal{R}_\infty \mathcal{A}_1^T
						(Y_L \otimes I_M)
	\end{align}
Observe that the matrices $E_{11}$, $E_{12}$, $E_{21}$ and $E_{22}$ are independent of $\mu_{\max}$.
Substituting \eqref{Equ:ConvergenceAnalysis:J_Y_Yinv} and \eqref{Equ:ConvergenceAnalysis:E2} into \eqref{Equ:ConvergenceAnalysis:w_infty_SmallStepSize_intermediate}, we obtain
	\begin{align}
		\label{Equ:Appendix:Ew_inf_intermediate0}
            \mathds{1}\otimes w^o \!-\! w_{\infty}
                        &=			(Y\otimes I_M)
									\begin{bmatrix}
										\mu_{\max} E_{11}		&		\mu_{\max} E_{12}	
											\\
										\mu_{\max} E_{21}		&		I \!-\! J_0 \otimes I_M
																\!+\! \mu_{\max} E_{22}
									\end{bmatrix}^{-1}
									\begin{bmatrix}
										\frac{1}{\theta^T \mathds{1}}
										(\theta^T \otimes I_M)
										\mathcal{A}_2^T \mathcal{M}\mathcal{C}^Tg^o	\\
										(Y_R \otimes I_M)
										\mathcal{A}_2^T \mathcal{M}\mathcal{C}^Tg^o
									\end{bmatrix}
	\end{align}
Let us denote
	\begin{align}
		\label{Equ:ConvergenceAnalysis:G}
		\begin{bmatrix}
			G_{11}			&		G_{12}			\\
			G_{21}			&		G_{22}
		\end{bmatrix}
		\defeq
		\begin{bmatrix}
			\mu_{\max} E_{11}		&		\!\mu_{\max} E_{12}		\\
			\mu_{\max} E_{21}		&		\!I \!-\! J_0 \otimes I_M
									\!+\! \mu_{\max} E_{22}
		\end{bmatrix}^{-1}
	\end{align}
Furthermore, recalling that $w^o$ is the minimizer of the global cost function
\eqref{Equ:ProblemFormulation:J_glob_original}, we have
	\begin{align}
		\sum_{l=1}^N \nabla_w J_l(w^o)	=	0
		\quad\Leftrightarrow\quad
		(\mathds{1}^T \otimes I_M) \;g^o = 0
	\end{align}
which, together with condition \eqref{Equ:ConvergenceAnalysis:Thm:ConsensusCondition}, implies that
	\begin{align}
		(\theta^T \otimes I_M)\mathcal{A}_2^T \mathcal{M}\mathcal{C}^Tg^o
				=&\;	(\theta^T A_2^T \Omega C^T \otimes I_M)g^o
					\nonumber\\
				=&\;	c_0 (\mathds{1}^T \otimes I_M) g^o
					\nonumber\\
		\label{Equ:Appendix:OptimalCondition_Stationarity}
				=&\;	0
	\end{align}
where we also used the facts that
$\mc{A}_2^T=A_2^T \otimes I_M$, $\mc{C}^T=C^T\otimes I_M$, $\mc{M}=\Omega\otimes I_M$
and the Kronecker product property: $(A \otimes B) (C \otimes D)$.
Substituting \eqref{Equ:ConvergenceAnalysis:G} and
\eqref{Equ:Appendix:OptimalCondition_Stationarity} into
\eqref{Equ:Appendix:Ew_inf_intermediate0} and using \eqref{Equ:Appendix:M_0_Omega_0_def}
lead to
	\begin{align}
		\label{Equ:ConvergenceAnalysis:w_infty_intermediate1}
            \mathds{1}\otimes w^o - w_{\infty}
                        	&=	\mu_{\max}
								\cdot
								(Y\otimes I_M)\!
								\begin{bmatrix}
									G_{12}
									\\
									G_{22}
								\end{bmatrix}
								(Y_R A_2^T \Omega_0 C^T \otimes I_M) g^o
	\end{align}
To proceed with analysis, we need to evaluate $G_{12}$ and $G_{22}$. We call upon the
relation from \cite[pp.48]{laub2005matrix}:
	\begin{align}
		\label{Equ:ConvergenceAnalysis:BlockMatrixInversion}
		\begin{bmatrix}
			P	&	Q	\\
			U	&	V
		\end{bmatrix}^{-1}
		=
		\begin{bmatrix}		
			P^{-1} + P^{-1} Q S U P^{-1}		&	-P^{-1} Q S	\\
			-S U P^{-1}					&	S		
		\end{bmatrix}
	\end{align}
where $S = (V-U P^{-1} Q)^{-1}$. To apply the above relation to \eqref{Equ:ConvergenceAnalysis:G},
we first need to verify that $E_{11}$ is invertible. By definition
\eqref{Equ:ConvergenceAnalysis:e_0},
	\begin{align}			
		E_{11}		=&\;	\big(\frac{\theta^T}{\theta^T \mathds{1}} A_2^T \Omega_0 \otimes I_M\big)
						\mathcal{R}_\infty
						(A_1^T \mathds{1} \otimes I_M)
						\nonumber\\
					=&\;	(z^T \otimes I_M) \mathcal{R}_\infty (\mathds{1} \otimes I_M)
						\nonumber\\
		\label{Equ:Appendix:SmallBias:E11_intermediate}
					=&\;	\sum_{k=1}^N z_{k}
						\sum_{l=1}^N c_{lk}
						H_{lk,\infty}
	\end{align}
where $z_k$ denotes the $k$th entry of the vector $z \defeq \Omega_0 A_2 \theta/\theta^T\mathds{1}$
(note that all entries of $z$ are non-negative, i.e., $z_k \ge 0$).
{\color{black}
Recall from \eqref{Equ:ConvergenceAnalysis:H_kinf} that $H_{lk,\infty}$ is a
symmetric positive semi-definite matrix. Moreover, since $z_k$ and $c_{lk}$ are
nonnegative, we can conclude from \eqref{Equ:Appendix:SmallBias:E11_intermediate} that
$E_{11}$ is a symmetric positive semi-definite matrix. Next, we show that
$E_{11}$ is actually strictly positive definite.
Applying \eqref{Equ:ConvergenceAnalysis:H_lki_Bounds} to the expression
of $H_{lk,\infty}$ in \eqref{Equ:ConvergenceAnalysis:H_kinf}, we obtain
        $H_{lk,\infty}   \ge     \lambda_{l,\min} I_M$.
Substituting 
into
\eqref{Equ:Appendix:SmallBias:E11_intermediate} gives:
}%
	\begin{align}
		E_{11}		\ge&\;	
						\left[
							\sum_{k=1}^N z_{k}
							\sum_{l=1}^N c_{lk}
							\lambda_{l,\min}
						\right]
						\cdot
						I_M
						\nonumber\\
					\ge&\;	
						\Big(\sum_{k=1}^N z_k\Big)
						\min_{1 \le k \le N}
						\Big\{
							\sum_{l=1}^N c_{lk}
							\lambda_{l,\min}
						\Big\}
						\cdot
						I_M
						\nonumber\\
					=&\;
						\frac{\mathds{1}^T\Omega_0 A_2 \theta}{\theta^T\mathds{1}}
						\cdot
						\min_{1 \le k \le N}
						\Big\{
							\sum_{l=1}^N c_{lk}
							\lambda_{l,\min}
						\Big\}
						\cdot
						I_M
	\end{align}
Noting that the matrices $\Omega_0$ and $A_0$ have nonnegative entries with
some entries being positive, and that all entries of $\theta$ are positive,
we have $({\mathds{1}^T\Omega_0 A_2 \theta})/({\theta^T\mathds{1}})>0$.
Furthermore, by Assumption \ref{Assumption:Hessian}, we know
$\sum_{l=1}^N c_{lk} \lambda_{l,\min}>0$ for each $k=1,\ldots,N$.
Therefore, we conclude that $E_{11}>0$ and is invertible.
Applying \eqref{Equ:ConvergenceAnalysis:BlockMatrixInversion} to
\eqref{Equ:ConvergenceAnalysis:G}, we get
		\begin{align}
			G_{12}	&=	-E_{11}^{-1} E_{12}
						G_{22}
						\\
			\label{Equ:Appendix:SmallBias:G_22_Expr}
			G_{22}	&=	\left[
							I \!-\! 
							J_0 \otimes I_M \!+\! 
							\mu_{\max} (E_{22} \!-\! E_{21} E_{11}^{-1} E_{12}^T)
						\right]^{-1}
		\end{align}
Substituting \eqref{Equ:Appendix:SmallBias:G_22_Expr}
into \eqref{Equ:ConvergenceAnalysis:w_infty_intermediate1} leads to
	\begin{align}
		\label{Equ:Appendix:SmallBias:w_infty_final}
            \mathds{1}\otimes w^o - w_{\infty}
                            =&\;	\mu_{\max}
								\cdot
								(Y\otimes I_M)\!
								\begin{bmatrix}
									-E_{11}^{-1} E_{12}
									\\
									I
								\end{bmatrix}
								G_{22}
								(Y_R A_2^T \Omega_0 C^T \otimes I_M) g^o
	\end{align}
Substituting expression \eqref{Equ:Appendix:SmallBias:w_infty_final} into
the left-hand side of \eqref{Equ:Appendix:SmallBias:lim_w_inf_to_mu_max}, we get
	\begin{align}
		\label{Equ:Appendix:SmallBias:lim_Ew_to_mu_max_intermediate1}
		\lim_{\mu_{\max} \rightarrow 0} \frac{\|\mathds{1}\otimes w^o - w_{\infty}\|}{\mu_{\max}}	
			&=			\!\!\!\lim_{\mu_{\max} \rightarrow 0}
						\bigg\|
							(Y\otimes I_M)\!
							\begin{bmatrix}
								-E_{11}^{-1} E_{12}
								\\
								I
							\end{bmatrix}
							G_{22}
							(Y_R A_2^T \Omega_0 C^T \otimes I_M) g^o
						\bigg\|
	\end{align}
Observe that the only term on the right-hand side of
\eqref{Equ:Appendix:SmallBias:lim_Ew_to_mu_max_intermediate1} that depends on $\mu_{\max}$
is $G_{22}$. From its expression \eqref{Equ:Appendix:SmallBias:G_22_Expr},
we observe that, as $\mu_{\max} \rightarrow 0$, the matrix $G_{22}$ tends to
$(I-J_0\otimes I_M)^{-1}$, which is independent of $\mu_{\max}$.
Therefore, the limit on the right-hand side of \eqref{Equ:Appendix:SmallBias:lim_Ew_to_mu_max_intermediate1}
is independent of $\mu_{\max}$.

}

\bibliographystyle{IEEEbib}
\bibliography{DistOpt}

\end{document}